\ifdefined\XeTeXversion\else\pdfoutput=1\fi
\documentclass[11pt]{article}

\usepackage[letterpaper,margin=1in]{geometry}
\usepackage{amsmath,amssymb,amsthm,mathtools}
\usepackage{thmtools}
\usepackage{microtype}
\usepackage{enumitem}
\usepackage{booktabs}
\usepackage{graphicx}
\usepackage[font=small,labelfont=bf]{caption}
\usepackage{array}
\usepackage[hidelinks]{hyperref}
\hypersetup{
  pdftitle={Associativity of Multiplication Is Hard for Resolution},
  pdfauthor={Vincent Liew},
  pdfsubject={An exponential lower bound on the size of general resolution refutations of multiplier associativity},
  pdfkeywords={proof complexity, resolution, SAT, multiplication, associativity, lower bound},
}

\setlist[enumerate]{leftmargin=2.2em,itemsep=0.25em,topsep=0.35em}
\setlist[itemize]{leftmargin=1.8em,itemsep=0.25em,topsep=0.35em}
\allowdisplaybreaks

\newtheorem{theorem}{Theorem}[section]
\newtheorem{lemma}[theorem]{Lemma}
\newtheorem{proposition}[theorem]{Proposition}

\newtheorem{observation}[theorem]{Observation}
\newtheorem*{informalmain}{Theorem~\ref{thm:main} (informal)}
\newtheorem*{informalfrege}{Theorem~\ref{thm:frege} (informal)}
\theoremstyle{definition}
\newtheorem{definition}[theorem]{Definition}
\newtheorem*{definition*}{Definition}
\theoremstyle{remark}
\newtheorem{remark}[theorem]{Remark}

\newcommand{\Assoc}{\mathsf{Assoc}}
\newcommand{\Mul}{\mathsf{Mul}}
\newcommand{\PM}{\mathsf{PM}}
\newcommand{\SPM}{\mathsf{SPM}}
\newcommand{\Grid}{\mathsf{Grid}}
\newcommand{\sizeRes}{\operatorname{size}_{\mathrm{Res}}}
\newcommand{\sizeF}[1]{\operatorname{size}_{\mathrm{Frege}_{#1}}}
\newcommand{\supp}{\operatorname{supp}}
\newcommand{\MAJ}{\operatorname{MAJ}}
\newcommand{\N}{\mathbb{N}}
\newcommand{\col}{\operatorname{col}}
\newcommand{\Col}{\operatorname{Col}}
\newcommand{\vv}[1]{\mathbf{#1}}

\title{Associativity of Multiplication Is Hard for Resolution}
\author{Vincent Liew\\Category Labs\\\texttt{vliew@vliew.com}}
\date{\today}

\begin{document}
\pagenumbering{roman}
\maketitle

\begin{abstract}
SAT solvers are empirically known to perform poorly when reasoning about multiplication. Yet for over a decade we have lacked a theoretical explanation for this phenomenon. CDCL SAT solvers implicitly search for resolution proofs, and no lower bound on proof size has ruled out the existence of short proofs that solvers simply fail to find.

We give the first lower bound of this kind by showing that general resolution proofs of the associativity of $n$-bit multiplication require size $2^{\Omega((n/\log n)^{1/4})}$. This lower bound holds for a broad class of multiplier
encodings based on partial product summation, including the standard array and Wallace-tree multipliers used to bit-blast multiplication
in SMT solvers. This result resolves an open problem of Beame and Liew.

The proof constructs a reduction from a perfect-matching principle on bounded-degree bipartite expander graphs to multiplier associativity. Itsykson, Slabodkin, and Sokolov proved that this principle is hard for resolution. The lower bound
for multiplier associativity follows.

The same reduction, when combined with H\r{a}stad's recent lower bound for the perfect-matching principle of the odd grid, yields an exponential lower bound for multiplier associativity in the stronger bounded-depth Frege proof systems.
\end{abstract}
\thispagestyle{empty}
\newpage
\pagenumbering{arabic}

\section{Introduction}
\label{sec:intro}

Why do SAT solvers struggle with multiplication?
Despite dramatic leaps in efficiency in recent decades, the inability of SAT solvers to reason about integer multiplication has remained a major obstacle for many applications of SAT and SMT solving, among them bounded model checking~\cite{bbkmt26}, equivalence checking of arithmetic hardware~\cite{kbk-fmcad,bkks25}, and the verification of cryptographic code and smart contracts~\cite{cryptoline,intblast,polysat}. The persistence of this difficulty is illustrated by a recent evaluation on multiplier equivalence-checking benchmarks submitted to the 2025 SAT Competition~\cite{bkks25}, where the modern SAT solver \textsc{Kissat}~\cite{kissat} did not outperform the twenty-year-old solver \textsc{MiniSat}~\cite{minisat}.

This weakness is felt most acutely in bit-vector solving, where SAT solvers serve as the core reasoning engine. Bit-vector solvers ultimately handle arithmetic by \emph{bit-blasting}: each bit becomes a Boolean variable and each operation is represented by a Boolean circuit, so a bit-vector formula involving multiplication reduces to a SAT instance containing a multiplier circuit. While bit-blasting is typically effective for bit-vector formulas containing, for example, addition, comparisons, and bitwise operations, performance collapses on formulas requiring substantial reasoning about multiplication~\cite{liew-thesis,bbkmt26}.

Practitioners have responded by engineering around this weakness, turning to forms of reasoning better suited to multiplication: multiplier verification via Gr\"obner bases~\cite{sayed-ahmed,rbk,kbk-fmcad,revsca,kbbn}; word-level algebraic decision procedures such as PolySAT~\cite{polysat}, int-blasting~\cite{intblast}, and cvc5's finite-field solver~\cite{okt}; as well as rewriting and abstraction layers designed to keep multiplication away from the underlying SAT solver~\cite{npz}. The reliance on rewriting runs especially deep~\cite{nprzbt}: before bit-blasting, the rewrite engines of all major bit-vector solvers (e.g., Bitwuzla~\cite{bitwuzla}, cvc5~\cite{cvc5}, and Z3~\cite{z3}) normalize arithmetic terms modulo commutativity and associativity of multiplication, so that the underlying SAT solver never has to reason about these properties.

Proof complexity provides the standard framework for explaining this kind of empirical
hardness. In order for a CDCL SAT solver to determine that a formula is unsatisfiable,
it must implicitly construct a \emph{resolution refutation} of
that formula~\cite{bks}. The running time of a CDCL solver is therefore bounded
below by the size of the smallest resolution refutation. A classic example is the pigeonhole principle: Haken's
exponential lower bound~\cite{haken} explains why state-of-the-art solvers
fail to show that $n+1$ pigeons do not fit into $n$ holes, for modest
values of $n$. In this spirit, Biere conjectured in 2016 that even basic algebraic
properties of multiplication circuits, such as commutativity, require
exponentially large resolution refutations~\cite{biere-fields}.

Beame and Liew gave a surprising negative answer to this conjecture for the case of degree two ring identities, which include commutativity and distributivity, but not associativity~\cite{beame-liew}. But although they constructed polynomial size proofs for these identities, the concrete size of these proofs was still large. Their best upper bounds, for commutativity (i.e., $\vv{x}\vv{y}=\vv{y}\vv{x}$) and distributivity (i.e., $\vv{x}(\vv{y}+\vv{z})=\vv{x}\vv{y} + \vv{x}\vv{z}$), scaled as $O(n^6 \log n)$, where $n$ is the bit-width. Follow-up experiments indicated that CDCL SAT solvers do not find these proofs on standard multiplier encodings, even with substantial guidance~\cite{liew-thesis,bbkmt26}. A different set of experiments in Liew's thesis suggested that even if SAT solvers did find these proofs, their size would still be prohibitive in practice~\cite{liew-thesis}. So although Beame and Liew's result resolved Biere's specific conjecture, it gave limited insight into the broader aim of explaining \emph{why} SAT solvers consistently struggle with multiplication.

Meanwhile, the practical workarounds described above are incomplete, and when they fail, every major solver still falls back to bit-blasting. Moreover, this retreat from CDCL has rested entirely on empirical grounds. Without proven lower bounds, the possibility has remained that these workarounds are unnecessary. Perhaps resolution proofs of practical size exist for these problems, and current CDCL solvers simply lack the heuristics to find them.

For associativity of multiplication, we show that such small proofs do not exist. In particular, we prove an exponential general resolution lower bound for multiplier associativity, resolving a problem left open by Beame and Liew~\cite{beame-liew}. This is the first resolution lower bound for a natural formula involving multiplication that holds without restricting the order in which variables are resolved. Prior to this result, no such lower bound was known even for tree-like or regular resolution.

\begin{informalmain}
Let $\Assoc_n$ be the CNF formula asserting $(\vv{x}\vv{y})\vv{z}\ne\vv{x}(\vv{y}\vv{z})$ for $n$-bit input bit-vectors $\vv{x},\vv{y},\vv{z}$, with multiplication encoded by an array or Wallace-tree multiplier. Then resolution refutations of $\Assoc_n$ require size
$
2^{\Omega((n/\log n)^{1/4})}.
$
\end{informalmain}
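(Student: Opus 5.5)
The plan is to reduce a hard perfect-matching principle to $\Assoc_n$, following the route outlined in the abstract. The source of hardness is $\PM(G)$, the CNF asserting that a bounded-degree bipartite expander $G$ with $|L|\neq|R|$ (or an odd number of vertices) has a perfect matching. Itsykson, Slabodkin and Sokolov showed that $\PM(G)$ needs resolution size $2^{\Omega(m)}$ when $G$ has $m$ vertices, and ideally their bound is robust to a constant-width substitution. I would fix $m \approx (n/\log n)^{1/4}$; the exponent $1/4$ suggests the embedding spends about $m^4\log m$ bits of $n$. The core of the argument is then a restriction $\rho$ together with a low-width substitution $\sigma$ under which every clause of $\Assoc_n|_\rho$ is derivable in constant width (or small size) from $\PM(G)$ after substitution. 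Resolution is closed under restrictions, and substituting small decision trees into a refutation of $\Assoc_n$ costs only a polynomial factor. Together these give $\sizeRes(\Assoc_n)\ge 2^{\Omega(m)}/\mathrm{poly}(n)$, which is the statement.

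For the construction, I would let $\vv{x}$ and $\vv{z}$ be sparse constants, i.e.\ sums of a few powers of two placed at spread-out positions. The variable bits would then sit in $\vv{y}$, in blocks indexed by edges of $G$ and at offsets chosen so that the partial products do not carry into each other. Under this choice both $(\vv{x}\vv{y})\vv{z}$ and $\vv{x}(\vv{y}\vv{z})$ become explicit sums of shifted copies of $\vv{y}$, and the two sums are mathematically equal. The formula still encodes each product through an array or Wallace-tree circuit, so the intermediate wires are not determined by $\vv{y}$ alone: they are determined only through the adder clauses. That alone is not enough, because the computation would be consistent. The contradiction has to come from the fact that the two bracketings sum the same partial products in different column groupings.

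The plan for this is to fix some of the $\vv{y}$ bits and let the edge variables $e_{uv}$ of $G$ control selected partial-product bits. The $\vv{x}\vv{y}$ intermediate product would aggregate the edge variables by left vertex, and the $\vv{y}\vv{z}$ product would aggregate them by right vertex. Each aggregation is a column sum, so the claimed inequality of the final outputs, together with correct summation in each grouping, should force constraints of the form ``each column sum is exactly one''. These are precisely the $\PM$ constraints at $L$ and at $R$. The global count mismatch $|L|\ne|R|$ then appears as the disagreement between the two outputs. Concretely, I would express the difference of outputs as $\sum_{u\in L} c_u 2^{a_u} - \sum_{v\in R} d_v 2^{b_v}$ and use the spacing to make equality hold exactly when every $c_u$ and $d_v$ equals $1$.

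I expect this step to be the main obstacle: showing that the array and Wallace-tree encodings of these local column sums are derivable in small width from the local $\PM$ constraints, and conversely that $\PM(G)$ clauses follow from $\Assoc_n|_\rho$ in small width. This requires controlling the carries, which I would do by bounding the degree of $G$ by a constant so that each column contains only $O(1)$ live bits. Any carry chain that remains has to be cut by spacing of about $\log m$ bits between blocks, and meeting all these separation requirements is what costs the $m^4\log m$ bit-width. Once this local simulation lemma is in place, the result follows by composing it with the Itsykson–Slabodkin–Sokolov bound.
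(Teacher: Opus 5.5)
Your overall frame matches the paper: a bounded-degree expander with $|U|=|V|+1$, sparse constant $\vv{x},\vv{z}$, one free $\vv{y}$-bit per edge, inner products grouping edges by vertex, and $m\approx(n/\log n)^{1/4}$. However, the mechanism that is supposed to produce the matching constraints fails.

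\emph{The contradiction mechanism.} If $\rho$ restricts only input bits, then, since the encodings are exact, every wire is a function of the edge variables. Both bracketings then compute the same integer under every edge assignment. The restricted formula is unsatisfiable merely because associativity holds: the long miter clause is falsified everywhere. One global disjunction cannot force a conjunction of local constraints $\sum_{e\in E(w)}X_e=1$. Your difference formula shows the problem: if the outputs agreed exactly when every $c_u=d_v=1$, every non-matching assignment would satisfy the miter.

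The missing idea is that $\rho$ must also fix \emph{internal wires}. Take $i(u),k(v)$ from a Golomb ruler and set $j(e)=K-i(u_e)-k(v_e)$. Then column $K-k(v)$ of $\vv{x}\vv{y}$ holds exactly $\{X_e:e\in E(v)\}$, and column $K-i(u)$ of $\vv{y}\vv{z}$ holds exactly $\{X_e:e\in E(u)\}$. The paper pins the inner output strips at these columns to binary $1$, so exactness of the inner multipliers \emph{is} $\operatorname{ExactOne}$ at every vertex. The outer multipliers now read these pinned constants instead of edge variables. Column $K$ holds $m$ constant ones in $(\vv{x}\vv{y})\vv{z}$ and $m+1$ in $\vv{x}(\vv{y}\vv{z})$. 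So the disagreement at bit $K$ is forced by $\rho$ alone, $\rho$ sets $e_K=1$, and all remaining unsatisfiability comes from the inner multipliers.

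\emph{Locality.} You also need a collision analysis (Golomb ruler plus base-$B$ digits) ensuring that no column of any of the four multipliers mixes edge variables of two stars. Bounded degree does not give this. Also, ``$O(1)$ live bits per column'' is not what holds: column $K$ has $m$ live bits. This is why the strip width is $g=\lceil\log(m+2)\rceil+2$. It is also why strip-locality of the array and Wallace-tree encodings is needed (no carry enters a strip, including through the carry-lookahead adder): it makes each wire a function of a single star.

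\emph{The transfer step.} Substituting constant-size functions for variables in a resolution refutation does not cost only a polynomial factor. The image of a width-$w$ clause is a function of $O(w)$ variables, not a clause, and may need $2^{\Omega(w)}$ clauses to express. The paper instead replaces each wire computing $f(X_{E(w)})$ by an extension variable $Z_{w,f}$. This is a literal substitution, so it preserves size. It then proves the extended formula $\SPM(H)$ hard in two steps:
\begin{itemize}
\item The width transfer lemma moves the $\Omega(m)$ width bound of Itsykson--Slabodkin--Sokolov to $\SPM(H)$, losing only a factor $2\Delta$.
\item Ben-Sasson--Wigderson then applies because bounded degree keeps $\SPM(H)$ at $O(m)$ variables. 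With only an $\Omega(m)$ width bound, the tradeoff applied to $\Assoc_n\!\upharpoonright_\rho$ itself, which has $\Theta(n^2)$ variables, would give nothing.
\end{itemize}
Each clause of $(\Assoc_n\!\upharpoonright_\rho)^\sigma$ then has a constant-size derivation from the $\SPM(H)$ clauses of the constantly many stars in its support. Only this direction is needed; the converse derivation of $\PM$ clauses from $\Assoc_n\!\upharpoonright_\rho$ that you list plays no role.
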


This lower bound is not specific to these two multiplier architectures. It applies to a general class of \emph{strip-local} multiplier encodings, defined in Section~\ref{sec:strip-local}, which captures most of the standard multipliers used for bit-blasting.

We prove the lower bound by constructing a reduction from the perfect-matching principle on bounded-degree bipartite graphs. The reduction reveals a perfect matching instance embedded inside the associativity formula: we fix $\vv{x}$ and $\vv{z}$ to sparse bit-vectors whose nonzero positions represent the vertices of the graph, and leave selected bits of $\vv{y}$ free to serve as edge variables. We use the bipartite expanders of Itsykson, Slabodkin, and Sokolov~\cite{iss} (see also~\cite{ioss}), who proved that this principle on these graphs is hard for resolution. In order to absorb the auxiliary circuit variables of the multiplier encodings, we extend the perfect-matching principle to a \emph{star-local} variant and show that it remains hard for resolution.

Our second result concerns the stronger bounded-depth Frege proof systems, where the lines of a proof are bounded-depth formulas rather than clauses. H\r{a}stad recently proved that the perfect-matching principle of the odd grid is hard for these systems~\cite{hastad-php}. By applying the same reduction as before, with the odd grid in place of the expander, we get the following lower bound.

\begin{informalfrege}
There is a constant $\gamma>0$ such that for all $d\le\gamma\log n/\log\log n$, depth $d$ Frege refutations of $\Assoc_n$ require size $\exp(n^{\gamma/d})$.
\end{informalfrege}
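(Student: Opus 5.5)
The plan is to reuse the reduction behind Theorem~\ref{thm:main}, with H\r{a}stad's odd-grid lower bound in place of the expander bound of Itsykson, Slabodkin, and Sokolov. Bounded-depth Frege is closed under restrictions, and under substitution of variables by small-depth formulas with only a constant-depth increase. So it suffices to produce a restriction $\rho$ together with a low-depth substitution $\sigma$ such that every clause of $\Assoc_n$ becomes, under $\rho\sigma$, either trivially true or derivable in small constant depth and polynomial size from the perfect-matching formula $\PM(\Grid_m)$ on an odd $m\times m$ grid. A depth-$d$ refutation of $\Assoc_n$ of size $S$ then yields a depth-$(d+O(1))$ refutation of $\PM(\Grid_m)$ of size $S\cdot\mathrm{poly}(n)$. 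H\r{a}stad's bound of $\exp(m^{\Omega(1/d)})$, valid for $d$ up to roughly $\log m/\log\log m$, then gives the claimed bound once $m=n^{\Omega(1)}$.

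The construction follows the embedding described in the introduction. The $m^2$ grid vertices are split by color class, one class assigned to sparse positions in $\vv{x}$ and the other to sparse positions in $\vv{z}$. Bits of $\vv{y}$ at the difference positions corresponding to grid edges are left free as edge variables, and all other input bits are fixed to $0$. The positions are chosen, e.g.\ via a Sidon-type spacing, so that distinct vertex pairs never collide and the products $(\vv{x}\vv{y})\vv{z}$ and $\vv{x}(\vv{y}\vv{z})$ decompose into well-separated strips. Each strip then computes a local sum over the edges at one vertex. Because the grid has degree at most $4$, each local sum involves $O(1)$ edge variables, and the inequality assertion reduces to local constraints of the form ``exactly one incident edge is chosen''. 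This is the star-local perfect-matching principle. The auxiliary variables of any strip-local multiplier encoding can then be set by $\sigma$ to constant-size formulas in the $O(1)$ edge variables of their strip. Here I would reuse the same star-local substitution lemma that was used for resolution, now verifying only that it raises depth by $O(1)$ and size by a polynomial factor.

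The main obstacle is the parameter bookkeeping. The sparse spacing together with strip separation forces $n$ to be polynomially larger than the number of grid vertices, say $n=\Theta(m^{c})$ with a $\log$ factor as in the resolution bound. This turns $\exp(m^{\Omega(1/d)})$ into $\exp(n^{\gamma/d})$ for a suitable constant $\gamma$, provided $d\le\gamma\log n/\log\log n$ stays inside H\r{a}stad's admissible depth range after the additive $O(1)$ depth loss. I would absorb that loss by shrinking $\gamma$. Additionally, $\Grid_m$ must be the odd grid, so I would take $m$ odd, note that the odd grid is bipartite with unequal color classes, and check that the reduction does not assume balanced sides. If it does, I would instead add a dummy vertex exactly as in H\r{a}stad's formulation.
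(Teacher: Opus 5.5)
Your plan is correct and is essentially the paper's proof: reuse the resolution reduction with $H=\Grid_t$, replace each star-local wire by a constant-size DNF over its star (so no extension variables are needed), derive each substituted clause from $\PM(\Grid_t)$ via local soundness and implicational completeness, and absorb the additive depth loss $d_0$ by taking $\gamma=c/(18(1+d_0))$, using $n_0=O(t^8\log t)$ so that $t\ge n^{1/9}$. Two small corrections to your sketch: the ExactOne constraints come from $\rho$ fixing the vertex strips of the \emph{inner} multipliers to binary $1$, while the miter is satisfied directly by setting $e_K=1$, since column $K$ sums to $|V|$ in one outer product and $|U|=|V|+1$ in the other; and the paper applies $\rho$ as a formula substitution by $\bot$ and $\neg\bot$ rather than simplifying, because simplification can invalidate Frege inference steps.
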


This result implies an exponential lower bound for resolution as well, but with a smaller exponent than in Theorem~\ref{thm:main}. Moreover, H\r{a}stad's lower bound for the odd grid is significantly more technical than the lower bound for bipartite expanders. For these reasons we work with expanders for the resolution lower bound.

\subsection{Related work}
The proof complexity of multipliers has been previously studied in the resolution, cutting planes, and polynomial calculus proof systems. However, multiplier associativity in particular has resisted attempts at upper and lower bounds in each of these proof systems.

\paragraph{Resolution.}
As discussed above, Beame and Liew~\cite{beame-liew} constructed
polynomial size resolution proofs for degree two identities, including commutativity and distributivity. On the lower-bound side, \cite{proofdoor} gave the first proof complexity lower bound involving
multiplier circuits, but only in the restrictive \emph{partially-ordered
resolution} proof system, and only for specific partial orders.

\paragraph{Cutting planes.}
\cite{lbden} gave
$O(n^2)$-length cutting planes proofs for a large class of degree two
identities. These proofs are optimal, as the multiplier encodings themselves consist of $\Omega(n^2)$ inequalities. They noted, however, that their method does not extend to
associativity, whose cutting planes proof complexity remains open.

\paragraph{Polynomial calculus.}
Gr\"obner-basis methods, whose derivations are polynomial calculus proofs, were
introduced for gate-level multiplier verification~\cite{sayed-ahmed,rbk}. These methods were later scaled to large, highly optimized
multipliers by Mahzoon, Gro\ss{}e, and
Drechsler~\cite{revsca}, and, with the help of an auxiliary SAT solver, by Kaufmann, Biere, and Kauers~\cite{kbk-fmcad}. The SAT solver was subsequently eliminated by encoding negated literals with dual
variables~\cite{kbbn}, so that the resulting procedure yields a
single polynomial calculus certificate.

Beyond equivalence checking, Kaufmann, Biere, and Kauers showed that Gr\"obner-basis methods
efficiently verify word-level commutativity of multipliers in practice~\cite{kbk}, and that
the underlying derivation amounts to an $O(n^2)$-length polynomial calculus
proof of word-level commutativity for array multipliers. Their idea extends to show that
word-level ring identities in general, including associativity, admit
$O(n^2)$-length polynomial calculus proofs. However, we do not know whether the corresponding bit-level identities have small proofs. \cite{lbden} also proved that polynomial calculus requires exponentially many monomials to extract the bit-level consequences of a
word-level equality. So for associativity, the word-level equality $(\vv{x}\vv{y})\vv{z} = \vv{x}(\vv{y}\vv{z})$ does not immediately yield a small polynomial calculus proof that the middle output bits agree.

This gap between word-level and bit-level reasoning is also visible in practice: PolySAT~\cite{polysat}, int-blasting~\cite{intblast}, and cvc5's finite-field solver~\cite{okt} all reason with polynomial constraints at the word level, and each adds dedicated mechanisms for recovering bit-level facts from that reasoning.

\paragraph{Organization.}
Section~\ref{sec:prelims} collects preliminaries. Section~\ref{sec:assoc-formula} defines the associativity formula $\Assoc_n$. Section~\ref{sec:strip-local} defines strip-local multiplier encodings, and shows that array and Wallace-tree multipliers are strip-local. Section~\ref{sec:hard-source} introduces the perfect-matching principle $\PM(H)$ and its star-local
extension $\SPM(H)$, and proves that the latter is hard for resolution. Section~\ref{sec:outline} outlines the main steps and
ideas of the reduction. Section~\ref{sec:rho} constructs the key ingredient of the reduction: the restriction $\rho$. This restriction reveals a perfect matching inside $\Assoc_n$. Section~\ref{sec:reduction} proves the associativity lower bound for resolution (Theorem~\ref{thm:main}) by carrying out the reduction from the star-local extension of perfect matching to the associativity formula. Section~\ref{sec:frege} carries out the same reduction in bounded-depth Frege systems, with the odd grid in place of the expander, resulting in Theorem~\ref{thm:frege}.
Section~\ref{sec:conclusion} concludes with open problems.

\section{Preliminaries}
\label{sec:prelims}

\paragraph{Notation.}
For integers $a\le b$, we write $[a,b)=\{a,a+1,\ldots,b-1\}$. A length $n$ bit-vector is written as
$\vv{x}=(x_0,\ldots,x_{n-1})\in\{0,1\}^n$, and represents the integer $\sum_{i<n}x_i2^i$. Following~\cite{beame-liew}, we refer to each multiplier by its output bit-vector, so the multiplier with inputs $\vv{x}, \vv{y}$ is labeled $\vv{x}\vv{y}$.

\paragraph{CNF formulas.}
A \emph{literal} is a Boolean variable or its negation; a \emph{clause} is a disjunction of literals, which we treat as a set. A clause containing complementary literals is a
\emph{tautology}. A CNF formula is a conjunction of clauses. The \emph{width} of a clause is the number of literals it contains, and the width $w(\Phi)$ of a CNF $\Phi$ is the maximum width of its clauses.

\paragraph{Resolution.}
The \emph{resolution rule} produces the clause $A\vee B$, called the \emph{resolvent}, from clauses $A\vee x$ and $B\vee\neg x$. A
\emph{resolution derivation} is a sequence of clauses, each of which is either an \emph{initial clause} of the input CNF $\Phi$, or follows from two prior clauses via the resolution rule. The \emph{size} of a derivation is the length of the sequence of clauses.
A derivation is a \emph{refutation} of $\Phi$ if it ends with the empty clause $\bot$. The minimum size of a resolution refutation of $\Phi$ is denoted by
$\sizeRes(\Phi)$. The \emph{width} of a derivation is the maximum width of its clauses, and $w(\Phi\vdash\bot)$ denotes the minimum width of a resolution refutation of $\Phi$. 
Resolution is sound and complete: it can only derive clauses that are \emph{implied} by $\Phi$, and conversely, if $\Phi$ implies a nontautological
clause $C$, then some subclause of $C$ has a resolution derivation from $\Phi$.

We record the size--width tradeoff of Ben-Sasson and Wigderson in the form we use.

\begin{theorem}[\cite{bw}]
\label{thm:bw}
Let $\Phi$ be an unsatisfiable CNF with $N$ variables. Then
\[
\sizeRes(\Phi)\ge \exp\!\Bigl(\Omega\Bigl(\tfrac{(w(\Phi\vdash\bot)-w(\Phi))^2}{N}\Bigr)\Bigr).
\]
In particular, if $\Phi$ has constant width and $O(m)$ variables, and every resolution refutation of $\Phi$ has width $\Omega(m)$, then
$\sizeRes(\Phi)\ge 2^{\Omega(m)}$.
\end{theorem}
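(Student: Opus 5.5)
The plan follows the original argument of Ben-Sasson and Wigderson: a restriction-based induction showing that small refutations can be converted into narrow ones. The second sentence is then an immediate substitution. Set $W=w(\Phi)$ and $w_0=w(\Phi\vdash\bot)$, and let $S$ be the size of a minimum refutation $\pi$. Call a clause of $\pi$ \emph{fat} if its width exceeds a threshold $d$, to be chosen later, and write $F\le S$ for the number of fat clauses.

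The core claim is proved by induction on $N$ and on a parameter $b$. It states: if $\Phi$ has a refutation with fewer than $a^b$ fat clauses, where $a=(1-d/(2N))^{-1}$, then $w(\Phi\vdash\bot)\le W+d+b$.

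\begin{itemize}
\item \emph{Base case $b=0$.} There are no fat clauses, so the width is at most $d\le W+d$.
\item \emph{Inductive step.} Each fat clause contains more than $d$ of the $2N$ literals. By averaging, some literal $\ell$ appears in at least $dF/(2N)$ fat clauses.
  \begin{itemize}
  \item Restricting by $\ell=1$ kills all fat clauses containing $\ell$. This leaves fewer than $a^{b-1}$ fat clauses in a refutation of $\Phi|_{\ell=1}$, so by induction on $b$ we get $w(\Phi|_{\ell=1}\vdash\bot)\le W+d+b-1$.
  \item Restricting by $\ell=0$ reduces the number of variables. By induction on $N$ we get $w(\Phi|_{\ell=0}\vdash\bot)\le W+d+b$.
  \end{itemize}
\item \emph{Combining the two branches.} Use the standard lemma: if $w(\Phi|_{\ell=1}\vdash\bot)\le k-1$ and $w(\Phi|_{\ell=0}\vdash\bot)\le k$, then $w(\Phi\vdash\bot)\le\max(k,W)$. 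To prove it, add $\neg\ell$ back to every clause of the first refutation. This yields a derivation of $\neg\ell$ in width at most $k$. Resolve $\neg\ell$ against each initial clause of $\Phi$ containing $\ell$, which costs width at most $W$, to recover $\Phi|_{\ell=0}$. Then run the second refutation.
\end{itemize}

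Taking $b=\lceil\log_a S\rceil=O(N\log S/d)$, the claim gives
\[
w_0\le W+d+O(N\log S/d).
\]
Choose $d=\sqrt{2N\ln S}$ to balance the two terms. This yields $w_0-W=O(\sqrt{N\log S})$. Rearranging gives $\log S=\Omega((w_0-W)^2/N)$, which is the displayed bound.

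The main obstacle is the bookkeeping in the restriction lemma. One has to check that the restriction of a refutation is again a refutation of the restricted formula, with clauses that are no wider and with fat clauses that are killed or shrunk correctly. One also has to handle the $N$-induction, where $\ell=0$ removes a variable but does not necessarily reduce the fat count. These points are routine, and since the theorem is quoted from \cite{bw}, I would cite it rather than reprove it in full.

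For the "in particular" statement, we have $w(\Phi)=O(1)$, $N=O(m)$, and $w_0\ge cm$. Hence $(w_0-W)^2/N\ge (cm-O(1))^2/O(m)=\Omega(m)$ for large $m$, so $\sizeRes(\Phi)\ge 2^{\Omega(m)}$.
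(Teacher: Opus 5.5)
Your proposal is correct and is exactly the Ben-Sasson--Wigderson argument (fat-clause averaging, the restriction lemma combining the $\ell=1$ and $\ell=0$ branches, double induction on $b$ and $N$, and the choice $d=\sqrt{2N\ln S}$) that the paper imports by citation, and your derivation of the ``in particular'' clause matches how the paper uses it. One small point: your final rearrangement, like the displayed bound itself, tacitly needs $w(\Phi\vdash\bot)\ge w(\Phi)$, since otherwise the squared difference can be large while the refutation is tiny; this assumption holds in the regime $w(\Phi)=O(1)$, $w(\Phi\vdash\bot)=\Omega(m)$ where the theorem is applied.
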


\paragraph{Substitutions.}
A \emph{substitution} $\sigma$ maps each variable of a CNF $\Phi$ to a constant
or to a literal, possibly over a different variable set. We write $C^\sigma$ or $\Phi^\sigma$ for the result of applying $\sigma$ to a clause $C$, or a CNF $\Phi$, respectively. As usual, $\Phi^\sigma$ omits the clauses that have become trivially true.

\paragraph{Restrictions.}
A restriction is the special case of a substitution where every variable maps to a
constant or to itself. We denote the result of applying a restriction $\rho$ to a clause $C$ by $C\!\upharpoonright_\rho$, and the result of applying this restriction to a formula $\Phi$ as $\Phi\!\upharpoonright_\rho$.

\begin{lemma}[Substitution lemma]
\label{lem:substitution}
Let $\sigma$ be a substitution. If a CNF $\Phi$ has a resolution refutation of size $s$, then $\Phi^\sigma$ has a resolution refutation of
size at most $s$.
\end{lemma}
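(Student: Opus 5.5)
We would prove the substitution lemma by induction along the refutation: apply $\sigma$ to every clause and show that the resulting sequence can be turned into a refutation of $\Phi^\sigma$ without increasing its length. Fix a refutation $C_1,\ldots,C_s=\bot$ of $\Phi$. We will prove the following stronger invariant by induction on $i$. For each $i$, either $C_i^\sigma$ is trivially true (it contains a constant $1$ or a pair of complementary literals), or there is a clause $D_i\subseteq C_i^\sigma$, obtained after deleting false constants, that is derivable from $\Phi^\sigma$. Moreover, the derivations of all the $D_i$ together use at most $i$ clauses. Since $\bot^\sigma=\bot$ is not trivially true, the invariant at $i=s$ gives $D_s=\bot$, and hence a refutation of size at most $s$.

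For the base case, let $C_i$ be an initial clause of $\Phi$. Either $C_i^\sigma$ is satisfied, and then the clause is omitted from $\Phi^\sigma$ by convention, or $C_i^\sigma$ is a clause of $\Phi^\sigma$. In the second case we take $D_i=C_i^\sigma$, which costs one line. One caveat is that $C_i^\sigma$ may be a tautology if two variables are mapped to complementary literals; we place it in the "trivially true" category, so it is never needed.

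For the inductive step, let $C_i=A\vee B$ be the resolvent of $A\vee x$ and $B\vee\neg x$. There are several cases according to $\sigma(x)$.
\begin{enumerate}
\item If either premise is trivially true under $\sigma$: when $A\vee x$ is trivially true because of $A$, the clause $C_i^\sigma$ is also trivially true. If the truth comes only from $x$, meaning $\sigma(x)=1$, then $\neg x$ maps to $0$. The clause $D$ for $B\vee\neg x$ is then a subclause of $B^\sigma$ and hence of $C_i^\sigma$, so we reuse it at no cost. The tautology case where the complementary pair involves $\sigma(x)$ is handled symmetrically.
\item If $\sigma(x)$ is a literal $\ell$, we have available subclauses $D'\subseteq A^\sigma\vee\ell$ and $D''\subseteq B^\sigma\vee\neg\ell$. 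If $\ell\notin D'$, then $D'\subseteq C_i^\sigma$ and we reuse it; the case $\neg\ell\notin D''$ is similar. Otherwise we resolve $D'$ and $D''$ on $\ell$, which costs one line and yields a subclause of $C_i^\sigma$. This step is valid provided $D'\vee D''$ minus $\ell,\neg\ell$ is what we want; if the resolvent is a tautology, then $C_i^\sigma$ is trivially true and we can discard it.
\end{enumerate}
Each step adds at most one line, which gives the size bound.

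The main obstacle is bookkeeping around tautologies. Resolution as defined here does not forbid tautological resolvents, but the invariant must treat them as trivially true so that we never need to derive them. We also need to make sure that the trivially-true case never propagates to $\bot$. This holds because trivial truth of $C_i^\sigma$ is inherited from $A^\sigma$ or $B^\sigma$, and $\bot^\sigma$ contains no literals at all.
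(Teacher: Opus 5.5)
Your proof is correct and is the standard argument. The paper states this lemma in its preliminaries without proof, and the intended proof is exactly your induction: for each $C_i$, either $C_i^\sigma$ is trivially true or some derived subclause of $C_i^\sigma$ is maintained, at a cost of at most one new line per step. The one case you leave implicit is when both premises become trivially true. This forces $C_i^\sigma$ to be trivially true (for example, $\neg\ell\in A^\sigma$ and $\ell\in B^\sigma$ put a complementary pair in $C_i^\sigma$), so your invariant still holds there.
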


\begin{definition}[Exact multiplier encoding]
\label{def:exact-encoding}
An \emph{exact multiplier encoding} $\Mul_n(\vv{x},\vv{y};\vv{x}\vv{y})$ is a
CNF whose variables are the bits of the $n$-bit inputs $\vv{x}$ and $\vv{y}$ together
with auxiliary variables called \emph{wires}. Each wire computes a Boolean function of the input
bits, in the sense that the satisfying assignments of the CNF are exactly the assignments in which
every wire takes the value of its function. The wires include a designated
$2n$-bit \emph{output} $\vv{x}\vv{y}$, computing the product of the input bit-vectors.
\end{definition}

\begin{definition}[Partial products, columns]
\label{def:partial-products}
The \emph{partial products} of two length-$n$ bit-vectors $\vv{x}$ and $\vv{y}$ are
\[
t_{i,j}=x_i\wedge y_j,\qquad 0\le i,j<n.
\]
The product $\vv{x}\vv{y}$ is the weighted sum of these partial products: $\vv{x}\vv{y}=\sum_{i,j}t_{i,j}2^{i+j}$. The partial product $t_{i,j}$ has \emph{arithmetic weight} $2^{i+j}$, and we say that it lies in \emph{column} $i+j$.
\end{definition}

\paragraph{Array multiplier.} 
A full adder with inputs $a_0,a_1,a_2$ each of weight $2^k$ has outputs
\[
s=a_0\oplus a_1\oplus a_2,
\qquad
c=\MAJ(a_0,a_1,a_2),
\]
of weights $2^k$ and $2^{k+1}$, so the inputs and outputs of a full adder have the same total weight. 
The $n$-bit array multiplier, as shown in Figure~\ref{fig:array-multiplier}, computes each partial product $t_{i,j}$ by an AND gate and uses a grid of full adders to compute the sum of all the partial products. Observe that this is an exact multiplier encoding.

\begin{figure}[ht]
\centering
\includegraphics[width=0.75\textwidth]{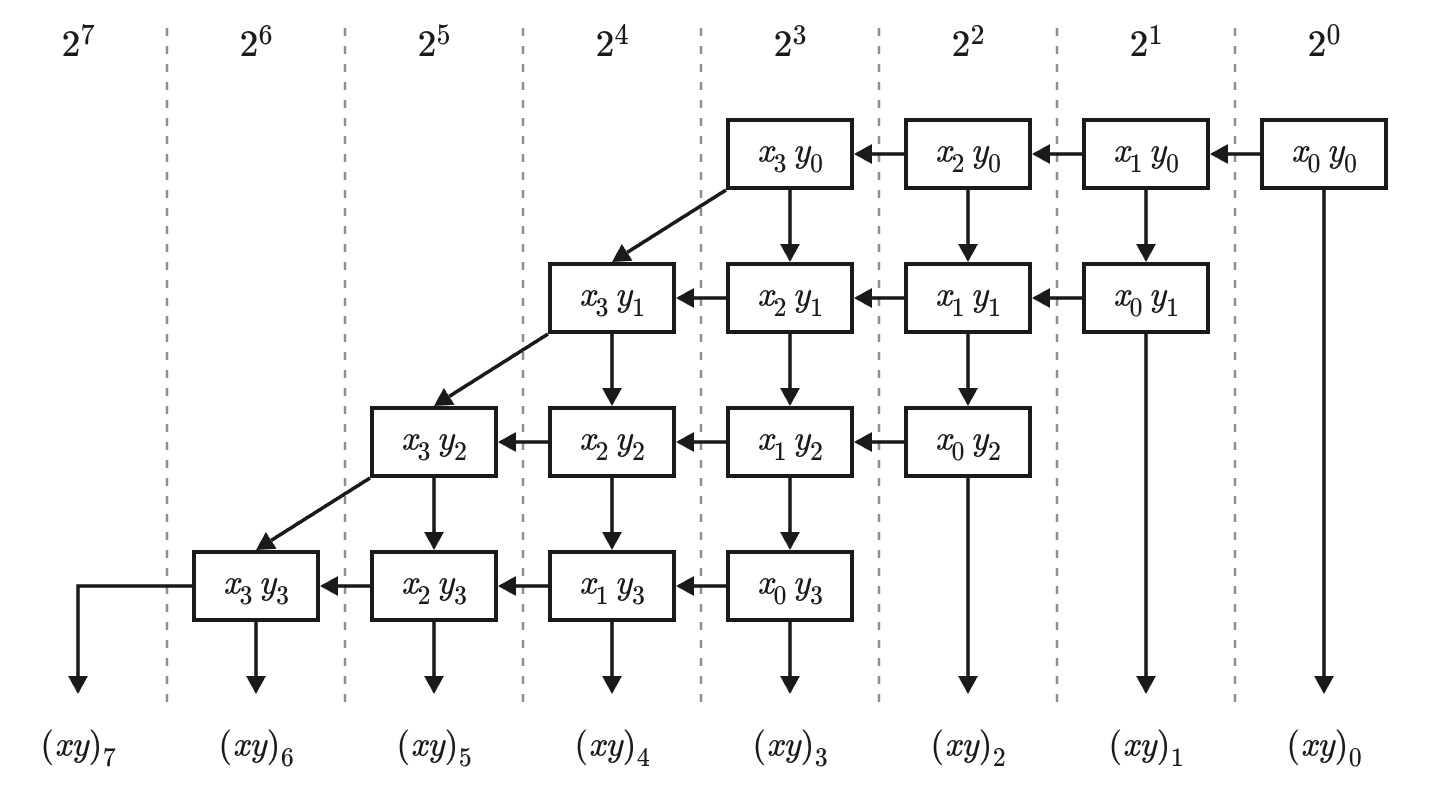}
\caption{The $4$-bit array multiplier. Each box represents a full adder, and adds the partial product $t_{i,j}=x_i\wedge y_j$ to the sum bit arriving
from the row above (vertical arrows) and the carry arriving from the previous column (horizontal and diagonal arrows).}
\label{fig:array-multiplier}
\end{figure}

\paragraph{Bipartite graphs and matchings.}
Let $G=(U,V,E)$ be a bipartite graph with parts $U$ and $V$ and edges $E$. For an edge $e\in E$ we write $e=(u_e,v_e)$ with $u_e\in U$ and $v_e\in V$. For a vertex $w$, we write $E(w)$ for the set of edges incident to $w$, called the \emph{star} of $w$;
the degree of $w$ is $|E(w)|$. A \emph{matching} is a set of pairwise disjoint edges; it \emph{covers} a vertex set if every vertex of the
set is an endpoint of an edge in the matching.

\section{The associativity formula}
\label{sec:assoc-formula}

The associativity formula $\Assoc_n$ contains a circuit of two multipliers outputting $(\vv{x}\vv{y})\vv{z}$, a separate circuit of two multipliers outputting $\vv{x}(\vv{y}\vv{z})$, and a \emph{miter circuit}~\cite{brand93} which asserts that these two outputs disagree.\footnote{The term presumably comes from the shape of the construction: two circuits converging on an XOR gate resemble a bishop's miter, as illustrated by Biere~\cite[slide~36]{biere-fields}.}

\begin{definition}[Associativity formula]
\label{def:assoc}
Let $\Mul_n$ be an exact multiplier encoding. The unsatisfiable formula $\Assoc_n$, shown in Figure~\ref{fig:assoc-miter}, consists of the four multiplier encodings
\[
\Mul_n(\vv{x},\vv{y};\vv{x}\vv{y}),\qquad \Mul_n(\vv{y},\vv{z};\vv{y}\vv{z}),
\]
\[
\Mul_{2n}(\vv{x}\vv{y},\vv{z};(\vv{x}\vv{y})\vv{z}),\qquad \Mul_{2n}(\vv{x},\vv{y}\vv{z};\vv{x}(\vv{y}\vv{z})),
\]
along with \emph{miter variables} $e_i$ and \emph{miter clauses} encoding the constraints
\[
e_i\leftrightarrow \bigl(((\vv{x}\vv{y})\vv{z})_i\oplus(\vv{x}(\vv{y}\vv{z}))_i\bigr),\qquad 0\le i<4n,
\]
\[
e_0\vee\cdots\vee e_{4n-1}.
\]
Here the outputs of the inner multipliers, $\vv{x}\vv{y}$ and $\vv{y}\vv{z}$, have $2n$ bits, the $n$-bit inputs $\vv{x}$ and $\vv{z}$ to the outer multipliers are zero-padded to $2n$ bits, and the outputs of the outer multipliers $(\vv{x}\vv{y})\vv{z}$ and $\vv{x}(\vv{y}\vv{z})$ have $4n$ bits.

We refer to $\vv{x}\vv{y}$ and $\vv{y}\vv{z}$ as the
\emph{inner} multipliers and $(\vv{x}\vv{y})\vv{z}$ and $\vv{x}(\vv{y}\vv{z})$ as the \emph{outer} multipliers.
\end{definition}

In fixed-width bit-vector arithmetic, multiplier outputs are truncated to the input bit-width. We work with untruncated multipliers for convenience, as the lower bound will apply unchanged to
truncated multipliers (see Remark~\ref{rem:truncated}).

\begin{figure}[ht]
\centering
\includegraphics[width=0.5\textwidth]{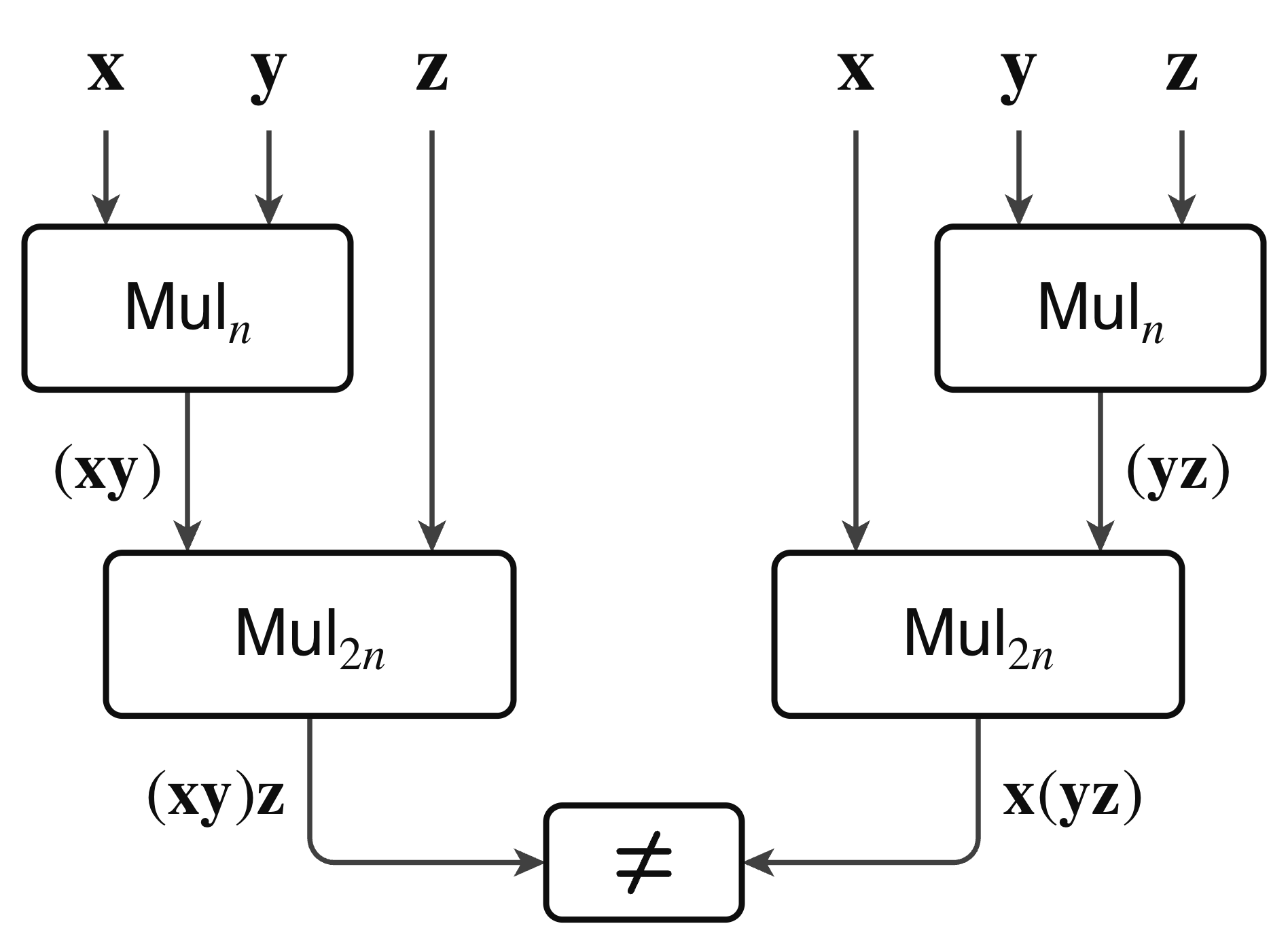}
\caption{The associativity instance $\Assoc_n$. The two inner multipliers $\Mul_n$ compute $\vv{x}\vv{y}$ and $\vv{y}\vv{z}$, the two outer
multipliers $\Mul_{2n}$ compute $(\vv{x}\vv{y})\vv{z}$ and $\vv{x}(\vv{y}\vv{z})$, and the miter variables $e_i$, with the long miter clause $e_0\vee\cdots\vee e_{4n-1}$, assert that $(\vv{x}\vv{y})\vv{z} \neq \vv{x}(\vv{y}\vv{z})$.}
\label{fig:assoc-miter}
\end{figure}

\section{Strip-local multipliers}
\label{sec:strip-local}

Our $\Assoc_n$ lower bound will apply so long as the multiplier encoding used is \emph{strip-local}, a property we will define shortly. Most multiplier encodings based on partial product summation satisfy this property, including standard array and Wallace-tree multipliers.

\begin{definition}[$g$-strip]
\label{def:g-strip}
For an integer $g\ge1$, the \emph{$g$-strip} with base $bg$ is the interval of
columns $[bg,(b+1)g)$, for $b=0,1,2,\ldots$. We call them simply \emph{strips} when $g$ is clear from context. A \emph{$g$-column} is
a column that is a multiple of $g$.
\end{definition}

\begin{definition}[$g$-sparse restriction]
\label{def:g-sparse}
For an integer $g\ge1$, a restriction of the input bit-vectors $\vv{x}$ and $\vv{y}$ is \emph{$g$-sparse} if every partial product $t_{i,j}$ not
forced to $0$ lies in a $g$-column (recall Definition~\ref{def:partial-products}) and each $g$-column contains strictly fewer than $2^{g-1}$ such partial products.
\end{definition}

\begin{definition}[Strip-local]
\label{def:strip-local}
An exact multiplier encoding (Definition~\ref{def:exact-encoding}) is \emph{strip-local} if its clauses have constant width and, under every
$g$-sparse restriction of the input bit-vectors, every wire is a Boolean function of the partial products belonging to a
single strip.
\end{definition}

\begin{remark}
	In the definition of $g$-sparse restriction, we set the threshold to $2^{g-1}$ instead of $2^g$ because Wallace-tree multipliers would not be strip-local with
	the latter threshold.
\end{remark}

\begin{observation}[Strips count their partial products]
\label{obs:strip-counts}
Fix an exact multiplier encoding, an integer $g\ge 1$, and a $g$-sparse restriction of the input bit-vectors. Then, under every assignment
to the unset inputs, the $g$ output bits belonging to the strip with base $bg$ compute, in binary, the number of partial products in the base
column $bg$ that evaluate to $1$.
\end{observation}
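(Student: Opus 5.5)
The plan is to prove the observation by a direct arithmetic argument, using only exactness of the encoding and the sparsity constraint. I do not expect to need strip-locality.

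Fix a $g$-sparse restriction and any assignment to the unset input bits. By exactness (Definition~\ref{def:exact-encoding}), every satisfying assignment gives the output $\vv{x}\vv{y}$ the binary value of the integer product. By Definition~\ref{def:partial-products}, this product equals $\sum_{i,j} t_{i,j}2^{i+j}$. Partial products forced to $0$ contribute nothing, and every surviving partial product lies in a $g$-column by Definition~\ref{def:g-sparse}. Writing $c_b$ for the number of partial products in column $bg$ that evaluate to $1$, we get
\[
\vv{x}\vv{y}=\sum_{b\ge 0} c_b\,2^{bg}.
\]

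Next I use the bound from sparsity: $0\le c_b<2^{g-1}<2^g$. So each term $c_b2^{bg}$ occupies only bit positions $[bg,(b+1)g)$. The terms for different $b$ therefore have disjoint binary supports, and adding them produces no carries between strips. The binary expansion of the sum is just the concatenation of the $g$-bit expansions of the $c_b$, with $c_b$ sitting in positions $bg,\ldots,bg+g-1$. These positions are exactly the output bits of the strip with base $bg$ (Definition~\ref{def:g-strip}).

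One boundary case remains. The output has only $2n$ bits, so I must check that no nonzero $c_b$ is truncated by the output width. This follows because the product of two $n$-bit numbers is below $2^{2n}$, so its binary expansion fits in $2n$ bits. Any strip reaching past position $2n$ must therefore have its excess bits equal to zero; I would note this, or restrict the statement to strips within range.

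I expect no real obstacle here. The only point needing care is the no-carry argument, and the strict bound $c_b<2^g$ makes it immediate.
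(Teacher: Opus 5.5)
Your argument is correct and is the reasoning the paper leaves implicit, since it states this as an observation without a written proof: exactness gives $\vv{x}\vv{y}=\sum_b c_b 2^{bg}$, and the bound $c_b<2^{g-1}<2^g$ puts the terms in disjoint bit ranges, so each strip's output bits are exactly the binary expansion of $c_b$. You are also right that strip-locality is not needed, and your treatment of a strip that extends past the top output bit, using the bound $\vv{x}\vv{y}<2^{2n}$, is sound.
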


\begin{proposition}
\label{prop:array-strip-local}
The array multiplier encoding (Figure~\ref{fig:array-multiplier}) is strip-local.
\end{proposition}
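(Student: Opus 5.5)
The plan is to check the two conditions of Definition~\ref{def:strip-local} separately. The width condition is immediate. Each AND gate $t_{i,j}=x_i\wedge y_j$ is encoded by clauses of width at most $3$. Each full adder (sum and carry) is encoded by clauses of width at most $4$. So all clauses have constant width.

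The substance is the second condition. First I assign every wire a column: a partial product $t_{i,j}$ and a sum output of a full adder in column $c$ get column $c$, and a carry output gets column $c+1$. Then I prove the following claim by induction along the order of the array. Under a $g$-sparse restriction, every wire in column $c$ equals a fixed Boolean function of the partial products in the strip containing $c$. In particular, every carry wire whose column is a $g$-column is identically $0$.

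The main tool is an arithmetic description of the wires of each row. For $j<n$ let
\[
S_j=\sum_{j'\le j}\sum_i t_{i,j'}2^{i+j'}
\]
be the partial sum after row $j$. Row $j$ of the array is a ripple-carry addition of the row-$(j-1)$ sum bits (together with the product bits already emitted below column $j$) and the shifted row $\sum_i t_{i,j}2^{i+j}$. So its sum bits, together with its final carry, are exactly the binary digits of $S_j$. Moreover, the carry entering column $c$ in row $j$ equals
\[
\bigl\lfloor\bigl((S_{j-1}\bmod 2^c)+(\textstyle\sum_i t_{i,j}2^{i+j}\bmod 2^c)\bigr)/2^c\bigr\rfloor .
\]
If the exact Figure~\ref{fig:array-multiplier} layout uses a final ripple row to absorb the last carries, that row is handled the same way: it is one more exact binary addition.

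Now apply sparsity. Write $N_{b,j}$ for the number of partial products $t_{i,j'}$ with $j'\le j$ that lie in column $bg$ and evaluate to $1$. Then $S_j=\sum_b N_{b,j}2^{bg}$ with $0\le N_{b,j}<2^{g-1}$, and distinct strips do not overlap in binary. So the digits of $S_j$ in strip $b$ are the binary digits of $N_{b,j}$, which depends only on partial products in column $bg$.

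For carries, the key estimate is that for every $b$
\[
(S_{j-1}\bmod 2^{bg})+(\text{row }j\bmod 2^{bg})=S_j\bmod 2^{bg}\le\sum_{b'<b}(2^{g-1}-1)2^{b'g}<2^{bg}.
\]
This is where the threshold $2^{g-1}$ is used. Hence the carry into every $g$-column is $0$. A carry into a non-$g$-column $c$ inside strip $b$ is then determined by the digits of $S_{j-1}$ and of row $j$ in columns $[bg,c)$. Both are functions of strip-$b$ partial products, so the carry is too. Since every wire is either a partial product, a sum bit, or a carry, this completes the induction.

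I expect the main obstacle to be bookkeeping rather than ideas. Specifically, I must verify that the concrete wiring of the array (which bits exit as output bits, and how the last row and the leftmost carries are routed) really makes each row's wires the exact binary digits of $S_j$ with the stated carries. I would first set up this invariant carefully for a generic full adder at position $(i,j)$ and then check the boundary rows and columns.
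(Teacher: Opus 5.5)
Your argument is correct, assuming (as the horizontal carries in Figure~\ref{fig:array-multiplier} indicate) that each row is a ripple-carry adder. It is organized differently from the paper's proof.

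The paper never identifies the wires explicitly. It notes that only carries cross column boundaries, so it suffices to show, by induction over strips, that every carry entering a $g$-column is $0$. Once the carries into the strip with base $bg$ vanish, the strip sees only fewer than $2^{g-1}$ ones of weight $2^{bg}$. Since adders conserve weight, a carry of weight $2^{(b+1)g}$ leaving the strip would need at least $2^g$ of them.

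You instead induct over rows. You use the exact ripple-carry semantics to identify row $j$ with the binary digits of $S_j$ and its carries with your closed formula, and then read off strip-locality from $S_j=\sum_b N_{b,j}2^{bg}$.

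Your route gives an explicit description of every wire, and it turns the ``a carry needs enough weight'' step into a one-line inequality. The paper's route is shorter and does not depend on how carries are routed between rows. It applies verbatim to a carry-save array, where your invariant that row $j$'s sum bits are the binary digits of $S_j$ fails. It is also the template the paper reuses for Wallace trees in Lemma~\ref{lem:wallace-tree-local}.

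One small correction: your estimate does not actually use the threshold $2^{g-1}$. With $N_{b',j}\le 2^g-1$ you still get $\sum_{b'<b}(2^g-1)2^{b'g}=2^{bg}-1<2^{bg}$, and the digits of $N_{b,j}$ still fit in the strip. As the paper remarks, the stronger threshold is needed only for Wallace trees.
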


\begin{proof}
Each variable of an array multiplier is determined by at most three other variables, so its clauses have width at most four.
 
Fix $g\ge1$ and a $g$-sparse restriction. In an array multiplier, the only wires crossing a column boundary are carries
(Figure~\ref{fig:array-multiplier}), so a wire in the strip with
base $bg$ can depend on lower strips only through the carries entering
column~$bg$.

It therefore suffices to show that every carry feeding into a $g$-strip is
the constant $0$. This is trivial for the strip at base $0$. Assume the inductive hypothesis that every carry feeding into the $g$-strip
with base $bg$ is zero. Then, since a carry into column $(b+1)g$ has weight $2^{(b+1)g}$, it can only be $1$ if the base column $bg$ of the previous strip
contained at least $2^{g}$ nonzero partial products, contradicting the $g$-sparsity of the restriction.
\end{proof}

\begin{restatable}{proposition}{wallacestriplocal}
\label{prop:wallace-strip-local}
The standard Wallace-tree multiplier encoding (Definition~\ref{def:wallace}), where the final two rows are summed by a
carry-lookahead adder (Definition~\ref{def:cla}), is strip-local.
\end{restatable}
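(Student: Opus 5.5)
The plan has two parts: a width bound for the clauses, and a column-by-column argument that reduces strip-locality to showing that no carry ever crosses a strip boundary.

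\textbf{Constant width.} Every gate of the Wallace-tree multiplier has constant fan-in. This covers the AND gates for $t_{i,j}$, the full and half adders in the reduction layers, and the gates of the carry-lookahead adder from Definition~\ref{def:cla}. I assume, as in that definition, that the lookahead block uses bounded fan-in generate/propagate gates arranged in a tree. Each wire is then defined by a constant number of other wires, so its Tseitin-style clauses have constant width.

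\textbf{Reduction to carries.} Fix $g\ge1$ and a $g$-sparse restriction. In both phases, the only wires whose weight exceeds the weight of the columns they depend on are carry-type wires: adder carries $c$, and the generate/group-carry signals of the lookahead adder. As in the proof of Proposition~\ref{prop:array-strip-local}, it is enough to show that every carry-type wire crossing a $g$-column boundary, from column $(b+1)g-1$ into $(b+1)g$, is the constant $0$. Once that holds, every wire is a function of partial products within a single strip. For lookahead group signals that span several strips, I would argue that they are functions of partial products in the strip of their highest-weight position. Alternatively, I would show that they too are constant on the lower part.

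\textbf{Reduction phase.} I would proceed by induction on $b$, and within a strip by induction on the Wallace layers. The invariant is that every wire of weight $2^k$ with $k\in[bg,(b+1)g)$ computes a quantity at most $N_b/2^{k-bg}$. Here $N_b<2^{g-1}$ is the number of free partial products in column $bg$. The key observation is that each column of each layer holds bits whose weighted total equals the count of ones among column-$bg$ partial products. A carry leaving a column is $1$ only if at least two ones are present there. So by the time one reaches column $(b+1)g$, a carry would require total weight at least $2^{g}$, which exceeds $N_b$. A parallel, layer-by-layer version of this is needed: every reduction layer preserves the total weight, and any $1$ bit in column $(b+1)g$ would contribute $2^g>N_b$. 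That already rules out a carry crossing the boundary in any layer, since all bits are nonnegative.

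\textbf{Main obstacle: the final carry-lookahead adder.} The total-weight argument only constrains values, not the intermediate lookahead signals. The concern is a group-propagate signal spanning a strip boundary that depends on bits from two strips. The strict threshold $2^{g-1}$ is what saves us here. The two final rows in the strip sum to at most $N_b<2^{g-1}$, so the top column $(b+1)g-1$ of both rows is $0$. Hence the propagate and generate bits at that column are $0$, and every group generate or propagate signal spanning the boundary is constant $0$. I would verify this carefully against the exact gate structure of Definition~\ref{def:cla}. That check is the step I expect to need the most care, and it is presumably why the threshold $2^{g-1}$ rather than $2^g$ is required.
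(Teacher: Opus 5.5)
Your plan matches the paper's proof. You bound the width by constant fan-in. You then run an induction over strips, using that adders preserve a strip's weighted sum except for carries leaving its top column. This shows that no carry leaves a strip and, via the threshold $2^{g-1}$, that the top column of every strip is $0$ in the two final rows. Finally you argue locality for the carry-lookahead adder from $p_s=q_s=0$ at each top column $s$. The paper gets both Wallace-tree facts in one step: the strip's weighted sum can only decrease from layer to layer, so it stays below $2^{(b+1)g-1}$ and every top-column variable is $0$ in every layer. Your phrase ``each column of each layer holds bits whose weighted total equals the count'' should say ``each strip''.

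The step that fails is your last one: the claim that every group generate or propagate signal spanning a strip boundary is constant $0$. This is true for group propagates, since $P_I$ contains the factor $p_s=0$, but false for group generates. Let $I=[a,i]$ contain top columns, with $s$ the highest of them, and consider
\[
Q_I=\bigoplus_{t=a}^{i}\Bigl(q_t\wedge\bigwedge_{u=t+1}^{i}p_u\Bigr).
\]
Only the terms with $t\le s$ vanish, because they contain $q_s$ or $p_s$. The terms with $t>s$ survive, so $Q_I=Q_{[s+1,i]}$, the group generate of the part of $I$ inside the strip of $i$. For instance, with $g=5$ the level-$1$ block $[4,8)$ has $q^1_4=q_7\oplus q_6p_7\oplus q_5p_6p_7$. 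Nothing forces $q_5,q_6,q_7$ to vanish, since up to $2^{g-1}-1$ free partial products in column $5$ spread their count over columns $5$ through $8$ of the final rows.

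So the check you planned to do ``carefully'' would not go through. What is true, and suffices, is the first option in your reduction paragraph: each group pair depends only on the signal pairs of the strip containing its highest column. The carries then follow as in the paper. Since $c_0=0$, a carry wire computes $c_i=Q_{[0,i-1]}$. This is $0$ when $i-1$ is a top column, and otherwise depends only on the strip containing $i$; hence so does the sum bit $p_i\oplus c_i$.
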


The proof is nontrivial due to the final carry-lookahead adder. Appendix~\ref{app:architectures} gives the proof.

\section{Perfect matching and its star-local extension}
\label{sec:hard-source}

The source of hardness for our resolution lower bound is the sparse perfect-matching principle studied in~\cite{iss,ioss}. In general, a perfect-matching principle asserts that a given graph has no perfect matching. Its CNF encoding negates this: it has one variable per edge and requires that every vertex is incident to exactly one selected edge. When the underlying graph is bipartite and has a different number of vertices on the two sides, this CNF is unsatisfiable.

The best-known special case is the pigeonhole principle, which is the perfect-matching principle of the complete bipartite graph $K_{m+1,m}$. Haken showed that the pigeonhole principle is hard for resolution~\cite{haken}. However, this is an inconvenient source of hardness for our reduction due to the high degree of its vertices. To absorb the
auxiliary variables of the multiplier encodings, our reduction extends the formula with
a new variable for every Boolean function of the edges at a vertex. For $K_{m+1,m}$, this introduces a doubly exponential number of new variables, so the Ben-Sasson--Wigderson size--width tradeoff (Theorem~\ref{thm:bw}) does not give a useful size lower bound. On the other hand, for bounded-degree graphs, this extension only adds a constant number of variables and clauses per vertex, so the width lower bound of Itsykson, Slabodkin, and Sokolov~\cite{iss} implies an exponential size lower bound for the extension.

\begin{definition}[Perfect-matching CNF]
\label{def:pm}
For a bipartite graph $G=(U,V,E)$, introduce a variable $X_e$ for each edge $e\in E$. We interpret $X_e=1$ to mean that the edge $e$ is selected. The \emph{exact-one perfect-matching CNF} is
\[
\PM(G)
=
\bigwedge_{w\in U\cup V}
\operatorname{ExactOne}(X_{E(w)}),
\]
where $\operatorname{ExactOne}(X_{E(w)})$ consists of the clauses
\[
\bigvee_{e\in E(w)}X_e
\qquad\text{and}\qquad
\neg X_e\vee\neg X_f
\quad (e,f\in E(w),\ e\ne f).
\]
\end{definition}

For the remainder of the paper, fix an integer $m$, a constant $\Delta$ independent of $m$, and a bipartite graph
\[
H=(U,V,E),\qquad |U|=m+1,\qquad |V|=m,
\]
of maximum degree at most $\Delta$ and with no isolated vertices. We use a robust form of the hardness result of Itsykson, Slabodkin, and Sokolov~\cite{iss} for $\PM(H)$, based on star-local extension variables in the style of the functional encoding of~\cite{abrw}.

\begin{definition}[Star-local extension]
\label{def:star-local-ext}
For each vertex $w$ and Boolean function $f:\{0,1\}^{E(w)}\to\{0,1\}$, introduce an extension variable $Z_{w,f}$, defined by
clauses $\operatorname{Def}(Z_{w,f})$ over variables $\{Z_{w,f}\} \cup \{X_e:e\in E(w)\}$ expressing that $Z_{w,f}=f\bigl((X_e)_{e\in
E(w)}\bigr)$. The \emph{star-local extension} of the exact-one perfect-matching principle is
\[
\SPM(H)
=
\bigwedge_{w\in U\cup V}
\Bigl(
\operatorname{ExactOne}\bigl(X_{E(w)}\bigr)
\;\wedge\;
\bigwedge_{f:\{0,1\}^{E(w)}\to\{0,1\}}\operatorname{Def}(Z_{w,f})
\Bigr).
\]
\end{definition}

\begin{figure}[ht]
\centering
\includegraphics[width=0.62\textwidth]{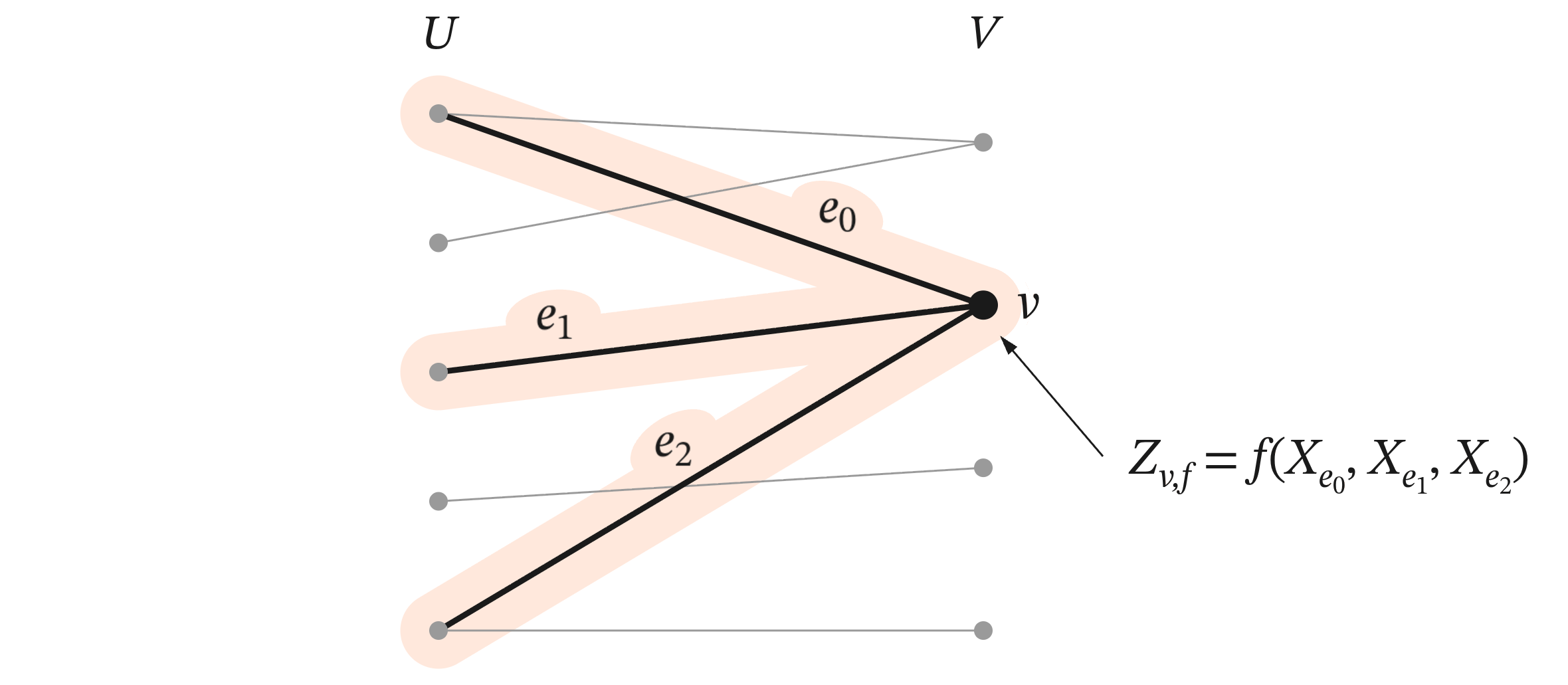}
\caption{A star-local extension variable. The vertex $v$ has star $E(v)=\{e_0,e_1,e_2\}$ (shaded), and for each Boolean function $f$ on this
star we introduce one variable $Z_{v,f}$ defined to equal $f(X_{e_0},X_{e_1},X_{e_2})$.}
\label{fig:star-local-ext}
\end{figure}

Figure~\ref{fig:star-local-ext} illustrates the extension at a single vertex. Every clause of $\SPM(H)$ has width at most $\Delta+1$. Because $\Delta$ is constant, each star has only a constant number of Boolean functions, so
$\SPM(H)$ still has $O(m)$ variables and $O(m)$ clauses.

\begin{definition}[Boundary expander]
\label{def:boundary-expander}
Let $G=(U,V,E)$ be a bipartite graph. For $S\subseteq U$, the \emph{boundary} $\delta(S)\subseteq V$ is the set of vertices of $V$ with exactly one neighbor in $S$. The graph $G$
is an \emph{$(r,\Gamma)$-boundary expander} if
\[
|\delta(S)|\ge \Gamma\cdot|S|
\qquad\text{for every }S\subseteq U\text{ with }|S|\le r.
\]
\end{definition}

\begin{theorem}[\cite{iss,ioss}]
\label{thm:iss-graphs}
There are absolute constants $\Delta\in\N$, $\Gamma\ge1$, and $\varepsilon>0$ such that for every sufficiently large $m$ there is a bipartite graph
$H=(U,V,E)$ with $|U|=m+1$ and $|V|=m$, of maximum degree at most $\Delta$, that is an $(\varepsilon m,\Gamma)$-boundary expander. Furthermore, $H$ has a matching
covering $V$.
\end{theorem}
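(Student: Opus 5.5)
The plan is a probabilistic construction with a first-moment (union bound) argument, arranged so that the degree bound and the matching covering $V$ hold by design. Fix a large constant $d$, write $U=U_0\cup\{u^*\}$ with $|U_0|=m=|V|$, and let $E$ be the union of $d$ independent uniformly random perfect matchings $\pi_1,\dots,\pi_d$ between $U_0$ and $V$. Add $d$ edges from $u^*$ to independently uniform vertices of $V$, and collapse any parallel edges. Then every vertex of $U$ has degree at most $d$, and every vertex of $V$ has degree at most $2d$ with high probability. If some $v$ gets more than $d$ edges from $u^*$, which can only happen with multiplicity, we simply resample. So $\Delta=2d$ is an absolute constant, and $\pi_1$ is a matching covering $V$. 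It remains to show that with positive probability the graph is an $(\varepsilon m,\Gamma)$-boundary expander with $\Gamma\ge1$.

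\textbf{Reduction to neighbourhood expansion.} Let $N(S)$ be the neighbourhood of $S\subseteq U$ and $e(S)\le d|S|$ the number of edges leaving $S$. Each vertex of $N(S)\setminus\delta(S)$ absorbs at least two of these edges, which gives
\[
|\delta(S)|\ge 2|N(S)|-e(S)\ge 2|N(S)|-d|S|.
\]
Hence it suffices to show that $|N(S)|\ge(1-\beta)d|S|$ for all $|S|\le\varepsilon m$, with $\beta$ chosen so that $(1-2\beta)d\ge\Gamma$. For example, take $\beta=1/4$ and $d\ge 12$, which gives $\Gamma=d/2\ge1$ with room to spare. Collapsing parallel edges is harmless here, since only the set $N(S)$ is counted.

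\textbf{Union bound.} Fix $s=|S|\le\varepsilon m$ and a set $T\subseteq V$ of size $t=\lfloor(1-\beta)ds\rfloor$. Consider the event $N(S)\subseteq T$. For each matching $\pi_i$, the probability that $\pi_i(S\cap U_0)\subseteq T$ is at most $(t/m)^{|S\cap U_0|}$. The random edges of $u^*$ contribute a factor of at most $(t/m)^d$. So the event has probability at most $(t/m)^{d(s-1)}$, and at most $(t/m)^{ds}$ when $u^*\notin S$. Summing over choices of $S$ and $T$, the failure probability for size $s$ is at most
\[
\binom{m+1}{s}\binom{m}{t}\Bigl(\frac{ds}{m}\Bigr)^{d(s-1)}\le C^{s}\Bigl(\frac{s}{m}\Bigr)^{ds-d-s-(1-\beta)ds}= C^{s}\Bigl(\frac{s}{m}\Bigr)^{(\beta d-1)s-d}
\]
for a constant $C=C(d)$.

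\textbf{Main obstacle: the small-$s$ terms.} For small $s$, especially $s=1$ with $u^*\in S$, the exponent $(\beta d-1)s-d$ can be negative. These cases need a direct argument:
\begin{itemize}
\item A single vertex has at least $(1-\beta)d$ distinct neighbours with probability $1-O(1/m)$, since a collision among its $d$ random neighbours has probability $O(d^2/m)$.
\item The cleaner fix is to give $u^*$ the same treatment: let $u^*$ also lie in $d$ independent random matchings, choosing the $d$ random matchings on $(m+1)$ left vertices into $m$ right vertices with one vertex left out.
\end{itemize}
Either way the bound becomes $C^s(s/m)^{(\beta d-1)s}$, and with $\beta d\ge3$ this sums over $1\le s\le\varepsilon m$ to $O(1/m)+\sum_s (C\varepsilon^{2})^{s}<1$ once $\varepsilon$ is a small enough constant.

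This step is where care is needed: handling the extra vertex and the low-$s$ collisions without destroying the bounded right degree or the covering matching. I would settle it by working with the symmetric model in the second item above, and redoing the per-$S$ probability estimate with the exponent $ds$.
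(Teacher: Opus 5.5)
Your argument is correct in substance, and it takes a different route from the paper. The paper gives no proof of this statement; it simply imports it from \cite{iss,ioss}. You instead supply a self-contained first-moment construction in three parts. First, you take $d$ random perfect matchings plus a randomly attached extra vertex $u^*$. Second, you reduce boundary expansion to ordinary neighbourhood expansion via $|\delta(S)|\ge 2|N(S)|-e(S)\ge 2|N(S)|-d|S|$. Third, you take a union bound over pairs $(S,T)$. This yields explicit constants ($\Delta=d+1$, $\Gamma=d/2$, and $\varepsilon$ with $C(d)\varepsilon^2<1/2$). Since the downstream lower bounds only need some such $H$ to exist, a non-explicit random graph is adequate. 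The citation buys brevity, and it keeps $H$ aligned with the hypotheses of the width lower bound from the same source, which Theorem~\ref{thm:star-local-hard} relies on.

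One step is mis-stated, and it is what creates your ``main obstacle.'' When $u^*\in S$, the event $N(S)\subseteq T$ also requires all $d$ random endpoints of $u^*$ to land in $T$; this is the factor $(t/m)^d$ you mention yourself. So the bound is $(t/m)^{d(s-1)}\cdot(t/m)^{d}=(t/m)^{ds}$, exactly as when $u^*\notin S$. Hence your original model already gives $\binom{m+1}{s}\binom{m}{t}(t/m)^{ds}\le C^s(s/m)^{(\beta d-1)s}\le(C\varepsilon^2)^s$ for every $1\le s\le\varepsilon m$, and small $s$ needs no special treatment.

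If the lossy bound had been the best available, your first bullet would not have repaired it. The exponent $(\beta d-1)s-d$ is negative for all $s<d/(\beta d-1)$, e.g.\ $s\le5$ when $d=12$ and $\beta=1/4$, not only for $s=1$. Your second bullet does work, though it is unnecessary: view each matching as a uniform bijection $U\to V\cup\{\bot\}$, so each contributes at most $((t+1)/(m+1))^{s}$.

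Finally, after collapsing parallel edges, each $v\in V$ receives at most one edge from $u^*$. So $\deg(v)\le d+1$ deterministically, and the resampling remark can be dropped.
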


The following lemma is straightforward, though to our knowledge it has not explicitly appeared in the literature. A close analogue appears as Theorem~13(ii) of Atserias and Bonet~\cite{ab04}, but they only considered extension variables representing conjunctions, rather than arbitrary Boolean functions. The proof itself is essentially the same as the proof of Lemma~4.3 given by Ben-Sasson and Nordstr\"om in~\cite{bn11}, although the statement of the lemma below is different from what they conclude.

\begin{lemma}[Width transfer]
\label{lem:width-transfer}
Let $\Phi$ be an unsatisfiable CNF and $F$ a set of Boolean functions. Let
\[
\Phi^{\mathrm{ext}}=\Phi\wedge\bigwedge_{f \in F}\operatorname{Def}(Z_f)
\]
be an extension of $\Phi$ with extension variables $Z_f$, where $\operatorname{Def}(Z_f)$ is a CNF expressing the constraint $Z_f=f(\vv{x}_f)$. If each tuple $\vv{x}_f$ consists of at most $\Delta$ variables of $\Phi$, then every resolution refutation of $\Phi^{\mathrm{ext}}$ has width at least $\frac{w(\Phi\vdash\bot)}{2\Delta}$.
\end{lemma}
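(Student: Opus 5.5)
Given a resolution refutation $\pi$ of $\Phi^{\mathrm{ext}}$ of width $w$, we will transform it into a refutation of $\Phi$ of width at most $2\Delta w$; rearranging gives the bound. The plan follows the substitution idea of Ben-Sasson and Nordstr\"om. Replace every literal over an extension variable by a small CNF or DNF over the original variables, and show that the image of each clause of $\pi$ can be derived with bounded width.

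\textbf{Step 1: the translation.} For each clause $C$ of $\pi$, let $\mathrm{vars}(C)$ be the set of $\Phi$-variables it mentions, together with the variables $\vv{x}_f$ for each $Z_f$ or $\neg Z_f$ occurring in $C$. Then $|\mathrm{vars}(C)|\le \Delta|C|\le\Delta w$.

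Define the Boolean function $\widehat C$ on $\mathrm{vars}(C)$ by substituting $f(\vv{x}_f)$ for $Z_f$. Since every $f$ depends only on its own at most $\Delta$ variables, $\widehat C$ is a well-defined function on at most $\Delta w$ variables. Let $\widehat{\mathcal C}$ be its CNF expansion, namely the set of clauses over $\mathrm{vars}(C)$ falsified exactly by the falsifying assignments of $\widehat C$. Each such clause has width at most $\Delta w$.

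Two cases are immediate. If $C$ contains only $\Phi$-variables, then $\widehat{\mathcal C}\ni C$ after weakening, and in fact $C$ itself is implied. If $C$ is a clause of some $\operatorname{Def}(Z_f)$, then $\widehat C$ is a tautology and needs no derivation.

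\textbf{Step 2: the inductive claim.} We show, by induction along $\pi$, that every clause of $\widehat{\mathcal C}$ has a resolution derivation from $\Phi$ of width at most $2\Delta w$.

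For an axiom of $\Phi$, the clauses of $\widehat{\mathcal C}$ are weakenings of $C$. We can avoid weakening altogether by deriving subclauses, since a subclause is just as good in the final refutation.

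For a resolution step deriving $C=A\vee B$ from $A\vee y$ and $B\vee\neg y$, fix a clause $D\in\widehat{\mathcal C}$, falsified by some total assignment $\alpha$ to $\mathrm{vars}(C)$. Let $W=\mathrm{vars}(A\vee y)\cup\mathrm{vars}(B\vee\neg y)$, which has size at most $2\Delta w$.

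Consider all extensions $\beta$ of $\alpha$ to $W$. Under $\alpha$ both $\widehat A$ and $\widehat B$ are false, so under each $\beta$ exactly one of the premises' translations is falsified, according to the value of $\hat y$. That premise's translated CNF therefore contains a clause falsified by $\beta$ restricted to the premise's variables. By induction, this clause has a derivation of width at most $2\Delta w$.

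We resolve these clauses along a decision tree over the variables $W\setminus\mathrm{vars}(C)$, equivalently by successively resolving away those extra variables. This yields a subclause of $D$. Every intermediate clause lives over $W$, so its width is at most $2\Delta w$.

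\textbf{Step 3: the conclusion.} The empty clause translates to the empty clause, so we obtain a refutation of $\Phi$ of width at most $2\Delta w$. Hence $w\ge w(\Phi\vdash\bot)/(2\Delta)$.

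The main obstacle is Step 2, specifically the bookkeeping of which clause from which premise is used for each $\beta$. The cleanest approach is to think in terms of the semantic fact that $D$, restricted to $W$, is implied by the translated premise clauses over $W$. Then we invoke completeness on a formula with at most $|W|$ variables, which gives a derivation of width at most $|W|\le 2\Delta w$ of a subclause of $D$. This avoids the explicit decision-tree construction.
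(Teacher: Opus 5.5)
Your proposal is correct and is essentially the paper's proof: substitute $f(\vv{x}_f)$ for each $Z_f$, and expand each translated clause into a CNF over at most $\Delta w$ variables. Soundness together with implicational completeness over the at most $2\Delta w$ variables of the two premises then fills in each resolution step with clauses of width at most $2\Delta w$, and your explicit subclause bookkeeping and the remark that translated $\operatorname{Def}(Z_f)$ clauses become tautologies just make precise what the paper leaves implicit.
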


\begin{proof}
Let $\pi^{\mathrm{ext}}$ be a resolution refutation of $\Phi^{\mathrm{ext}}$ of width $w$. We obtain a refutation $\pi$ of the CNF $\Phi$ of width at most $2\Delta w$ as follows. In each clause of $\pi^{\mathrm{ext}}$, replace each extension variable $Z_f$ by the Boolean function $f(\vv{x}_f)$. This turns each clause into a disjunction of at most $w$ Boolean functions of at most $\Delta$ variables of $\Phi$. Then replace each of these disjunctions by an equivalent CNF over its at most $\Delta w$ variables.

Since resolution is sound, in each step of $\pi^{\mathrm{ext}}$, the CNF of its resolvent is implied by the CNFs of its two premises. We obtain the refutation $\pi$ by using implicational completeness to fill in the intermediate resolution steps between these CNFs. As the two premise CNFs, together, have at most $2\Delta w$ variables, these intermediate steps have width at most $2\Delta w$.
\end{proof}

\begin{theorem}[The star-local extension remains hard]
\label{thm:star-local-hard}
Let $H$ be a bipartite expander given by Theorem~\ref{thm:iss-graphs}. Then $\SPM(H)$ requires resolution refutations of size $2^{\Omega(m)}$.
\end{theorem}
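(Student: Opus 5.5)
The plan is to combine three ingredients: the width lower bound of Itsykson, Slabodkin, and Sokolov for $\PM(H)$, the width transfer Lemma~\ref{lem:width-transfer}, and the size--width tradeoff Theorem~\ref{thm:bw}.

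\textbf{Step 1 (width lower bound for $\PM(H)$).} First I will record that for the expander $H$ of Theorem~\ref{thm:iss-graphs}, every resolution refutation of $\PM(H)$ has width $\Omega(m)$. This is the width bound proved in~\cite{iss,ioss} for boundary expanders. If it has to be rederived, I would use the standard Ben-Sasson--Wigderson style argument:
\begin{itemize}
\item For a clause $C$, define its measure $\mu(C)$ as the minimum size of a set $S\subseteq U$ of ``pigeon'' constraints that, together with the hole constraints, implies $C$.
\item Check that $\mu$ is subadditive under resolution, that it is at most $1$ on axioms, and that it exceeds $\varepsilon m$ on $\bot$. The last point uses Hall's theorem plus the existence of a matching covering $V$.
\item Choose a clause $C$ with $\varepsilon m/2\le \mu(C)\le \varepsilon m$, and let $S$ be a minimal witness set.
\item Use the boundary $\delta(S)$, of size at least $\Gamma|S|$, to argue that $C$ must mention an edge at each of $\Omega(|S|)$ boundary vertices. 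Otherwise, minimality of $S$ yields an assignment that extends a partial matching through an unmentioned boundary vertex and falsifies $C$, which is a contradiction.
\end{itemize}
Since the degree is bounded by $\Delta$, this gives width $\Omega(m)$. I expect this step to be the main technical obstacle if it is not simply cited. The delicate part is the extension argument, which needs both the boundary-expansion property and the covering matching to ensure that the partial matchings can be completed.

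\textbf{Step 2 (transfer to $\SPM(H)$).} I apply Lemma~\ref{lem:width-transfer} with $\Phi=\PM(H)$ and $F$ the set of all pairs $(w,f)$ with $f:\{0,1\}^{E(w)}\to\{0,1\}$. Each $Z_{w,f}$ depends on the tuple $X_{E(w)}$, which has at most $\Delta$ variables. Hence $w(\SPM(H)\vdash\bot)\ge \Omega(m)/(2\Delta)=\Omega(m)$, because $\Delta$ is an absolute constant. Note also that $\SPM(H)$ is unsatisfiable, since it contains $\PM(H)$.

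\textbf{Step 3 (size).} Observe that $\SPM(H)$ has width at most $\Delta+1=O(1)$, provided each $\operatorname{Def}(Z_{w,f})$ is written as the canonical CNF over its at most $\Delta+1$ variables. Its number of variables is $N\le |E|+(2m+1)2^{2^\Delta}=O(m)$. Applying Theorem~\ref{thm:bw} gives
\[
\sizeRes(\SPM(H))\ge\exp\bigl(\Omega((\Omega(m)-O(1))^2/m)\bigr)=2^{\Omega(m)}.
\]
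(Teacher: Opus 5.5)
Your proposal is correct and follows the paper's proof exactly. You cite the $\Omega(m)$ width lower bound for $\PM(H)$ from~\cite{iss}, transfer it to $\SPM(H)$ via Lemma~\ref{lem:width-transfer} using the constant degree bound $\Delta$, and finish with Theorem~\ref{thm:bw} using the constant width and $O(m)$ variables of $\SPM(H)$. The paper simply cites Theorem~1.1 of the full version of~\cite{iss} for the width bound, so your fallback re-derivation sketch in Step~1 is not needed.
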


\begin{proof}
Theorem~1.1 of the full version of~\cite{iss} shows a resolution width lower bound for $\PM(H)$ of $\Omega(m)$. The maximum degree $\Delta$ of the graph $H$ is constant, so Lemma~\ref{lem:width-transfer} transfers this to a width lower bound for $\SPM(H)$ of $\Omega(m)$.
Since $\SPM(H)$ has $O(m)$ variables and constant width, the size--width tradeoff (Theorem~\ref{thm:bw}) gives a size lower bound of $2^{\Omega(m)}$.
\end{proof}

Our reduction will also need the following observation, which follows immediately from the implicational completeness of resolution.

\begin{lemma}[Local consequences have constant-size derivations]
	\label{lem:star-local-derivations}
	For each vertex $w \in U \cup V$, let $F_w$ denote the conjunction of the clauses of $\SPM(H)$ associated with $w$,
	\[
	F_w
	=
	\operatorname{ExactOne}\bigl(X_{E(w)}\bigr)
	\;\wedge\;
	\bigwedge_{f:\{0,1\}^{E(w)}\to\{0,1\}}\operatorname{Def}(Z_{w,f}),
	\]
	so that $\SPM(H)=\bigwedge_{w\in U\cup V}F_w$.

	Let $W\subseteq U\cup V$. If a clause $C$ is implied by
	$\bigwedge_{w\in W}F_w$, then either $C$ is a tautology, or
	some subclause $C' \subseteq C$ has a resolution derivation from
	$\bigwedge_{w\in W}F_w$ whose size depends only on $|W|$ and on $\Delta$.
\end{lemma}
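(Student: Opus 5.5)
The argument is a counting argument combined with implicational completeness, as the paper's remark before the lemma suggests. I would bound the number of variables that can matter, then apply completeness to that bounded set of variables.

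First I bound the variable set of $\Phi_W:=\bigwedge_{w\in W}F_w$. Each star $E(w)$ has at most $\Delta$ edges, so $F_w$ mentions at most $\Delta$ edge variables $X_e$. It also mentions at most $2^{2^\Delta}$ extension variables $Z_{w,f}$, one for each Boolean function $f$ on the star. Hence $\Phi_W$ has at most $N_W:=|W|(\Delta+2^{2^\Delta})$ variables. It also has a number of clauses bounded by a function of $|W|$ and $\Delta$, provided each $\operatorname{Def}(Z_{w,f})$ is a fixed CNF over at most $\Delta+1$ variables, as assumed for $\SPM(H)$.

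Next I discard the irrelevant part of $C$. Let $C$ be nontautological and implied by $\Phi_W$. Let $C_0\subseteq C$ consist of the literals of $C$ whose variables occur in $\Phi_W$. I claim $\Phi_W\models C_0$. Suppose not, and take an assignment $\alpha$ to the variables of $\Phi_W$ that satisfies $\Phi_W$ and falsifies $C_0$. Extend $\alpha$ by setting every remaining variable of $C$ so that its literal in $C$ is false. This is consistent because $C$ is not a tautology. The extension satisfies $\Phi_W$ and falsifies $C$, a contradiction. Thus $C_0$ is a nontautological clause over at most $N_W$ variables that is implied by $\Phi_W$.

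Finally I derive a subclause of $C_0$ with bounded size. The cleanest route is to argue semantically instead of relying on a size-agnostic completeness statement. Since $\Phi_W\wedge\neg C_0$ is unsatisfiable, the tree-like resolution refutation obtained from a decision tree over the at most $N_W$ variables of $\Phi_W$ has size at most $2^{N_W+1}$. This is the standard DPLL-to-resolution argument: one resolution step per internal node, and each leaf labeled by a falsified clause of $\Phi_W$ or a unit literal from $\neg C_0$. Now delete the unit clauses of $\neg C_0$ from this refutation, weakening each line by the corresponding literals of $C_0$. This turns it into a derivation of some $C'\subseteq C_0\subseteq C$ from $\Phi_W$ alone. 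Some steps become unnecessary after deletion and are skipped.

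The one subtle point is this skipping. A resolution on a variable of $C_0$ whose premise came from a deleted unit clause just copies the other premise. So the size never increases, and the final clause remains a subclause of $C_0$ rather than the empty clause. The resulting size bound is $2^{O(|W|\cdot 2^{2^\Delta})}$, which depends only on $|W|$ and $\Delta$, as required.
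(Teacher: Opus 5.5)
Your argument is correct and is essentially the paper's. The paper only remarks that the lemma follows from implicational completeness. The size bound is implicit there: $\bigwedge_{w\in W}F_w$ has at most $N_W=|W|(\Delta+2^{2^\Delta})$ variables, so a derivation without repeated lines has at most $4^{N_W}$ clauses. This also makes your passage to $C_0$ automatic, since such a derivation never mentions variables outside $\Phi_W$. You instead make the bound explicit by converting a decision tree for $\Phi_W\wedge\neg C_0$ into a tree-like refutation and stripping out the unit clauses of $\neg C_0$, which is a valid, self-contained route to the same conclusion.
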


\section{Overview of the reduction}
\label{sec:outline}

Throughout the reduction, we assume that $\Assoc_n$ uses a strip-local multiplier encoding $\Mul_n$ (Definition~\ref{def:strip-local}).
At a high level, the reduction from $\SPM(H)$ to $\Assoc_n$ works as follows. We begin by constructing a restriction $\rho$ of $\Assoc_n$ such that the remaining clauses $\Assoc_n\!\upharpoonright_{\rho}$
form a circuit encoding of the perfect-matching constraints. Because this encoding contains circuit variables that are not present in the perfect-matching formula $\PM(H)$, the original
perfect-matching lower bound does not immediately transfer to this restricted formula $\Assoc_n\!\upharpoonright_{\rho}$. This is why we introduced the star-local extension $\SPM(H)$. Each
clause of the restricted formula $\Assoc_n\!\upharpoonright_{\rho}$ has a constant-size resolution derivation from $\SPM(H)$ (after substituting star-local extension variables in place of the clause's wire variables). Therefore a resolution refutation of $\Assoc_n$ yields a resolution refutation of $\SPM(H)$ with only a constant-factor increase
in size. The exponential lower bound for $\SPM(H)$ (Theorem~\ref{thm:star-local-hard}) thereby transfers to $\Assoc_n$.
Figure~\ref{fig:refutation-shape} shows the shape of the composed refutation.

\begin{figure}[ht]
\centering
\includegraphics[width=0.75\textwidth]{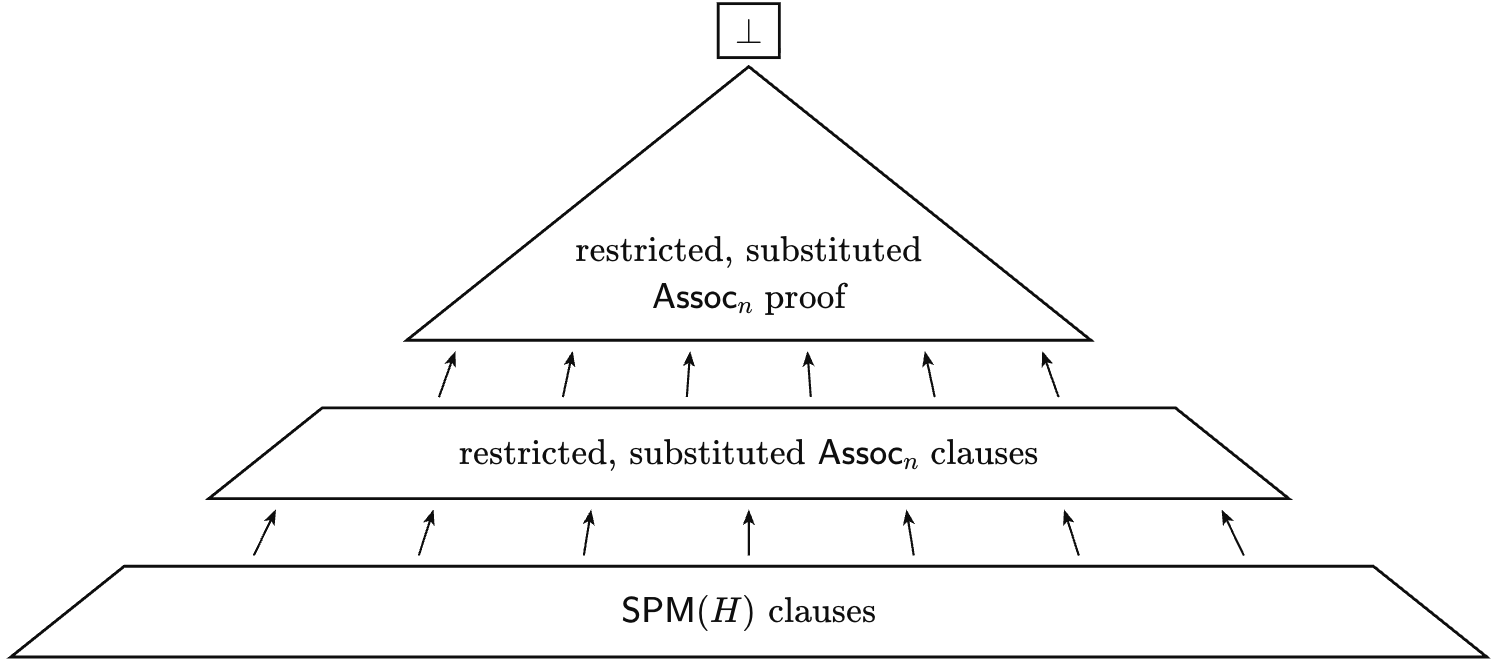}
\caption{The refutation of $\SPM(H)$, the star-local extension of perfect matching. This refutation uses $\SPM(H)$ to derive a subclause of each clause of the restricted, substituted
$\Assoc_n$ in a constant number of steps. The restricted, substituted refutation of $\Assoc_n$ then derives $\bot$ from these clauses.}
\label{fig:refutation-shape}
\end{figure}

\subsection{Reduction of star-local perfect matching to \texorpdfstring{$\Assoc_n$}{Assoc\_n}}
\label{sec:outline-reduction}

Suppose that $\pi$ is a resolution refutation of $\Assoc_n$ of size $s$. We convert $\pi$ into a refutation of $\SPM(H)$ of size $O(s)$ in
three steps.

\begin{itemize}
\item \textbf{(Restrict)} We construct a restriction $\rho$ that leaves one input bit of $\vv{y}$ unset for each edge $e\in E$ and
identifies it with the perfect-matching edge variable $X_e$. It sets the remaining input bits of $\vv{x}, \vv{y}, \vv{z}$ to constants $0$ or
$1$. The restriction $\rho$ also sets some of the output bits of the multipliers $\vv{x}\vv{y}$ and $\vv{y}\vv{z}$, and it sets one of the
miter variables $e_K=1$ at a fixed position $K$, satisfying the long miter clause $e_0\vee\cdots\vee e_{4n-1}$.

Under the restriction $\rho$, the two
inner multipliers of $\Assoc_n$ become encodings, with auxiliary variables, of the
perfect-matching constraints on the edge variables $X_e$ (Lemma~\ref{lem:inner-pm}). More precisely, under $\rho$ every wire computes a Boolean function of the edge variables, so an assignment $\alpha$ to the
edge variables $X_e$ propagates to an assignment $\bar\alpha$ of all remaining variables (Definition~\ref{def:propagation}). The propagated
assignment $\bar\alpha$ satisfies the restricted encoding $\vv{x}\vv{y}\!\upharpoonright_{\rho}$ if and only if the original edge variable assignment $\alpha$ satisfies
$\operatorname{ExactOne}(X_{E(v)})$ for every $v\in V$. Similarly, $\bar\alpha$ satisfies $\vv{y}\vv{z}\!\upharpoonright_{\rho}$ if and only if $\alpha$
satisfies $\operatorname{ExactOne}(X_{E(u)})$ for every $u\in U$ (Lemma~\ref{lem:inner-pm}).

Applying the restriction $\rho$ to the proof $\pi$ yields a refutation $\pi\!\upharpoonright_\rho$ of the restricted associativity formula
$\Assoc_n\!\upharpoonright_\rho$. 

The restricted formula $\Assoc_n\!\upharpoonright_\rho$ contains no multiplier input variables other than the edge variables $X_e$. Its
remaining variables are wires, which the next step eliminates.

\item \textbf{(Substitute)} This restriction $\rho$ will also ensure that every wire computes some function
$f(X_{E(w)})$ of the edge variables of a single star $E(w)$, $w \in U \cup V$ (Lemma~\ref{lem:locality}). We can therefore substitute each
wire throughout the refutation $\pi\!\upharpoonright_\rho$ by the corresponding extension variable $Z_{w,f}$.
After this substitution, every clause $C$ of the restricted associativity
formula $\Assoc_n\!\upharpoonright_\rho$ is implied by the clauses of $\SPM(H)$ associated
with a set of stars covering $C$'s variables (Lemma~\ref{lem:local-soundness}). Since each clause $C$ of $\Assoc_n\!\upharpoonright_\rho$ has constant width, its variables can be covered by a constant number of stars.

\item \textbf{(Compose)} Finally, we obtain a refutation of the star-local extension $\SPM(H)$ as follows. Using Lemma~\ref{lem:star-local-derivations}, we derive a subclause
of each restricted, substituted $\Assoc_n$ clause from the clauses of $\SPM(H)$ in only a constant number of steps. Then we
apply the restricted, substituted refutation of $\Assoc_n$ to refute the derived clauses.
\end{itemize}

\subsection{Outline of the restriction \texorpdfstring{$\rho$}{rho}}
\label{sec:outline-restriction}

We will define a $g$-sparse restriction $\rho$ in terms of three maps. We map each vertex $u \in U$ to an $\vv{x}$-bit-position $i(u)$, each edge $e \in E$ to a $\vv{y}$-bit-position $j(e)$, and each vertex $v
\in V$ to a $\vv{z}$-bit-position $k(v)$. For each vertex $u \in U$ and $v \in V$, the restriction $\rho$ sets $x_{i(u)} = 1$ and $z_{k(v)}
= 1$. For each edge $e \in E$, the restriction $\rho$ leaves $y_{j(e)}$ unrestricted and identifies it with the edge variable $X_e$. It
restricts all remaining bits of $\vv{x}$, $\vv{y}$, and $\vv{z}$ to $0$. Additionally, $\rho$ restricts some of the output bits of the
inner multipliers $\vv{x}\vv{y}$ and $\vv{y}\vv{z}$, together with one miter variable, as we will specify below.

\begin{figure}[htbp]
\centering
\includegraphics[width=0.75\textwidth]{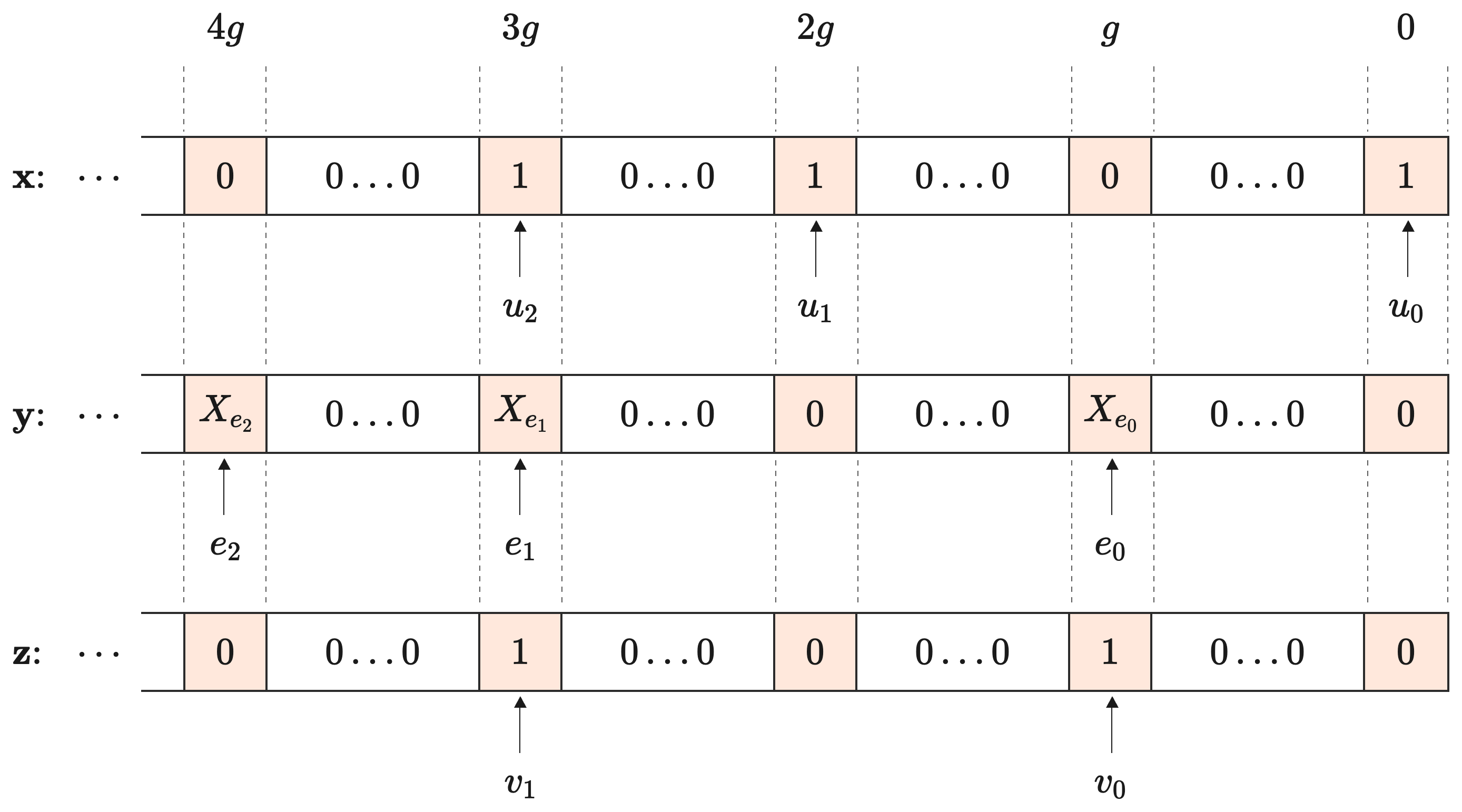}
\caption{The restricted input bit-vectors. Each vertex $u\in U$ maps to a $1$ in $\vv{x}$ at position $i(u)$, each vertex $v\in V$ maps to a $1$
in $\vv{z}$ at position $k(v)$, and each edge $e\in E$ maps to an unrestricted variable $y_{j(e)}$ in $\vv{y}$ that we identify with the
edge variable $X_e$. Each position is a multiple of $g$, and every other bit is fixed to $0$.}
\label{fig:bit-strings}
\end{figure}

Lemma~\ref{lem:indices} constructs the maps $i(u)$, $j(e)$, $k(v)$ with special collision properties.
In this outline we will use the fact that for an edge
$e=(u_e,v_e)$, this construction sets
\[
j(e)=K-i(u_e)-k(v_e),
\]
where $K$ is a fixed multiple of $g$. Under $\rho$, the four multipliers behave as follows.
\begin{itemize}
\item \textbf{($g$-columns)} For each vertex and edge, the values $i(u), j(e), k(v)$ are multiples of the strip width
$g=\lceil\log(m+2)\rceil+2$ (Lemma~\ref{lem:indices}). This ensures that the nonzero partial
products in each of the four multipliers occur only in $g$-columns
(Definition~\ref{def:g-strip}). Figure~\ref{fig:bit-strings} shows the restricted input vectors $\vv{x}$, $\vv{y}$, and $\vv{z}$. Moreover, every $g$-column will contain fewer than $2^{g-1}$ of these nonzero partial products, making $\rho$
a $g$-sparse restriction of each of the four multipliers (Lemma~\ref{lem:rho-g-sparse}). By strip-locality, every wire is a
Boolean function of the partial products appearing in a single $g$-strip.

\item \textbf{(Inner multipliers)} Each vertex $v \in V$ is associated with a unique $g$-column of the multiplier $\vv{x}\vv{y}$, whose index is
$\col_{\vv{xy}}(v) := K - k(v)$. This column contains the partial products that, under $\rho$, equal the variables $X_e$ for edges $e$ of the star $E(v)$. So the effect of the multiplier circuit $\vv{x}\vv{y}$ is to constrain the output of the $g$-strip based at $\col_{\vv{xy}}(v)$
to equal the sum $\sum_{e \in E(v)} X_e$ represented in binary.

For each $v\in V$, the restriction $\rho$ fixes the output of the
$g$-strip based at $\col_{\vv{xy}}(v)$ to the binary representation of
$1$. This enforces $\sum_{e\in E(v)}X_e=1$, which is equivalent to the constraint $\operatorname{ExactOne}(X_{E(v)})$. Figure~\ref{fig:inner-multipliers} illustrates the effect of the restriction $\rho$ in the multiplier $\vv{x}\vv{y}$.

Symmetrically, each $u \in U$ is associated with a unique $g$-column of the multiplier $\vv{y}\vv{z}$, whose index is
$\col_{\vv{yz}}(u) := K - i(u)$, and which contains the variables $X_e$ for edges $e$ of the star $E(u)$. The restriction $\rho$ fixes the
output of the $g$-strip based at $\col_{\vv{yz}}(u)$ to the binary representation of $1$, enforcing $\operatorname{ExactOne}(X_{E(u)})$.

\begin{figure}[htbp]
\centering
\includegraphics[width=0.9\textwidth]{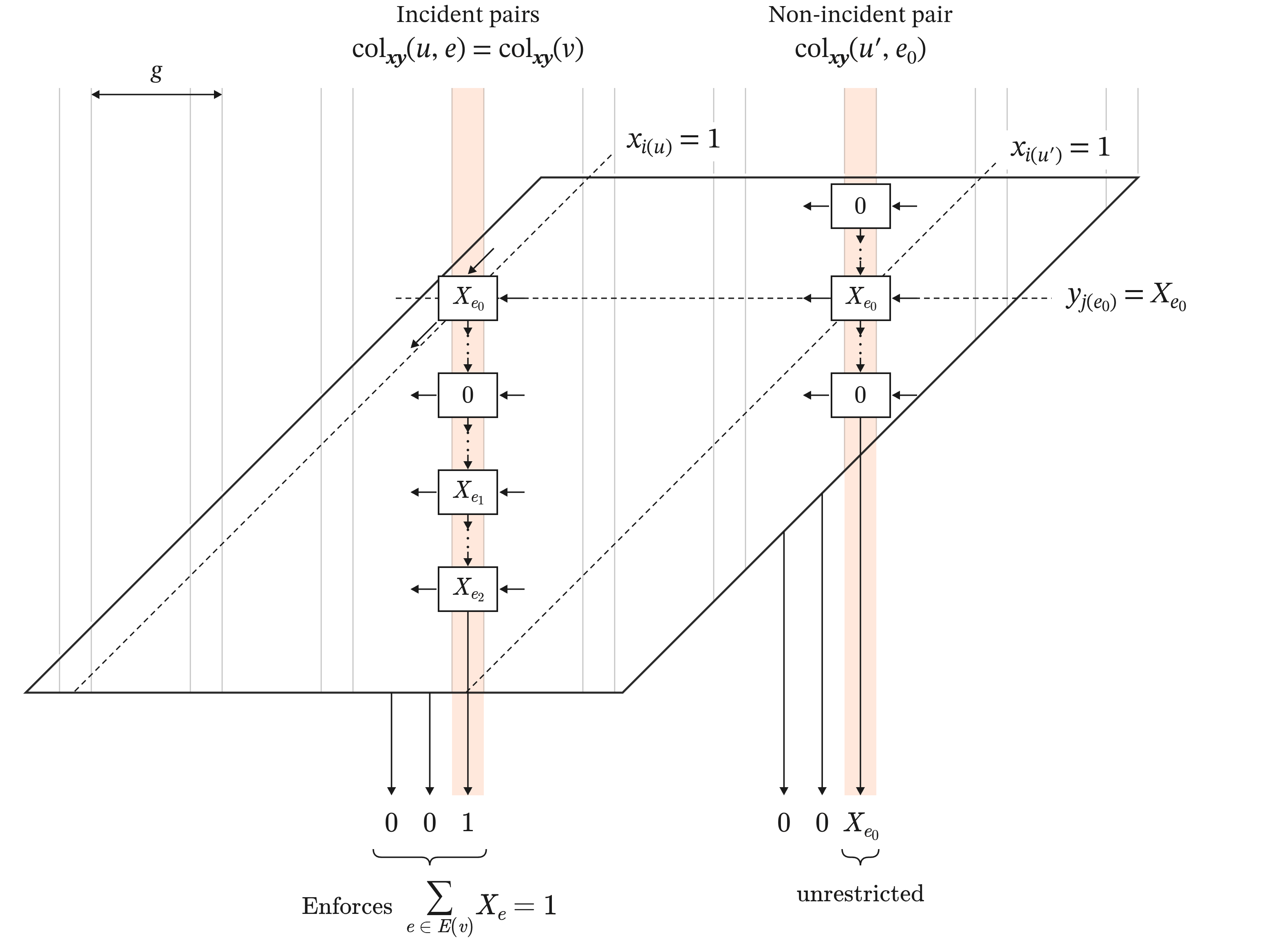}
\caption{An inner array multiplier $\vv{x}\vv{y}$ under the restriction $\rho$. Incident pairs $(u_e,e)$ with $e\in E(v)$ map to the common column $\col_{\vv{xy}}(u_e,e)=\col_{\vv{xy}}(v)$, so this column contains the
partial products $X_e$ for all edges $e\in E(v)$. The restriction $\rho$ fixes the output of this column's $g$-strip to the binary representation of $1$, enforcing
$\sum_{e\in E(v)}X_e=1$. Each non-incident pair $(u,e)$ maps to its own column $\col_{\vv{xy}}(u,e)$, so the output of this column is $X_e$.}
\label{fig:inner-multipliers}
\end{figure}

\item \textbf{(Outer multipliers)} Column $K$ of the outer multiplier $(\vv{x}\vv{y})\vv{z}$ contains exactly one $1$ for each $v \in V$, as shown in Figure~\ref{fig:outer-multiplier}: the restriction $\rho$ sets the output bit of $\vv{x}\vv{y}$ at index $\col_{\vv{xy}}(v)=K-k(v)$ to $1$, and multiplying this output by
$z_{k(v)}=1$ contributes a $1$ to column $K$. The remaining partial products in column $K$ are $0$ since $\rho$ restricts every other bit of $\vv{z}$ to $0$.

Therefore column $K$ of
$(\vv{x}\vv{y})\vv{z}$ contains exactly $|V|=m$ ones, so the output of the $g$-strip based at
$K$ is forced to the binary representation of $m$ (Observation~\ref{obs:strip-counts}).
Symmetrically, column $K$ of the outer multiplier $\vv{x}(\vv{y}\vv{z})$ contains a $1$ for each $u
\in U$, so the output of the $g$-strip based at $K$ is forced to the binary representation of $m+1$.
This means that the outer multipliers disagree on their $K$-th output bit, so $\rho$ sets the miter variable $e_K = 1$ to satisfy the long
miter clause $e_0 \vee \cdots \vee e_{4n-1}$. This completes the description of $\rho$.

\item \textbf{(Other $g$-columns)} In all four multipliers, each remaining $g$-column not covered above contains edge variables from at most one star (Figures~\ref{fig:inner-multipliers} and~\ref{fig:outer-star-local}). Therefore every wire in the corresponding $g$-strips is a Boolean function of a single star.
\end{itemize}

\begin{figure}[t]
\centering
\includegraphics[width=0.9\textwidth]{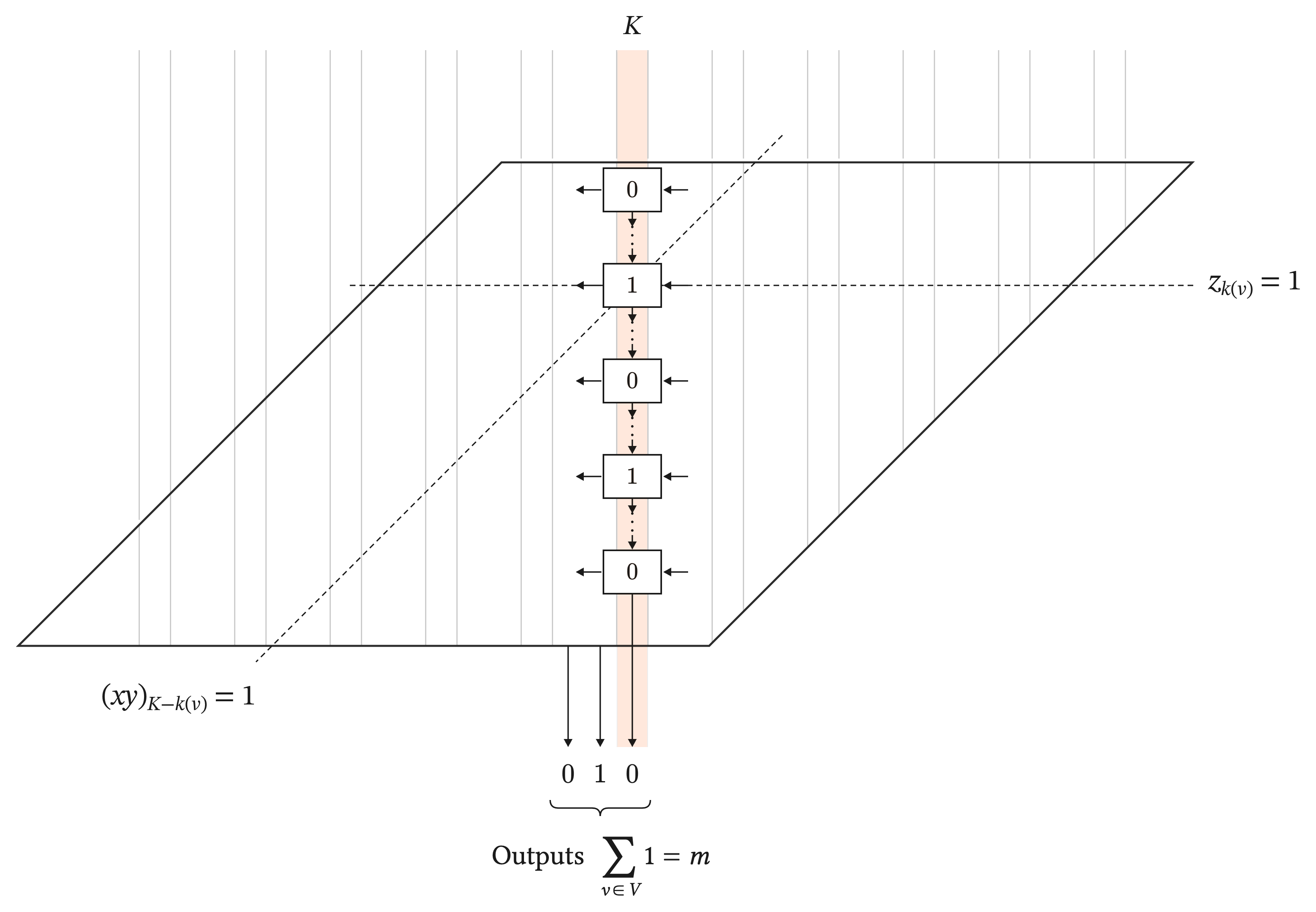}
\caption{An outer array multiplier $(\vv{x}\vv{y})\vv{z}$ under the restriction $\rho$. Column $K$ contains a constant $1$ for each $v\in V$, which comes from the output bit of $\vv{x}\vv{y}$ at $\col_{\vv{xy}}(v)$ multiplied by $z_{k(v)}=1$. Every other partial product in
the column is $0$, so the $g$-strip based at $K$ outputs the binary representation of $m$. In the other outer multiplier
$\vv{x}(\vv{y}\vv{z})$, the same $g$-strip outputs $m+1$.}
\label{fig:outer-multiplier}
\end{figure}

\section{Construction of the restriction \texorpdfstring{$\rho$}{rho}}
\label{sec:rho}

In this section we build the index maps embedding the bipartite graph $H=(U,V,E)$ into the input bit-vectors $\vv{x},\vv{y},\vv{z}$. We will then use these maps to define our $g$-sparse restriction $\rho$.

\subsection{Construction of the index maps \texorpdfstring{$i(u)$, $j(e)$, $k(v)$}{i(u), j(e), k(v)}}
\label{sec:embedding}

We build
three maps: $i:U\to\N$ for $\vv{x}$-bit-positions, $j:E\to\N$ for $\vv{y}$-bit-positions, and $k:V\to\N$ for $\vv{z}$-bit-positions. Recall
from Section~\ref{sec:outline-restriction} that the restriction $\rho$ will set $x_{i(u)}=1$ and $z_{k(v)}=1$, identify each $y_{j(e)}$ with
the edge variable $X_e$, and restrict all remaining bits of $\vv{x}$, $\vv{y}$, and $\vv{z}$ to $0$.

Thus $u\in U$, $e\in E$, and $v\in V$ index bits of $\vv{x}$,
$\vv{y}$, and $\vv{z}$ respectively. Pairs and triples of them point to
partial products in the multipliers: the pair $(u,e)$ points to the partial product
$x_{i(u)}y_{j(e)}$ of the inner multiplier $\vv{x}\vv{y}$, in column
$
i(u)+j(e),
$
the pair $(e,v)$ points to $y_{j(e)}z_{k(v)}$ of $\vv{y}\vv{z}$, in
column $j(e)+k(v)$, and the triple $(u,e,v)$
points to an outer partial product in each of $(\vv{x}\vv{y})\vv{z}$
and $\vv{x}(\vv{y}\vv{z})$, both in the same column
$
i(u)+j(e)+k(v).
$
We call $(u,e)$, $(e,v)$, and $(u,e,v)$ \emph{incident} when the vertices are endpoints of $e$.
Accordingly, we define maps from pairs and triples to columns.
\begin{definition}[Columns]
	\label{def:columns}
	For maps $i:U\to\N$, $j:E\to\N$, and $k:V\to\N$ and
	$u\in U$, $e\in E$, and $v\in V$, let
	\[
	\col_{\vv{xy}}(u,e):=i(u)+j(e),\qquad
	\col_{\vv{yz}}(e,v):=j(e)+k(v),
	\]
	\[
	\col_{(\vv{x}\vv{y})\vv{z}}(u,e,v):=i(u)+j(e)+k(v),\qquad
	\col_{\vv{x}(\vv{y}\vv{z})}(u,e,v):=i(u)+j(e)+k(v).
	\]
\end{definition}

We build the maps $i(u)$, $j(e)$, $k(v)$ from a Golomb ruler, which prevents unintended collisions when we map vertices and edges to $g$-strips.

\begin{lemma}[Directed Golomb ruler]
\label{lem:golomb}
For every $m\ge1$, set $M=32m^2$. There are $m+1$ integers $s_0,\ldots,s_m\in[0,M)$ called \emph{marks} such that
\[
s_a-s_b=s_c-s_d\ne0\quad\Longrightarrow\quad (a,b)=(c,d).
\]
In other words, every nonzero directed difference is produced by a unique ordered pair of marks.
\end{lemma}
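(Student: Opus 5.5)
The plan is to use the classical Erdős–Turán construction of Sidon sets from quadratic residues modulo a prime. A set with all nonzero directed differences distinct is exactly a Sidon set, that is, a set with distinct pairwise sums. Indeed, if $s_a-s_b=s_c-s_d\neq 0$, then $s_a+s_d=s_c+s_b$. Distinctness of sums (as multisets $\{a,d\}=\{c,b\}$) then forces either $(a,d)=(c,b)$, or $a=b$ and $c=d$. The first gives $(a,b)=(c,d)$, and the second contradicts the difference being nonzero. So it suffices to produce $m+1$ integers in $[0,32m^2)$ whose pairwise sums $s_a+s_d$ (with $a\le d$) are distinct.

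For the construction, Bertrand's postulate gives a prime $p$ with $m+1\le p< 2(m+1)$; for $m\ge 1$ this means $p\le 2m+1$. Set $s_a = 2pa + (a^2 \bmod p)$ for $a=0,\dots,m$. Then $s_a< 2p(m+1)\le 2(2m+1)(m+1)\le 12m^2 < M$, which holds comfortably for $m\ge1$.

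For the Sidon property, suppose $s_a+s_d=s_c+s_b$ with $a,b,c,d\in[0,m]$. Write $r_a=a^2\bmod p\in[0,p)$. The residues satisfy $r_a+r_d\in[0,2p)$. Reducing the equation modulo $2p$ gives $r_a+r_d=r_c+r_b$ exactly as integers, and hence also $2p(a+d)=2p(c+b)$, so $a+d=b+c$. From the residues we get $a^2+d^2\equiv b^2+c^2 \pmod p$. Since all indices are below $p$, the sum equation is exact. Combining $a+d=b+c=:\sigma$ with the squares congruence and $ad = (\sigma^2-(a^2+d^2))/2$ gives $ad\equiv bc\pmod p$ (dividing by $2$ is fine if $p$ is odd; the case $p=2$, i.e.\ $m=1$, is handled by hand, since two distinct marks trivially work). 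Then $\{a,d\}$ and $\{b,c\}$ are both the root sets of $t^2-\sigma t+ad$ over $\mathbb{F}_p$. Because all values lie in $[0,p)$, they are distinct residues, so the multisets coincide. Distinctness of the marks themselves follows from the $2pa$ term.

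The main obstacle is only the bookkeeping in the step that splits the equation into the high part and the residue part. One must check that no carries occur, which is guaranteed because $r_a+r_d<2p$ while the high parts are multiples of $2p$. The constant $32$ leaves plenty of slack, so the bound $M=32m^2$ is not tight and needs no care.
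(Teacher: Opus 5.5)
Your proof is correct and follows the paper's approach: the same Erd\H{o}s--Tur\'an marks $s_a=2pa+(a^2 \bmod p)$, and the same carry-free split of the equation into its multiple-of-$2p$ part and its residue part. The only difference is cosmetic. You pass to the equivalent Sidon (sum) formulation and finish by noting that $t^2-\sigma t+ad$ determines its roots over $\mathbb{F}_p$, whereas the paper stays with differences, gets $a-b=c-d=h\neq 0$, and cancels $h$ from $h(a+b)\equiv h(c+d)\pmod p$.
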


\begin{proof}
Use the Erd\H{o}s--Tur\'an construction~\cite{erdos-turan}. Choose an odd prime $m+1<p<2(m+1)$ and set
\[
s_r=2pr+(r^2\bmod p),\qquad 0\le r<p.
\]
Since $s_r<2p^2<8(m+1)^2\le32m^2=M$, every mark lies in the required range. Take the first $m+1$ marks $s_0,\ldots,s_m$. Put $\tau_r:=r^2\bmod p\in\{0,\ldots,p-1\}$. If
$s_a-s_b=s_c-s_d\ne0$, then
\[
2p((a-b)-(c-d))=(\tau_c-\tau_d)-(\tau_a-\tau_b).
\]
The right side has absolute value $<2p$, while the left side is a multiple of $2p$, so $a-b=c-d=:h$. Reducing the original equality modulo
$p$ gives $h(a+b)\equiv h(c+d)$. Since the difference is nonzero, $h\ne0$ and $|h|<p$, so cancel $h$ to get $a+b\equiv c+d\pmod p$. Adding
and subtracting this congruence with $a-b=c-d$ gives $2a\equiv2c$ and $2b\equiv2d\pmod p$. Because $p$ is odd, $2$ is invertible, hence
$a\equiv c$ and $b\equiv d\pmod p$. As $a,b,c,d\in\{0,\ldots,p-1\}$, we have $a=c$ and $b=d$.
\end{proof}

\begin{lemma}[Construction of index maps]
	\label{lem:indices}
Let $H=(U,V,E)$ be a bipartite graph with $|U|=m+1$ and $|V|=m$, and set
\[
g=\lceil\log(m+2)\rceil+2.
\]
Then there is an integer $n_0=O(m^4\log m)$ as well as maps
\[
i:U\to\N,\qquad j:E\to\N,\qquad k:V\to\N
\]
with the following properties.
\begin{enumerate}[label=(\roman*)]
\item \textbf{Mapping pattern.} The maps $i$, $j$, and $k$ map injectively to multiples of $g$. Moreover, $i(u)+j(e)+k(v)<n_0$ for all $u\in U$, $e\in E$, and $v\in V$.
\item \textbf{Inner collision pattern.}
\begin{enumerate}[label=(\alph*)]
\item For each vertex $v\in V$ the
incident pairs $(u_e,e)$ with edge $e\in E(v)$ map to a common column $\col_{\vv{xy}}(u_e,e)$, denoted
$\col_{\vv{xy}}(v)$. Distinct vertices $v\in V$ map to distinct columns $\col_{\vv{xy}}(v)$. Each non-incident pair $(u,e)$ occupies its own column $\col_{\vv{xy}}(u,e)$.
\item For each vertex $u\in U$ the
incident pairs $(e,v_e)$ with edge $e\in E(u)$ map to a common column $\col_{\vv{yz}}(e,v_e)$, denoted
$\col_{\vv{yz}}(u)$. Distinct vertices in $U$ map to distinct columns $\col_{\vv{yz}}(u)$. Each non-incident pair $(e,v)$ occupies its own column $\col_{\vv{yz}}(e,v)$.
\end{enumerate}
\item \textbf{Outer collision pattern.} Define the \emph{difference pair} of a triple
$(u,e,v)\in U\times E\times V$ as
\[
\bigl(\,i(u)-i(u_e),\;k(v)-k(v_e)\,\bigr).
\]
The column $\col_{(\vv{x}\vv{y})\vv{z}}(u,e,v)$ depends only on the difference pair of the
triple, and distinct difference pairs map to distinct columns. Furthermore:
\begin{enumerate}[label=(\alph*)]
\item The triples with difference pair $(0,0)$ are exactly the incident triples
$(u_e,e,v_e)$, $e\in E$. We denote their common column by $K$.
\item Each nonzero difference determines its ordered pair of vertices: if
$i(u)-i(u_e)=i(u')-i(u_{e'})\ne0$ then $(u,u_e)=(u',u_{e'})$, and if
$k(v)-k(v_e)=k(v')-k(v_{e'})\ne0$ then $(v,v_e)=(v',v_{e'})$.
\end{enumerate}
\end{enumerate}
\end{lemma}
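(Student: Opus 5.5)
The plan is to build all three maps from the directed Golomb ruler of Lemma~\ref{lem:golomb}, using two separate "scales" so that an $\vv{x}$-offset and a $\vv{z}$-offset can always be read off from their sum, and then to define $j$ so that every incident triple lands in one column $K$. Throughout we use that $H$ is simple, so an edge is determined by its endpoints. Let $M=32m^2$ and let $s_0,\dots,s_m\in[0,M)$ be the marks of Lemma~\ref{lem:golomb}. Fix bijections $a:U\to\{0,\dots,m\}$ and $b:V\to\{0,\dots,m-1\}$, and a scale factor $L=2M+1$. Define
\[
i(u)=g\,s_{a(u)},\qquad k(v)=gL\,s_{b(v)},\qquad K=g(M+LM),\qquad j(e)=K-i(u_e)-k(v_e).
\]

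\textbf{Decoding property.} The basic tool is the following uniqueness claim: any integer of the form $gL\alpha+g\beta$ with $|\beta|<M$, or more generally $|\beta|<L/2$, determines $\alpha$ and $\beta$. This holds because $L>2M$. Everything else is a reading-off argument using this claim.

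\textbf{Property (i).} All values are multiples of $g$. The map $i$ is injective because the marks are distinct, and the same argument applies to $k$. The map $j$ is injective because $j(e)$ determines $s_{a(u_e)}$ and $s_{b(v_e)}$ by decoding, hence determines $u_e$ and $v_e$, hence determines $e$. Nonnegativity follows from
\[
j(e)\ge g\bigl(M-s_{a(u_e)}\bigr)+gL\bigl(M-s_{b(v_e)}\bigr)>0 .
\]
Every sum $i(u)+j(e)+k(v)$ is below $3K$, and $3K=O(gLM)=O(m^4\log m)$, so we may take $n_0=3K$.

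\textbf{Property (iii).} We have
\[
\col_{(\vv{x}\vv{y})\vv{z}}(u,e,v)=K+\bigl(i(u)-i(u_e)\bigr)+\bigl(k(v)-k(v_e)\bigr),
\]
so the column depends only on the difference pair. The first coordinate is $g\beta$ with $|\beta|<M$, and the second is a multiple of $gL$. By decoding, distinct difference pairs give distinct columns. For (a), the difference pair is $(0,0)$ exactly when $u=u_e$ and $v=v_e$, by injectivity of $i$ and $k$; that is, exactly when the triple is incident. Part (b) is precisely the directed Golomb property applied to the marks.

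\textbf{Property (ii)(a).} Here
\[
\col_{\vv{xy}}(u,e)=K-k(v_e)+\bigl(i(u)-i(u_e)\bigr).
\]
Decoding recovers $k(v_e)$, and hence $v_e$, together with the difference $i(u)-i(u_e)$. Incident pairs have difference $0$, so they land in the column $K-k(v_e)$. This column is common to the star $E(v)$, and distinct $v$ give distinct columns. For a non-incident pair the difference is nonzero, so it cannot collide with any incident column. Moreover, by the Golomb property the difference determines the ordered pair $(u,u_e)$. Together with $v_e$ this determines $e$ and $u$, so each non-incident pair occupies its own column.

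\textbf{Property (ii)(b).} This is symmetric. We have
\[
\col_{\vv{yz}}(e,v)=K-i(u_e)+\bigl(k(v)-k(v_e)\bigr).
\]
The term $K-i(u_e)$ equals $gLM+g(M-s_{a(u_e)})$, whose low part lies in $[0,gM]$, and the added difference is a multiple of $gL$. So decoding recovers both $u_e$ and the $\vv{z}$-difference. The argument then repeats with the Golomb property applied to $k$.

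\textbf{Main obstacle.} The main care is in checking that the decoding claim applies uniformly to all three column formulas. In particular, the offsets arising from $K$, from $i(u_e)$ and from differences of $i$-values must all fit in a window smaller than $L$. That is exactly why $L=2M+1$ is chosen: it gives enough room while keeping $n_0=O(m^4\log m)$.
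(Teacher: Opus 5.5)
Your proposal is correct and follows essentially the same route as the paper: Golomb-ruler marks scaled by $g$ for $U$ and by $g(2M+1)$ for $V$, with $j(e)=K-i(u_e)-k(v_e)$, and base-$(2M+1)$ uniqueness to read the two differences off each column formula. The only differences are cosmetic constants (your $K=gM(L+1)$ and $n_0=3K$ versus the paper's $K=2g(B+1)M$ and $n_0=4K$); both keep $j$ nonnegative and give $n_0=O(m^4\log m)$.
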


Before giving the proof, we describe what properties (ii) and (iii) will mean for the $g$-strips of the four multipliers under $\rho$. Property
(ii) ensures that in the inner multipliers every $g$-column holds either the edge variables of a single star $E(u)$ or $E(v)$, or at most one edge variable $X_e$ (Figure~\ref{fig:inner-multipliers}). Since $\rho$ will fix the output of each single-star strip to the binary representation of $1$
(Section~\ref{sec:outline-restriction}), the inner multipliers then enforce exactly the constraints
$\operatorname{ExactOne}(X_{E(v)})$ and $\operatorname{ExactOne}(X_{E(u)})$, for $v\in V$ and $u\in U$.

Property (iii) assigns each difference pair to a unique $g$-column at the same position in both outer multipliers. This serves two purposes.
First, by (iii)(a) the partial products lying in the $g$-column for the difference pair $(0,0)$ are exactly those corresponding to incident triples
$(u_e,e,v_e)$. As
described in Section~\ref{sec:outline-restriction}, the restriction $\rho$ forces these partial products to constants, which then forces the strip at $K$ to output $m$ in
$(\vv{x}\vv{y})\vv{z}$ and $m+1$ in $\vv{x}(\vv{y}\vv{z})$ (Figure~\ref{fig:outer-multiplier}).
Second, (iii)(b) ensures that each remaining outer multiplier $g$-column contains edge variables from at most one star (Figure~\ref{fig:outer-star-local}). For example, consider the case of a $g$-column corresponding to the difference pair
$(d, 0)$ with $d \ne 0$. The difference $d$ is produced by a unique
ordered pair $(u, u_e)$ with $d = i(u) - i(u_e)$, so only edge variables $X_e$ for edges $e \in E(u_e)$ can appear
in this $g$-column. The same reasoning applies to every $g$-column corresponding to a non-$(0,0)$ difference pair.

\begin{figure}[ht]
\centering
\includegraphics[width=0.95\textwidth]{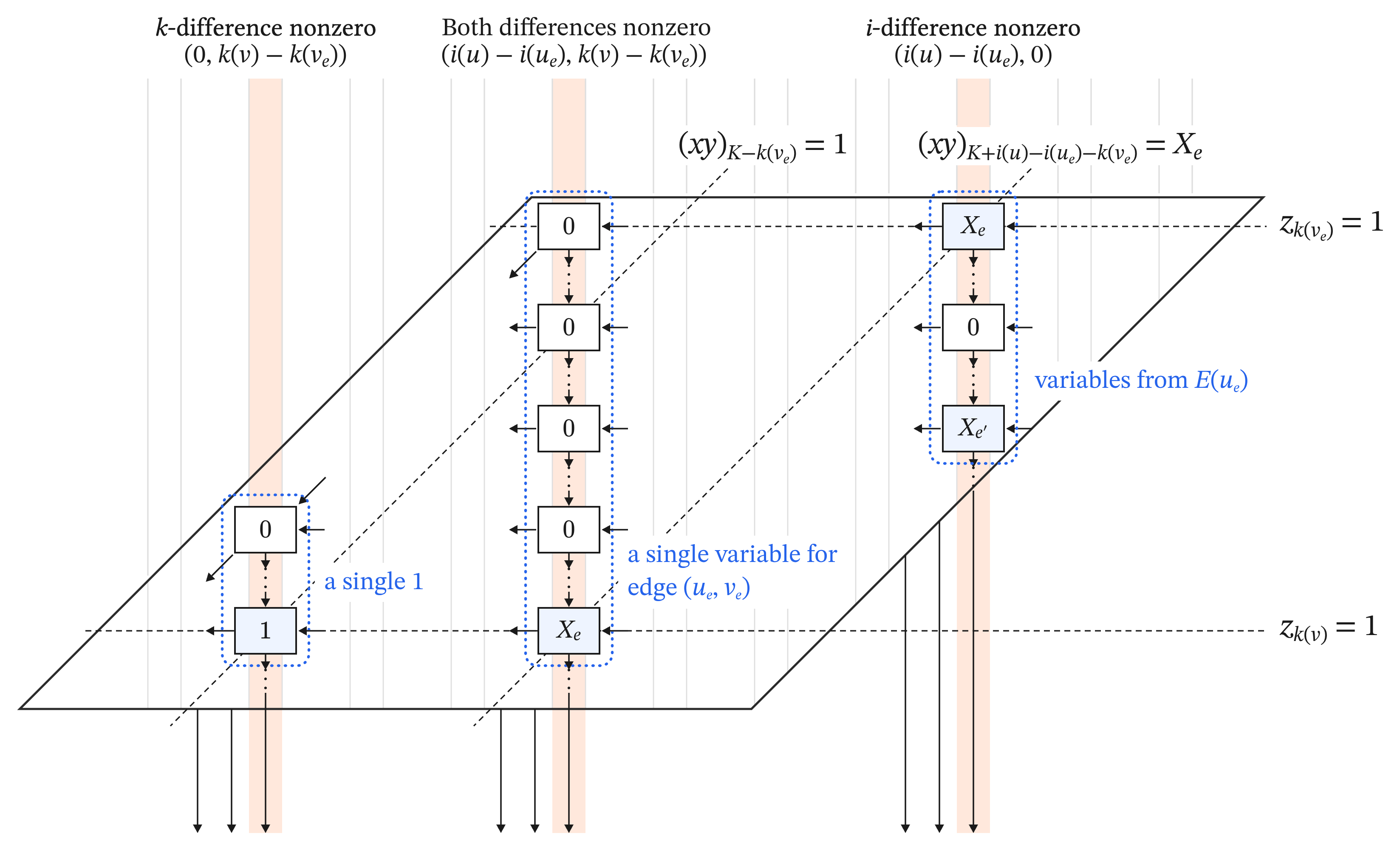}
\caption{Aside from column $K$, there are three kinds of $g$-columns of the outer multiplier $(\vv{x}\vv{y})\vv{z}$ under the restriction $\rho$. In each case, the wires inside the corresponding $g$-strip are functions of the Boolean variables from a single star.}
\label{fig:outer-star-local}
\end{figure}

\begin{proof}
With $M=32m^2$, let $s_0,\ldots,s_m\in[0,M)$ be the marks given by Lemma~\ref{lem:golomb}. Choose injections of $U$ and $V$ separately into these marks,
and write $s_u$ and $s_v$ for the marks assigned to $u\in U$ and $v\in V$. To obtain property (iii), we want the outer
column
\[
\col_{(\vv{x}\vv{y})\vv{z}}(u,e,v)=i(u)+j(e)+k(v)
\]
to be uniquely determined by the two directed differences $s_u-s_{u_e}$ and $s_v-s_{v_e}$. Since both differences have magnitude $<M$, set
\[
B=2M+1.
\]
Then the two-digit base-$B$ value $a+Bb$, for $|a|,|b|<M$, uniquely determines its digits. We refer to this as \emph{base-$B$ uniqueness}.
We therefore scale the $U$-marks by $g$ and the $V$-marks by $gB$, and build $j(e)$ so that $\col_{(\vv{x}\vv{y})\vv{z}}$ is the same column
for all incident triples $(u_e,e,v_e)$. To this end, set
\begin{equation}
	\label{eq:coords}
	i(u)=gs_u,\qquad k(v)=gBs_v,\qquad j(e)=K-i(u_e)-k(v_e)
\end{equation}
where we choose the constant $K=2g(B+1)M$ to make $j(e)$ non-negative. Set $n_0:=4K$. With these
parameters, $n_0=O(m^4\log m)$.

\emph{Proof of (i).} It is clear from the construction that each $i(u), j(e), k(v)$ is a multiple of $g$. The maps $i$ and $k$ are injective because the marks $s_u$ and $s_v$ are assigned injectively. For the map $j$, since
\[
j(e)=K-g(s_{u_e}+Bs_{v_e}),
\]
base-$B$ uniqueness shows that $j(e)=j(e')$ implies $u_e=u_{e'}$ and $v_e=v_{e'}$, hence $e=e'$.
Finally, $i(u)+j(e)+k(v)=K+g(s_u-s_{u_e})+gB(s_v-s_{v_e})$ exceeds $K$ by less than $g(B+1)M=K/2$, so it is less than $3K/2<n_0$.

\emph{Proof of (ii).} For the maps $i(u)$ and $j(e)$ as defined in \eqref{eq:coords}, we have
\begin{equation}
\label{eq:left-col}
 \col_{\vv{xy}}(u,e)= i(u) + j(e) = K-k(v_e)+g(s_u-s_{u_e}).
\end{equation}
Thus, for each $v\in V$, the incident pairs $(u_e,e)$ with $e\in E(v)$ share the column $\col_{\vv{xy}}(v):=K-k(v)$, and these columns
are distinct for distinct $v$ because $k$ is injective. Each non-incident pair $(u,e)$ occupies its own
column because of base-$B$ uniqueness and the fact that the difference $s_u-s_{u_e}$ is attained by a unique pair $(u, u_e)$
(Lemma~\ref{lem:golomb}). This proves (ii)(a).

Symmetrically,
\begin{equation}
\label{eq:right-col}
 \col_{\vv{yz}}(e,v)=K-i(u_e)+gB(s_v-s_{v_e}),
\end{equation}
and the same argument proves (ii)(b).

\emph{Proof of (iii).} For the maps $i(u), j(e), k(v)$ defined in \eqref{eq:coords}, we have
\begin{align*}
\col_{(\vv{x}\vv{y})\vv{z}}(u,e,v)&=K+\bigl(i(u)-i(u_e)\bigr)+\bigl(k(v)-k(v_e)\bigr)\\
&=K+g(s_u-s_{u_e})+gB(s_v-s_{v_e}).
\end{align*}
The first line shows that the column depends only on the difference pair of the triple.
The second line shows, by base-$B$ uniqueness, that two columns $\col_{(\vv{x}\vv{y})\vv{z}}(u,e,v)$ and $\col_{(\vv{x}\vv{y})\vv{z}}(u',e',v')$ are equal if and only
if they agree on both differences: $s_u - s_{u_e} = s_{u'} - s_{u_{e'}}$ and $s_v-s_{v_e} = s_{v'}-s_{v_{e'}}$. Hence, distinct difference pairs map to distinct columns.

For (a), a triple has difference pair $(0,0)$ if and only if
$u=u_e$ and $v=v_e$, that is, the triple is incident.

For (b), suppose that $i(u)-i(u_e)=i(u')-i(u_{e'})\ne0$. Then
$s_u-s_{u_e}=s_{u'}-s_{u_{e'}}\ne0$, so by Lemma~\ref{lem:golomb} we must have $u=u'$ and $u_e =
u_{e'}$. The case of a nonzero $k$-difference is symmetric.
\end{proof}

The rest of the argument uses only the three properties of Lemma~\ref{lem:indices}.

\subsection{The restriction \texorpdfstring{$\rho$}{rho}}
\label{sec:restricted}

Using the maps $i(u), j(e), k(v)$ provided by Lemma~\ref{lem:indices}, we now give the full description of the restriction $\rho$. We then prove
that it is $g$-sparse, and that it forces the outputs of the outer multipliers to disagree at column $K$.

\begin{definition}[The restriction $\rho$]
\label{def:rho}
Let the maps $i(u)$, $j(e)$, $k(v)$ be as in Lemma~\ref{lem:indices}, let $g=\lceil\log(m+2)\rceil+2$ be the strip width, and let $K$ be
the column given by Lemma~\ref{lem:indices}(iii).
\begin{enumerate}[label=(\alph*)]
\item \textbf{(Inputs)} For each vertex $u\in U$ and $v\in V$, the restriction $\rho$ sets
$x_{i(u)}=1$ and $z_{k(v)}=1$. For each edge $e\in E$, $\rho$ leaves the bit $y_{j(e)}$
unset and identifies it with the edge variable $X_e$; it restricts all
remaining bits of $\vv{x}$, $\vv{y}$, and $\vv{z}$ to $0$.
\item \textbf{(Inner multiplier outputs)} Recall that Lemma~\ref{lem:indices}(ii) provides well-defined and injective column maps
$\col_{\vv{xy}}(v)$ for $v\in V$ and $\col_{\vv{yz}}(u)$ for $u\in U$.
For each $v\in V$, the restriction $\rho$ fixes the output of the $g$-strip of $\vv{x}\vv{y}$ with base $\col_{\vv{xy}}(v)$ to the binary representation of
$1$ ($\rho$ sets the output bit in column $\col_{\vv{xy}}(v)$ to $1$ and the other $g-1$ output bits to $0$), and for each $u\in U$, it
fixes the output of the $g$-strip of $\vv{y}\vv{z}$ with base $\col_{\vv{yz}}(u)$ to the binary representation of $1$. The restriction $\rho$ leaves the output bit in column $\col_{\vv{xy}}(u,e)$ unset for each non-incident pair $(u,e)$, and sets every other output
bit of $\vv{x}\vv{y}$ to $0$. Symmetrically, $\rho$ leaves the output bit in column
$\col_{\vv{yz}}(e,v)$ unset for each non-incident pair $(e,v)$, and sets every other output bit of $\vv{y}\vv{z}$ to $0$.
\item \textbf{(Miter)} The restriction $\rho$ sets the miter variable $e_K=1$.
\end{enumerate}
\end{definition}

\begin{lemma}[$\rho$ is a $g$-sparse restriction]
\label{lem:rho-g-sparse}
The following hold.
\begin{enumerate}[label=(\roman*)]
\item Every partial product of $\vv{x}\vv{y}$ not forced to $0$ lies in a $g$-column. Additionally, for each $v\in V$, the $g$-column $\col_{\vv{xy}}(v)$
contains exactly the partial products
\[
x_{i(u_e)}y_{j(e)}=X_e,\qquad e\in E(v),
\]
and every other $g$-column contains at most one partial product not forced to $0$; symmetrically for $\vv{y}\vv{z}$ with the $g$-columns
$\col_{\vv{yz}}(u)$, $u\in U$.
\item For each of the four multipliers, $\rho$ is a $g$-sparse restriction of its input bit-vectors.
\end{enumerate}
\end{lemma}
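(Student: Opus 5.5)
The plan is to verify both parts directly from the definition of $\rho$ (Definition~\ref{def:rho}), using only the three properties of Lemma~\ref{lem:indices}. Throughout, the numerical fact to keep in mind is that $g=\lceil\log(m+2)\rceil+2$ gives $2^{g-1}\ge 2(m+2)$. So it suffices to show that every $g$-column contains at most $\max(m+1,\Delta)$ partial products not forced to $0$. This quantity is below $2(m+2)$ because $\Delta\le |U|=m+1$ (the graph is simple), or simply because $\Delta$ is constant and $m$ is large.

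\emph{Part (i), inner multipliers.} Under $\rho$, a partial product $x_a y_b$ of $\vv{x}\vv{y}$ is not forced to $0$ only if $a=i(u)$ and $b=j(e)$ for some $u\in U$ and $e\in E$. Its column is then $\col_{\vv{xy}}(u,e)$, which is a multiple of $g$ by Lemma~\ref{lem:indices}(i). Lemma~\ref{lem:indices}(ii)(a) then gives the exact classification.
\begin{itemize}
\item Incident pairs $(u_e,e)$ with $e\in E(v)$ all land in $\col_{\vv{xy}}(v)$, and these columns are distinct for distinct $v$.
\item Each non-incident pair lands in its own column.
\end{itemize}
Two things remain to check. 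First, no non-incident pair shares a column with some $\col_{\vv{xy}}(v)$; this should follow from the same base-$B$ uniqueness argument, since a non-incident pair has a nonzero $U$-difference while incident pairs have difference $0$. Second, the partial product $x_{i(u_e)}y_{j(e)}$ equals $X_e$ because $\rho$ sets $x_{i(u_e)}=1$. The argument for $\vv{y}\vv{z}$ is symmetric via (ii)(b). This gives (i). It also gives $g$-sparsity of the inner multipliers, since each column has at most $\Delta$ or at most $1$ surviving partial product.

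\emph{Part (ii), outer multipliers.} I will treat $(\vv{x}\vv{y})\vv{z}$; the other outer multiplier is symmetric. A partial product $(\vv{x}\vv{y})_c\, z_b$ survives only if $b=k(v)$ for some $v$ and $(\vv{x}\vv{y})_c$ is not set to $0$ by $\rho$. By Definition~\ref{def:rho}(b), the second condition means either $c=\col_{\vv{xy}}(v')$, where the bit is fixed to $1$, or $c=\col_{\vv{xy}}(u,e)$ for a non-incident pair, where the bit is unset. The zero-padding of $\vv{z}$ contributes only zeros. The key step is to rewrite each surviving product as a triple column.
\begin{itemize}
\item In the first case, $v'$ has some edge $e$ because $H$ has no isolated vertices. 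Then $c+k(v)=K-k(v')+k(v)=\col_{(\vv{x}\vv{y})\vv{z}}(u_e,e,v)$, with difference pair $(0,k(v)-k(v'))$.
\item In the second case, $c+k(v)=\col_{(\vv{x}\vv{y})\vv{z}}(u,e,v)$, with a difference pair whose first coordinate is nonzero.
\end{itemize}
Either way the column is a multiple of $g$, so all surviving partial products lie in $g$-columns.

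\emph{Counting per column.} By Lemma~\ref{lem:indices}(iii), a column determines its difference pair $(a,b)$, and I will split on whether $a$ is zero.
\begin{itemize}
\item If $a=0$, only first-case products occur. For $b\ne0$, property (iii)(b) pins down $(v,v')$ uniquely. For $b=0$ we get column $K$ with exactly one product per $v\in V$, that is $m$ products. In $\vv{x}(\vv{y}\vv{z})$ the analogous count is $m+1$.
\item If $a\ne0$, only second-case products occur. Property (iii)(b) fixes $(u,u_e)$, which leaves at most $\Delta$ choices of $e\in E(u_e)$. Given $e$, the vertex $v$ is determined by $k(v)=k(v_e)+b$ and injectivity of $k$.
\end{itemize}
Hence every column holds at most $\max(m+1,\Delta)<2^{g-1}$ surviving products, which proves $g$-sparsity.

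I expect the main obstacle to be this outer-multiplier bookkeeping. The inputs there are wires of the inner multipliers, partially fixed by $\rho$, rather than free bits. The delicate step is correctly translating "not forced to $0$" into triples, and making sure that the first-case and second-case products never share a column and so cannot inflate a count.
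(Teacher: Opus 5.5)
Your proof is correct, and for part~(i) and the inner-multiplier half of part~(ii) it follows the paper's argument. The real difference is the per-column count for the outer multipliers.

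The paper never invokes Lemma~\ref{lem:indices}(iii) here. It first places every surviving partial product of $(\vv{x}\vv{y})\vv{z}$ in a column $\col_{\vv{xy}}(u,e)+k(v)$, which is a $g$-column by Lemma~\ref{lem:indices}(i). It then notes that a surviving partial product in a fixed column $c$ has the form $(\vv{x}\vv{y})_{c-k(v)}\,z_{k(v)}$. Such a product is therefore determined by its $\vv{z}$-factor, so each column holds at most $|V|=m$ of them, and symmetrically at most $|U|=m+1$ in $\vv{x}(\vv{y}\vv{z})$. This pigeonhole on the constant factor makes two of your steps unnecessary for this lemma: the split into fixed-to-$1$ versus unset bits of $\vv{x}\vv{y}$, and the worry that the two kinds of products might share a column. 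The ``main obstacle'' you anticipated simply does not arise.

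What your difference-pair route buys is a finer picture of $(\vv{x}\vv{y})\vv{z}$:
column $K$ carries exactly $m$ constant ones;
a column with difference pair $(0,b)$, $b\ne0$, carries a single constant product;
a column with first difference $a\ne0$ carries only edge variables from the single star $E(u_e)$.
This is exactly the case analysis the paper postpones to Lemma~\ref{lem:locality}, so you have effectively front-loaded part of that proof.

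A minor point: bound the inner-multiplier counts by the actual star sizes (at most $m+1$), as the paper does. Asserting $\Delta\le|U|$ is not justified, since $\Delta$ is only a fixed constant upper bound on the degree.
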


\begin{proof}
(i) Consider the inner multiplier $\vv{x}\vv{y}$. A partial product $x_iy_j$ is not forced to $0$ by the restriction $\rho$ if and only if $x_i$ is
set to $1$ and $y_j$ is not set to $0$, that is, $i=i(u)$ and $j=j(e)$ for some $u\in U$ and $e\in E$. The partial product then computes
$x_{i(u)}y_{j(e)}=X_e$ and lies in column $\col_{\vv{xy}}(u,e)$, which is a $g$-column by Lemma~\ref{lem:indices}(i). By Lemma~\ref{lem:indices}(ii)(a), the pairs $(u,e)$ with $\col_{\vv{xy}}(u,e)=\col_{\vv{xy}}(v)$ are exactly the incident pairs
$(u_e,e)$ with $e\in E(v)$, while every non-incident pair occupies a $g$-column of its own. Hence the $g$-column $\col_{\vv{xy}}(v)$ contains exactly the partial products $X_e$, $e\in E(v)$, and every other $g$-column contains at
most one partial product $X_e$. The argument for $\vv{y}\vv{z}$ is symmetric.

(ii) Since $2^{g-1}\ge2(m+2)>m+1$, it suffices to show that the partial products of each multiplier
not forced to $0$ lie in $g$-columns, with at most $m+1$ such partial products per $g$-column. For the
inner multipliers this follows from (i), since every star of $H$ has at most $m+1$ edges.

Now consider the outer multiplier $(\vv{x}\vv{y})\vv{z}$. The restriction $\rho$ sets each of the bits in $\vv{z}$ to $0$ except for the bits $z_{k(v)}$, and sets each of the bits in $\vv{x}\vv{y}$ to $0$ except for those at the columns $\col_{\vv{xy}}(u,e)$. So every partial
product of $(\vv{x}\vv{y})\vv{z}$ not forced to $0$ lies in a column of the form $\col_{\vv{xy}}(u,e)+k(v)=\col_{(\vv{x}\vv{y})\vv{z}}(u,e,v)$. This is a $g$-column by
Lemma~\ref{lem:indices}(i). Within a given column of $(\vv{x}\vv{y})\vv{z}$, distinct partial products have distinct factors $z_{k(v)}$ and hence correspond to distinct $v\in V$. So each $g$-column contains at most $|V|=m$ partial products not forced to $0$. The same holds for the $g$-columns of
$\vv{x}(\vv{y}\vv{z})$ with the $m+1$ vertices $u \in U$.
\end{proof}

\begin{lemma}[$\rho$ forces disagreement]
\label{lem:rho-outer-outputs}
Under the restriction $\rho$, the outer multiplier strips at base $K$ are forced to output the binary representation of $m$ in $(\vv{x}\vv{y})\vv{z}$ and of $m+1$ in $\vv{x}(\vv{y}\vv{z})$. In particular, the output bits at column $K$ disagree.
\end{lemma}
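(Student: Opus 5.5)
We would prove this by computing exactly which partial products of each outer multiplier lie in column $K$ under $\rho$, and then invoking Observation~\ref{obs:strip-counts}. By Lemma~\ref{lem:rho-g-sparse}(ii), $\rho$ is a $g$-sparse restriction for both outer multipliers. Also, $K=2g(B+1)M$ is a multiple of $g$, so $K$ is the base of a $g$-strip. Observation~\ref{obs:strip-counts} then says that the $g$ output bits of that strip equal, in binary, the number of partial products in column $K$ that evaluate to $1$. It therefore suffices to show two things: under $\rho$, column $K$ of $(\vv{x}\vv{y})\vv{z}$ contains exactly $m$ partial products forced to $1$ and none that are unset; and column $K$ of $\vv{x}(\vv{y}\vv{z})$ contains exactly $m+1$ such products. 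We also need $m+1<2^{g}$, which holds since $g\ge\log(m+2)+2$. The output bit in column $K$ is then the low bit of $m$ versus the low bit of $m+1$, and these differ.

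For $(\vv{x}\vv{y})\vv{z}$, a partial product $(\vv{x}\vv{y})_a z_b$ with $a+b=K$ is not forced to $0$ only if $z_b$ is unset or $1$, that is, $b=k(v)$ for some $v$. Its other factor must be an output bit of $\vv{x}\vv{y}$ that $\rho$ does not set to $0$. By Definition~\ref{def:rho}(b), such bits are either the base bit $\col_{\vv{xy}}(v')=K-k(v')$, which is set to $1$, or a bit at $\col_{\vv{xy}}(u,e)$ for a non-incident pair, which is left unset. We split into these two cases.

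\textbf{Case 1.} If $a=K-k(v')$ with $a+k(v)=K$, then $k(v')=k(v)$, so $v'=v$ by injectivity of $k$ (Lemma~\ref{lem:indices}(i)). This yields exactly one product equal to $1\cdot1$ for each $v\in V$, giving $m$ in total.

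\textbf{Case 2.} If $a=\col_{\vv{xy}}(u,e)$ for a non-incident pair, then the column equals $\col_{(\vv{x}\vv{y})\vv{z}}(u,e,v)$. By Lemma~\ref{lem:indices}(iii), this column is $K$ only if the difference pair is $(0,0)$, and by (iii)(a) that forces $u=u_e$. This contradicts non-incidence, so no such partial product exists.

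Hence column $K$ has exactly $m$ ones and no free variables. The argument for $\vv{x}(\vv{y}\vv{z})$ is symmetric, using $\col_{\vv{yz}}(u)=K-i(u)$, the injectivity of $i$, and the bits $x_{i(u)}=1$, and it gives $m+1$ ones.

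The step needing the most care is checking that the column-$K$ bits of the inner multipliers' base strips are the only surviving contributions. In particular, we must confirm that the zero-set bits (the higher $g-1$ bits of each single-star strip, and all other outputs) genuinely force their partial products to $0$. We must also confirm that $\col_{\vv{xy}}(v)$ coincides with the incident-pair columns, so that Case 2 is exhaustive. Both facts come directly from Definition~\ref{def:rho}(b) together with Lemma~\ref{lem:indices}(ii)(a) and (iii). Finally, we note that the "forced" claim concerns assignments satisfying the multiplier clauses, since the restricted CNF fixes the wires via exactness. Observation~\ref{obs:strip-counts} therefore applies to the restricted inputs regardless of the remaining $X_e$.
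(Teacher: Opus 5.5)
Your proposal is correct and follows the paper's proof: count the column-$K$ partial products under $\rho$ to get exactly $m$ (resp.\ $m+1$) constant ones, then apply Observation~\ref{obs:strip-counts}, which is valid because Lemma~\ref{lem:rho-g-sparse}(ii) makes $\rho$ $g$-sparse. Your Case~2 is harmless but unnecessary: for each $b=k(v)$, the only column-$K$ partial product containing $z_b$ is $(\vv{x}\vv{y})_{K-k(v)}z_{k(v)}$, and $K-k(v)=\col_{\vv{xy}}(v)$ is an output bit that $\rho$ sets to $1$.
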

\begin{proof}
We claim that the $g$-column $K$ of the multiplier $(\vv{x}\vv{y})\vv{z}$ contains exactly $m$ ones, one for each $v\in V$. Indeed, for each $v\in V$, the restriction $\rho$ sets the output bit of $\vv{x}\vv{y}$ at $\col_{\vv{xy}}(v)$ and the input bit
$z_{k(v)}$ to $1$. Their product is a partial product lying in column $\col_{\vv{xy}}(v)+k(v)=K$, and by
Lemma~\ref{lem:indices}(ii),(iii) it is the only partial product in column $K$ containing $z_{k(v)}$. Every other
partial product in column $K$ is forced to $0$, since $\rho$ restricts every other bit of $\vv{z}$ to $0$. By Lemma~\ref{lem:rho-g-sparse}(ii), $\rho$ is a $g$-sparse restriction of the inputs of the multiplier $(\vv{x}\vv{y})\vv{z}$, hence Observation~\ref{obs:strip-counts} applies and the strip at base $K$ is forced to output the binary
representation of $m$ (see Figure~\ref{fig:outer-multiplier}). Symmetrically, column $K$ of the multiplier $\vv{x}(\vv{y}\vv{z})$ contains a $1$ for each $u\in U$, so its strip at base $K$ is forced to output the binary representation of $m+1$. Since $m$ and $m+1$ disagree in their least significant bit, the output bits at column $K$ disagree.
\end{proof}

\section{Resolution lower bound for \texorpdfstring{$\Assoc_n$}{Assoc\_n}}
\label{sec:reduction}

With the restriction $\rho$ constructed (Definition~\ref{def:rho}), we now carry out the substitute and compose steps outlined in Section~\ref{sec:outline-reduction} to prove the lower bound.
Throughout this section, we assume that $n\ge n_0$, and that $\Assoc_n$ uses a strip-local multiplier encoding $\Mul_n$.

For the following definitions, recall that under the restriction $\rho$, every variable of $\Assoc_n$ computes a Boolean function of edge variables $X_e$.

\begin{definition}[Propagated assignment]
\label{def:propagation}
For an assignment $\alpha$ to the edge variables $X_e$, the \emph{propagated assignment} $\bar\alpha$ extends $\rho$ by setting each
variable of $\Assoc_n\!\upharpoonright_\rho$ to the value of its Boolean function at $\alpha$.
\end{definition}

\begin{definition}[Edge- and star-local]
\label{def:edge-star-local}
Under $\rho$, a circuit wire is \emph{edge-local} if it is not constant and computes a Boolean function of a single edge variable
$X_e$. It is \emph{star-local at $w$} if it is neither constant nor edge-local and computes a Boolean function of the variables
$\{X_e:e\in E(w)\}$.
\end{definition}

\begin{definition}[Support of a restricted clause]
\label{def:support}
The \emph{support} of a clause $C$ of $\Assoc_n\!\upharpoonright_\rho$ is the set $\supp(C)$ of vertices $w\in U\cup V$ such that some variable of $C$ depends on an edge variable of the star $E(w)$.
\end{definition}

\begin{lemma}[Locality]
\label{lem:locality}
Under the restriction $\rho$, every wire of the four multipliers and every miter variable is constant, edge-local, or star-local. Moreover, every wire of the inner multiplier $\vv{x}\vv{y}$ computes a Boolean function of a single star $E(v)$, $v\in V$, and every wire of $\vv{y}\vv{z}$ computes a Boolean function of a single star $E(u)$, $u\in U$.
\end{lemma}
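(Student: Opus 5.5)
Since $\rho$ is $g$-sparse for each of the four multipliers (Lemma~\ref{lem:rho-g-sparse}(ii)), and the encoding is strip-local (Definition~\ref{def:strip-local}), every wire of each multiplier computes a Boolean function of the partial products lying in a single $g$-strip. Partial products not forced to $0$ lie only in $g$-columns, and each strip contains exactly one $g$-column, its base. So every wire is a function of the unforced partial products in one $g$-column. The plan is to go multiplier by multiplier and show that the unforced partial products of any single $g$-column, viewed as functions of the edge variables, together depend on the edges of at most one star. Any function of such a set of partial products is then constant, edge-local, or star-local at that star.

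\textbf{Inner multipliers.} By Lemma~\ref{lem:rho-g-sparse}(i), the $g$-column $\col_{\vv{xy}}(v)$ contains exactly the partial products $X_e$ for $e\in E(v)$. Every other $g$-column contains at most one unforced partial product $X_e$. Since $H$ has no isolated vertices, $e\in E(v_e)$, so a single edge variable lies in some star $E(v)$ with $v\in V$. This yields the ``moreover'' clause for $\vv{x}\vv{y}$, and symmetrically for $\vv{y}\vv{z}$ with stars $E(u)$.

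\textbf{Outer multipliers.} Consider $(\vv{x}\vv{y})\vv{z}$. Its inputs are the bits of $\vv{x}\vv{y}$ and of $\vv{z}$. By Definition~\ref{def:rho}(b), the only unset bits of $\vv{x}\vv{y}$ are the outputs at columns $\col_{\vv{xy}}(u,e)$ for non-incident pairs $(u,e)$, each computing $X_e$ (its strip holds only that partial product). The bits of $\vv{z}$ are constants. Hence the unforced partial products in a $g$-column of $(\vv{x}\vv{y})\vv{z}$ are constants or edge variables $X_e$, each indexed by a triple $(u,e,v)$ with $u\ne u_e$. By Lemma~\ref{lem:indices}(iii), the column determines the difference pair $(d_1,d_2)$, and here $d_1\ne 0$. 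By (iii)(b), $d_1$ determines the ordered pair $(u,u_e)$, so every edge appearing has the same endpoint $u_e$ and lies in the single star $E(u_e)$. The column $K$ holds only constants, namely the $m$ ones from Lemma~\ref{lem:rho-outer-outputs}, since incident triples use the fixed output bits.

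The multiplier $\vv{x}(\vv{y}\vv{z})$ is symmetric. Its unset inputs are $\vv{y}\vv{z}$ bits at non-incident $(e,v)$, so $d_2\ne0$, and by (iii)(b) all edges share the endpoint $v_e$. The main subtlety is checking that columns not in the image of $\col_{(\vv{x}\vv{y})\vv{z}}$ contain only forced zeros. This holds because every unforced product lies in some $\col(u,e,v)$, as shown in the proof of Lemma~\ref{lem:rho-g-sparse}(ii).

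\textbf{Miter variables.} Each $e_i$ is the XOR of output bit $i$ of the two outer multipliers. Both bits lie in strips with the same base, and by Lemma~\ref{lem:indices}(iii) the same base column corresponds to the same difference pair in both outer multipliers. I must therefore show that the two star-local functions are over the same star, which is the main obstacle. For a difference pair $(d_1,d_2)$ with both coordinates nonzero, the left multiplier uses star $E(u_e)$ and the right uses star $E(v_e)$, which differ. I would resolve this by noting that when $d_1\ne0$ and $d_2\ne0$, the left column's unset inputs require $u\ne u_e$ and $v$ arbitrary, while the right requires $v\ne v_e$. Both sides do contain edges, so I would show instead that both outputs are actually constant. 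The approach is to use $(u,u_e)$ and $(v,v_e)$ being uniquely determined, so $e$ is pinned to the at most one edge joining $u_e$ and $v_e$; both sides then depend on the single variable $X_e$, hence on the star $E(u_e)$. If the pair is $(d_1,0)$ or $(0,d_2)$, the right or left side respectively has only forced products, so $e_i$ depends on one star. At $i=K$ everything is constant.
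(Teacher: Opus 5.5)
Your proof is correct and follows the paper's argument: strip-locality together with Lemma~\ref{lem:rho-g-sparse} handles the inner multipliers, and the difference-pair analysis of Lemma~\ref{lem:indices}(iii) handles the outer multipliers and the miter. Your per-multiplier shortcut is a tidy repackaging of the paper's cases: unset inner outputs come only from non-incident pairs, so the relevant difference is nonzero and pins the star's center.

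One correction. In the both-nonzero miter case the two outputs are not constant in general, so drop the ``show both outputs are constant'' sentence. The stars $E(u_e)$ and $E(v_e)$ are also no obstacle, because the only edge appearing on either side is the single pinned edge $e$ joining them. Your final pinned-edge argument, which is exactly the paper's fourth case, is the right one to keep.
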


\begin{proof}
Since $\rho$ is a $g$-sparse restriction of the inputs for each of the four
multipliers (Lemma~\ref{lem:rho-g-sparse}(ii)) and $\Mul_n$ is a strip-local encoding, under $\rho$ every wire of these multipliers computes a Boolean function of the partial products within a single
strip's base column. The statement then follows for the inner multipliers $\vv{x}\vv{y}$ and $\vv{y}\vv{z}$ by Lemma~\ref{lem:rho-g-sparse}(i).

For the outer multipliers, each of their input bits is either constant or computes a single edge variable $X_e$. Consider, for example, the multiplier $(\vv{x}\vv{y})\vv{z}$. Its input bits $\vv{z}$ are all set to constants by $\rho$. In the other input bit-vector $(\vv{x}\vv{y})$, each unrestricted bit is the base output bit of an inner multiplier strip holding a single partial product
$X_e$ (Definition~\ref{def:rho}(b), Lemma~\ref{lem:rho-g-sparse}(i)). By Observation~\ref{obs:strip-counts} this output bit computes $X_e$.

By Definition~\ref{def:rho}(a),(b), the partial products not forced to $0$ in either outer multiplier are indexed by triples $(u,e,v) \in U \times E \times V$. By Lemma~\ref{lem:indices}(iii), the column of such a partial product depends only on its difference pair
$\bigl(i(u)-i(u_e),\,k(v)-k(v_e)\bigr)$, and each nonzero difference is realized by a unique ordered pair of vertices from $U$ or $V$.
There are therefore four kinds of outer $g$-columns to check, according to which entries of the difference pair are $0$ (Figures~\ref{fig:outer-multiplier} and~\ref{fig:outer-star-local}). It suffices to show that each kind of column only contains edge variables from at most one star.
\begin{itemize}
\item \textbf{Difference pair $(0,0)$.} By Lemma~\ref{lem:indices}(iii)(a), this is the $g$-column $K$, which contains only constant
partial products.
\item \textbf{Difference pair $(d,0)$ with $d\ne0$.} By Lemma~\ref{lem:indices}(iii)(b), we have $d=i(u)-i(u_e)$ for a unique ordered pair
$(u,u_e)$, so the triples mapping to this column are $(u,e,v_e)$ with $e\in E(u_e)$. In the outer multiplier $(\vv{x}\vv{y})\vv{z}$,
this column therefore contains exactly the edge variables from the single star $E(u_e)$. In $\vv{x}(\vv{y}\vv{z})$, the only partial product of this column not forced to $0$ is the product of the input bit $x_{i(u)}$
with the output bit of $\vv{y}\vv{z}$ at $\col_{\vv{yz}}(u_e)$, and $\rho$ sets both bits to $1$.
\item \textbf{Difference pair $(0,d)$ with $d\ne0$.} Symmetrically, in the multiplier $\vv{x}(\vv{y}\vv{z})$ this column contains exactly the edge variables from the single star $E(v_e)$, while in the multiplier $(\vv{x}\vv{y})\vv{z}$ the only partial product not forced to $0$ in this column is the constant $1$.
\item \textbf{Both differences nonzero.} By Lemma~\ref{lem:indices}(iii)(b), the two differences determine ordered pairs $(u,u')$ and $(v,v')$,
so every triple mapping to this column has the form $(u,e,v)$ with $e=(u',v')$. If the graph $H$ contains the edge $(u',v')$, then in both outer multipliers this column contains the single edge variable $X_{(u',v')}$. Otherwise the column contains no edge variables.
\end{itemize}

Finally, we check the miter variables. Each miter variable $e_\ell$ depends only on the partial products of the two outer multipliers in the strip containing column $\ell$. In each of the four cases, these partial products involve edge variables from at most one star, so each miter variable $e_\ell$ is also constant, edge-local, or star-local.
\end{proof}

The next lemma is not used directly in the reduction, but it is needed in the proof of Lemma~\ref{lem:local-soundness}.

\begin{lemma}[Inner multipliers impose perfect matching]
\label{lem:inner-pm}
Under the restriction $\rho$, the inner multipliers impose exactly the perfect-matching constraints on the variables $X_e$: for every assignment $\alpha$ to the edge variables $X_e$,
the propagated assignment $\bar\alpha$ (Definition~\ref{def:propagation}) satisfies $\vv{x}\vv{y}\!\upharpoonright_{\rho}$ if and only if $\alpha$ sets
exactly one edge variable in the star $E(v)$ to true for every $v\in V$. Symmetrically, $\bar\alpha$ satisfies
$\vv{y}\vv{z}\!\upharpoonright_{\rho}$ if and only if $\alpha$ sets exactly one edge variable in the star $E(u)$ to true for every $u\in U$.
\end{lemma}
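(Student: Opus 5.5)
We prove the statement for $\vv{x}\vv{y}$; the argument for $\vv{y}\vv{z}$ is symmetric, with $\col_{\vv{yz}}(u)$ and Lemma~\ref{lem:indices}(ii)(b) in place of $\col_{\vv{xy}}(v)$ and (ii)(a).

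Fix an assignment $\alpha$ to the edge variables. Under $\rho$ together with $\alpha$, all inputs of $\Mul_n(\vv{x},\vv{y};\vv{x}\vv{y})$ are fixed. Because the encoding is exact (Definition~\ref{def:exact-encoding}), there is exactly one satisfying assignment extending these inputs: the one in which every wire, including every output bit, takes the value of its function. The assignment $\bar\alpha$ is defined to give each variable left unset by $\rho$ the value of its function at $\alpha$. Hence $\bar\alpha$ satisfies $\vv{x}\vv{y}\!\upharpoonright_\rho$ if and only if every output bit that $\rho$ fixes agrees with the value the circuit actually computes under $\rho\cup\alpha$. In the other direction, if some fixed output bit disagrees with its computed value, then no assignment to the wires satisfies the encoding, so in particular $\bar\alpha$ does not.

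The task therefore reduces to a statement about the output bits fixed by $\rho$. By Lemma~\ref{lem:rho-g-sparse}(ii), $\rho$ is $g$-sparse, so Observation~\ref{obs:strip-counts} applies: the output of the $g$-strip based at each $g$-column is the binary count of the partial products in that column that evaluate to $1$. We check the fixed outputs strip by strip, as follows.

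\begin{itemize}
\item \textbf{Strips based at $\col_{\vv{xy}}(v)$.} By Lemma~\ref{lem:rho-g-sparse}(i), the base column contains exactly the partial products $X_e$ for $e\in E(v)$. The strip therefore outputs $\sum_{e\in E(v)}\alpha(X_e)$ in binary. This count is at most $\Delta<2^g$, so it is represented faithfully, and it equals the value $1$ fixed by $\rho$ exactly when $\operatorname{ExactOne}(X_{E(v)})$ holds.
\item \textbf{Strips based at a column $\col_{\vv{xy}}(u,e)$ with $(u,e)$ non-incident.} The base column holds the single partial product $X_e$. The base output bit is left unset by $\rho$, and the remaining $g-1$ bits of the strip are set to $0$; this matches the computed output, since the count is at most $1$.
\item \textbf{All other strips.} Every other output bit of $\vv{x}\vv{y}$ is set to $0$ by $\rho$. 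Its strip's base column contains no partial product that is not forced to $0$, so the strip computes $0$, and the fixed value is correct.
\end{itemize}

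The step needing the most care is confirming that every output bit lies in exactly one of these cases, with no overlaps. This requires that the strips based at the distinct $g$-columns $\col_{\vv{xy}}(v)$ and $\col_{\vv{xy}}(u,e)$ are disjoint, which holds because all these columns are distinct multiples of $g$ (Lemma~\ref{lem:indices}(i),(ii)). It also requires that output bits in columns that are not multiples of $g$ fall into the third case, which holds because they lie inside some strip whose base column is either one of the columns above or a column containing no partial products not forced to $0$.

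Putting the cases together, the output bits fixed by $\rho$ are all consistent with the computed values exactly when every $v\in V$ satisfies $\operatorname{ExactOne}(X_{E(v)})$ under $\alpha$. This proves the statement for $\vv{x}\vv{y}$, and the symmetric argument proves it for $\vv{y}\vv{z}$.
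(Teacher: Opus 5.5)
Your proposal is correct and takes essentially the paper's approach. Exactness of the encoding reduces the question to whether the output bits fixed by $\rho$ agree with the strip counts from Observation~\ref{obs:strip-counts} and Lemma~\ref{lem:rho-g-sparse}, checked over the same three kinds of strips, with the paper's two directions packaged as one biconditional. One small slip: faithful representation should rest on $|E(v)|\le|U|=m+1<2^{g-1}$, which $g$-sparsity already guarantees, rather than on $\Delta<2^g$, which is guaranteed only when $m$ is large relative to $\Delta$.
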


\begin{proof}
Consider the multiplier $\vv{x}\vv{y}$; the argument for $\vv{y}\vv{z}$ is symmetric.

Suppose first that the assignment $\alpha$ selects exactly one edge in every star $E(v)$, $v\in V$. By Lemma~\ref{lem:rho-g-sparse}(i), for each $v\in V$ the $g$-column $\col_{\vv{xy}}(v)$ contains exactly the partial products
$\{X_e : e\in E(v)\}$. Since $\rho$ is $g$-sparse (Lemma~\ref{lem:rho-g-sparse}(ii)), Observation~\ref{obs:strip-counts} applies, and the strip based at this column outputs $1$ in binary, agreeing with $\rho$. In a strip holding a single partial product $X_e$, the
output bit at the base column computes $X_e$ and is left unset by $\rho$, while the $g-1$ higher output bits of the strip compute $0$. Strips whose
partial products are all forced to $0$ output $0$. In all cases the propagated values agree with $\rho$, so $\bar\alpha$ satisfies every clause of $\vv{x}\vv{y}\!\upharpoonright_{\rho}$.

Conversely, suppose that the propagated assignment $\bar\alpha$ satisfies $\vv{x}\vv{y}\!\upharpoonright_{\rho}$. Since $\bar\alpha$ extends $\rho$, it satisfies the unrestricted encoding
$\vv{x}\vv{y}$, and is therefore a correct evaluation of the multiplier (Definition~\ref{def:exact-encoding}). By Lemma~\ref{lem:rho-g-sparse}(i), for each $v\in V$ the strip at base $\col_{\vv{xy}}(v)$ outputs the binary count of the true edge variables in $E(v)$. Since the restriction $\rho$ sets this strip's output to $1$ in binary, exactly one edge variable of
$E(v)$ is true in the edge assignment $\alpha$.
\end{proof}

The reduction will use the following lemma to derive the clauses of $\Assoc_n\!\upharpoonright_\rho$ from $\SPM(H)$.

\begin{lemma}[Local soundness]
\label{lem:local-soundness}
Let $C$ be a clause of $\Assoc_n\!\upharpoonright_\rho$. If an assignment $\alpha$ to the edge variables satisfies the clauses $\operatorname{ExactOne}(X_{E(w)})$ for every $w\in\supp(C)$, then the propagated assignment $\bar\alpha$ (Definition~\ref{def:propagation}) satisfies $C$.
\end{lemma}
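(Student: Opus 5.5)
The plan is a modification argument. Since $C$ depends only on edge variables in the stars of $\supp(C)$, we replace $\alpha$ by an assignment that agrees with it on those stars and satisfies every clause of $\Assoc_n\!\upharpoonright_\rho$ whose inner-multiplier constraints we can control. The difficulty is that $\PM(H)$ is unsatisfiable, so no $\alpha$ satisfies all the ExactOne constraints. We therefore cannot simply extend $\alpha$ to a full perfect matching. Instead we argue clause by clause, according to which part of $\Assoc_n$ the clause $C$ comes from.

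\textbf{Case 1: $C$ belongs to an inner multiplier}, say $\vv{x}\vv{y}$. By Lemma~\ref{lem:locality}, every wire of $\vv{x}\vv{y}$ is a function of a single star $E(v)$, $v\in V$. Because the encoding has constant width and is strip-local, the wires of $C$ are determined by the partial products of a single $g$-strip or of a few strips. Consider the assignment $\alpha'$ that agrees with $\alpha$ on the edges of stars in $\supp(C)$. Off those stars, $\alpha'$ is modified as follows.

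For each $v\in V$ with $v\notin\supp(C)$, we must make $\operatorname{ExactOne}(X_{E(v)})$ hold. The stars $E(v)$ for distinct $v\in V$ are disjoint. So $\alpha'$ keeps $\alpha$ on $E(v)$ whenever $v\in\supp(C)$, and otherwise sets exactly one edge of $E(v)$ true, whose other endpoint lies in $U$. We must check that no edge of a $\supp(C)$ star is changed. Edges of stars $E(u)$ with $u\in\supp(C)\cap U$ could lie in some $E(v)$ with $v\notin\supp(C)$. For this reason we restrict to the variables actually appearing in $C$: by Lemma~\ref{lem:locality}, in an $\vv{x}\vv{y}$ clause every variable is a function of some $E(v)$, so $\supp(C)$ can be taken to consist of vertices of $V$ only (the $U$ vertices enter only through edge-local wires).

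Then $\alpha'$ satisfies ExactOne at every $v\in V$. By Lemma~\ref{lem:inner-pm}, $\bar\alpha'$ satisfies $\vv{x}\vv{y}\!\upharpoonright_\rho$, and in particular satisfies $C$. Since $C$'s variables depend only on edges where $\alpha'=\alpha$, the assignment $\bar\alpha$ also satisfies $C$. The $\vv{y}\vv{z}$ case is symmetric, with $U$ in place of $V$.

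\textbf{Case 2: $C$ belongs to an outer multiplier or is a miter clause.} In this case the claim should hold for every $\alpha$, with no condition needed. The outer multipliers' inputs are fixed bits or unset inner output bits, and by Observation~\ref{obs:strip-counts} each unset inner output bit equals a single $X_e$ under $\bar\alpha$. The point is that $\bar\alpha$ assigns each outer wire the value of its defining function of the inputs. Under propagation, the outer circuit is therefore evaluated exactly on its inputs, which is an exact evaluation, so every clause of the encoding is satisfied (Definition~\ref{def:exact-encoding}).

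The one subtlety is variables fixed by $\rho$. The outer inputs fixed to constants by $\rho$ must agree with the true function values; otherwise propagation could contradict $\rho$. The outer inputs lying in inner-strip output bits fixed to $1$ or $0$ are exactly the bits whose true values depend on ExactOne at some vertex. Here we use that $\rho$ replaced those bits by constants. The outer wires are defined as functions of the restricted inputs (the constants), and $\bar\alpha$ evaluates the outer circuit on those constants. So the outer circuit is consistently evaluated regardless of whether the inner constraints hold.

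For the miter clauses, $e_\ell$ is propagated as the XOR of the two outer outputs, which satisfies its defining clauses. The only remaining issue is the restriction $e_K=1$: the defining clauses of $e_K$ with $e_K$ set to $1$ require the outputs at $K$ to differ. By Lemma~\ref{lem:rho-outer-outputs}, the strips at $K$ are forced to $m$ and $m+1$ in the restricted circuits, independent of $\alpha$, so these clauses hold as well.

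\textbf{Main obstacle.} The step I expect to be hardest is the bookkeeping in Case 1. We must make precise that the support of an inner clause sees only $V$-stars in the sense needed, and that resetting the edges of out-of-support stars does not change any variable of $C$. This relies on the disjointness of the stars $E(v)$, $v\in V$, together with Lemma~\ref{lem:locality}. The fallback, if this bookkeeping fails, is to argue directly using strip-locality: the clause $C$ involves wires of strips whose base columns are among the $\col_{\vv{xy}}(v)$, $v\in\supp(C)$, or are single-edge strips, and these are correctly evaluated whenever ExactOne holds on those $v$.
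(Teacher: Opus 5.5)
Your proof is correct and follows the paper's argument. In the inner case you build $\alpha'$ agreeing with $\alpha$ on the stars $E(v)$, $v\in V\cap\supp(C)$, choose one edge in every other $V$-star, apply Lemma~\ref{lem:inner-pm}, and transfer back to $\alpha$; for the outer and miter clauses you note they hold for every $\alpha$ because $\rho$ fixes no outer wires, with $e_K$ handled by Lemma~\ref{lem:rho-outer-outputs}. One slip: by definition of support, every edge dependency puts both endpoints $u_e$ and $v_e$ into $\supp(C)$, not only edge-local wires. This is harmless, since all you need is that each edge a variable of $C$ depends on lies in $E(v_e)$ with $v_e\in V\cap\supp(C)$.
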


\begin{proof}
The clause $C$ originates from one of three components: the inner multipliers $\vv{x}\vv{y}$ and $\vv{y}\vv{z}$, the outer multipliers $(\vv{x}\vv{y})\vv{z}$ and $\vv{x}(\vv{y}\vv{z})$,
or the clauses defining the miter variables $e_i$. The restricted formula $\Assoc_n\!\upharpoonright_\rho$ does not contain the long miter clause $e_0\vee\cdots\vee e_{4n-1}$ since $\rho$ satisfies it by assigning $e_K=1$. We check the cases in turn.

\emph{Inner multipliers.}
Suppose that $C$ originates from the inner multiplier $\vv{x}\vv{y}$; the case of $\vv{y}\vv{z}$ is symmetric.
Let $\alpha'$ be an assignment to the edge variables that agrees with $\alpha$ on the stars $E(v)$ with $v\in V\cap\supp(C)$, and that selects exactly one edge in every remaining star $E(v)$, where $v\in V\setminus\supp(C)$. This is well defined because the stars $E(v)$ partition the edges $E$. Since $\alpha$ satisfies $\operatorname{ExactOne}(X_{E(v)})$ for every $v\in V\cap\supp(C)$, the assignment $\alpha'$ in fact selects exactly one edge in every star $E(v)$, $v\in V$. By Lemma~\ref{lem:inner-pm}, the propagated assignment $\bar\alpha'$ satisfies every clause of $\vv{x}\vv{y}\!\upharpoonright_{\rho}$, and in particular satisfies $C$.

Since every variable of $C$ depends only on edge variables $X_e$ with an endpoint $v_e\in\supp(C)$, on which $\alpha'$ is defined to agree with $\alpha$, the propagated assignments $\bar\alpha$ and $\bar\alpha'$ agree on the variables of $C$. Hence $\bar\alpha$ satisfies $C$ as well.

\emph{Outer multipliers.} The restriction $\rho$ assigns no wires in the outer multipliers; it fixes only some of their inputs, namely bits of $\vv{x}$, $\vv{z}$, $\vv{x}\vv{y}$, and $\vv{y}\vv{z}$. The propagated assignment $\bar\alpha$ therefore satisfies every clause of the outer multipliers.

\emph{Miter clauses.} The miter variables $e_\ell$ with $\ell\ne K$ are unassigned by $\rho$, and their defining clauses are gate definitions, so the
propagated assignment $\bar\alpha$ satisfies them as well.

It remains to check the clauses defining $e_K$. As $\rho$ sets $e_K=1$, these clauses assert that the two outer multipliers disagree at output bit $K$. By Lemma~\ref{lem:rho-outer-outputs}, these two output bits are forced to disagree under $\rho$, so $\bar\alpha$ satisfies these clauses.
\end{proof}

\begin{lemma}[Reduction from perfect matching]
\label{lem:reduction}
From any resolution refutation of $\Assoc_n$ one can obtain a refutation of
the star-local extension $\SPM(H)$ (Definition~\ref{def:star-local-ext}) with only a constant-factor increase in size.
\end{lemma}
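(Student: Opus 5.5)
Let $\pi$ be a resolution refutation of $\Assoc_n$ of size $s$. We follow the restrict--substitute--compose outline of Section~\ref{sec:outline-reduction}.

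\textbf{Restrict.} Apply $\rho$ (Definition~\ref{def:rho}) to $\pi$. Restrictions are substitutions, so Lemma~\ref{lem:substitution} yields a refutation $\pi\!\upharpoonright_\rho$ of $\Assoc_n\!\upharpoonright_\rho$ of size at most $s$. Its variables are the edge variables $X_e=y_{j(e)}$ and the surviving wires and miter variables.

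\textbf{Substitute.} Define a substitution $\sigma$. Map each $y_{j(e)}$ to $X_e$. Lemma~\ref{lem:locality} says each remaining variable $w$ is constant, edge-local, or star-local. If $w$ is constant, map it to that constant. Otherwise it computes $f(X_{E(w')})$ for a single star $E(w')$; an edge-local variable depends on one edge $e$, and we take $w'=u_e$. Map $w$ to the extension variable $Z_{w',f}$. Lemma~\ref{lem:substitution} again gives a refutation of $(\Assoc_n\!\upharpoonright_\rho)^\sigma$ of size at most $s$. For a clause $C$ of $\Assoc_n\!\upharpoonright_\rho$, the variables of $C^\sigma$ lie among the variables of $F_w$ for $w$ in a set $W_C\supseteq\supp(C)$. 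Here $W_C$ consists of the stars chosen for the variables of $C$. Since $\Mul_n$ has constant width, and the miter definitions $e_i\leftrightarrow(\cdot\oplus\cdot)$ have width $3$, $|W_C|=O(1)$. Indeed, $C$ has constant width and each variable contributes one star.

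\textbf{Compose.} Next we show that $\bigwedge_{w\in W_C}F_w$ implies $C^\sigma$. Take any assignment satisfying these $F_w$. Its restriction $\alpha$ to edge variables satisfies $\operatorname{ExactOne}(X_{E(w)})$ for all $w\in\supp(C)$. By the definition clauses, each $Z_{w',f}$ occurring in $C^\sigma$ takes the value $f(\alpha)$, which is exactly the value of the corresponding variable under the propagated assignment $\bar\alpha$. Hence $C^\sigma$ evaluates as $C$ does under $\bar\alpha$, and Lemma~\ref{lem:local-soundness} shows it is satisfied.

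One subtlety needs care: $\alpha$ is only partially constrained outside $W_C$. This is harmless, because the variables of $C^\sigma$ depend only on edges of stars in $W_C$. Lemma~\ref{lem:local-soundness} is stated for a total $\alpha$, so extend $\alpha$ arbitrarily outside.

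By Lemma~\ref{lem:star-local-derivations}, either $C^\sigma$ is a tautology, or some subclause $C'\subseteq C^\sigma$ has a derivation from $\SPM(H)$ of size bounded by a function of $|W_C|$ and $\Delta$, hence $O(1)$. A clause may also collapse under $\sigma$; if it becomes true it is simply dropped. We replace each initial clause of the substituted refutation by such a derivation.

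It remains to check that the refutation still goes through when initial clauses are replaced by subclauses and tautologies are removed. This is the standard weakening argument: by induction along the proof, each line $D$ is replaced by a subclause of $D$, or is discarded if it is a tautology or descends only from tautologies. A resolution step either becomes a resolution step on subclauses, or one premise already lacks the pivot and is itself a subclause of the resolvent. The final line $\bot$ thus becomes $\bot$, with no increase in size beyond the derivations. The total size is $O(s)$ plus $O(1)$ for each initial clause used, which is $O(s)$.

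The main obstacle is making $\sigma$ well defined and consistent. When the same wire depends on edges in a star choice that is ambiguous — for example, an edge-local variable belongs to both $E(u_e)$ and $E(v_e)$ — we must fix one canonical choice. We must also verify that $W_C$ includes $\supp(C)$, so that Lemma~\ref{lem:local-soundness} applies. This holds because any star on which a variable of $C$ depends is, after the canonical choice, covered by the stars of the chosen $Z$ variables together with the $X_e$ literals. For edge variables $X_e$ appearing directly in $C$ with $e$ in a support star, we add both endpoints to $W_C$, which still keeps $|W_C|=O(1)$.
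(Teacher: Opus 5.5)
Your restrict--substitute--compose structure is the paper's, and sending edge-local wires to $Z_{u_e,f}$ instead of to the literal $X_e$ or $\neg X_e$ is harmless. The gap is the vertex set $W_C$. You define it as one chosen vertex per variable of $C$ and then assert $W_C\supseteq\supp(C)$, but the inclusion runs the other way. Every vertex you choose lies in $\supp(C)$. By Definition~\ref{def:support}, however, $\supp(C)$ contains \emph{both} endpoints of every edge on which some variable of $C$ depends, so it generally contains vertices you never chose. Your choice $w'=u_e$ for edge-local wires discards exactly the vertices whose constraints the inner multiplier $\vv{x}\vv{y}$ enforces.

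Concretely, suppose $v\in V$ has a single edge $e$ and $u_e$ has degree at least two; the standing assumptions on $H$ allow this. In the array multiplier, the full adder whose sum bit $s$ is the output at $\col_{\vv{xy}}(v)$ has, under $\rho$, one input wire computing $X_e$ and two inputs computing $0$. Since $\rho$ sets $s=1$, its clause $\neg s\vee a_0\vee a_1\vee a_2$ becomes a clause $C$ of $\Assoc_n\!\upharpoonright_\rho$ with $C^\sigma=Z_{u_e,f}$, where $f$ is the projection onto $X_e$. Your set is $W_C=\{u_e\}$, but $F_{u_e}$ does not imply $X_e$: it is satisfied by setting another edge at $u_e$ to $1$. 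What makes $C$ true is $\operatorname{ExactOne}(X_{E(v)})$, which lies in $F_v$, and $v\in\supp(C)\setminus W_C$. So your claim that any assignment satisfying $\bigwedge_{w\in W_C}F_w$ satisfies $\operatorname{ExactOne}$ on all of $\supp(C)$ is false, and Lemma~\ref{lem:local-soundness} cannot be invoked. Your closing justification conflates the edges a variable depends on with the vertex constraints needed to make $C$ hold.

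The repair is to do what the paper does: use $\supp(C)$ itself as the vertex set.
\begin{itemize}
\item It contains every vertex you chose, so all the $\operatorname{Def}$ clauses for the extension variables in $C^\sigma$ are present.
\item Each variable of $C$ contributes at most $\Delta+1$ vertices, so $|\supp(C)|=O(1)$.
\item It supplies exactly the hypothesis of Lemma~\ref{lem:local-soundness}.
\end{itemize}
With that change, the rest of your argument goes through, including your weakening argument for replacing initial clauses by subclauses, which the paper leaves implicit.
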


\begin{proof}
By the substitution lemma (Lemma~\ref{lem:substitution}), a size $s$ refutation of $\Assoc_n$ yields a refutation of
$\Assoc_n\!\upharpoonright_\rho$ of size at most $s$.

Next, define a substitution $\sigma$ from the variables of $\Assoc_n\!\upharpoonright_\rho$ to constants and literals over the variables of
$\SPM(H)$ as follows. Every wire that is forced to a constant under $\rho$ maps to that constant. The unset input bits
$y_{j(e)}$ map to the edge variables $X_e$. Each edge-local wire maps to the literal it computes, $X_e$ or $\neg X_e$. Every remaining wire is, by Lemma~\ref{lem:locality}, a Boolean function $f$ of a single
star $E(w)$, and maps to the extension variable $Z_{w,f}$. Wires computing the same function of the same star all map to the same variable $Z_{w,f}$. By the substitution lemma (Lemma~\ref{lem:substitution}), the refutation of $\Assoc_n\!\upharpoonright_\rho$ yields a refutation of
$(\Assoc_n\!\upharpoonright_\rho)^\sigma$ of size at most $s$.

It remains to derive the clauses of the restricted, substituted formula $(\Assoc_n\!\upharpoonright_\rho)^\sigma$ from $\SPM(H)$. Each clause $C^\sigma$ of $(\Assoc_n\!\upharpoonright_\rho)^\sigma$ is the result of applying the substitution $\sigma$ to a clause $C$ of the restricted formula $\Assoc_n\!\upharpoonright_\rho$. The clause $C$ has constant width: all clauses of $\Assoc_n$ have constant width except the long miter clause $e_0\vee\cdots\vee e_{4n-1}$, which $\rho$ satisfies. By Lemma~\ref{lem:locality}, each variable of $C$ depends on the edge variables of at most one star, so $\supp(C)$ has constant size. 

We claim that the clauses $\operatorname{ExactOne}(X_{E(w)})$ and $\operatorname{Def}(Z_{w,f})$ for $w\in \supp(C)$ imply $C^\sigma$. Indeed, let $\alpha$ be an assignment to the variables of $\SPM(H)$ that satisfies these clauses. By Lemma~\ref{lem:local-soundness}, the propagated assignment $\bar\alpha$ (Definition~\ref{def:propagation}) satisfies $C$. Furthermore, the definitions $\operatorname{Def}(Z_{w,f})$ ensure that the substituted values agree with the propagated values, i.e., $\alpha(\sigma(q))=\bar\alpha(q)$ for every variable $q$ of $C$. Hence, $\alpha$ satisfies $C^\sigma$. By Lemma~\ref{lem:star-local-derivations}, some subclause of $C^\sigma$ therefore has a constant-size derivation from $\SPM(H)$. Applying the refutation of $(\Assoc_n\!\upharpoonright_\rho)^\sigma$ to these subclauses gives a refutation of $\SPM(H)$ of size $O(s)$.
\end{proof}

\begin{theorem}[Lower bound for multiplier associativity]
\label{thm:main}
Let $\Mul_n$ be a strip-local multiplier encoding. For all sufficiently large $n$,
\[
\sizeRes(\Assoc_n)\ge 2^{\Omega((n/\log n)^{1/4})}.
\]
\end{theorem}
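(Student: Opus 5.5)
The plan is to combine the reduction of Lemma~\ref{lem:reduction} with the hardness of the star-local extension (Theorem~\ref{thm:star-local-hard}), choosing the graph size $m$ as large as the bit-width $n$ allows. Given $n$, we pick the largest $m$ such that the bound $n_0 = O(m^4\log m)$ of Lemma~\ref{lem:indices} satisfies $n_0\le n$. Concretely, if $n_0\le c\,m^4\log m$ for an absolute constant $c$, then it suffices to take $m = \lfloor c'(n/\log n)^{1/4}\rfloor$ for a small constant $c'>0$. Indeed, $\log m = \Theta(\log n)$, so
\[
c\,m^4\log m\le c\,c'^4 \frac{n}{\log n}\cdot O(\log n)\le n
\]
once $c'$ is small enough. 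For all sufficiently large $n$, this $m$ is large enough that Theorem~\ref{thm:iss-graphs} supplies an expander $H=(U,V,E)$ with $|U|=m+1$, $|V|=m$, maximum degree at most $\Delta$ and no isolated vertices. The absence of isolated vertices follows since $H$ has a matching covering $V$; for $U$, expansion of singletons gives $|\delta(\{u\})|\ge\Gamma\ge1$.

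Next we handle the mismatch between $n_0$ and $n$, since the construction of $\rho$ was stated for $\Assoc_n$ with $n\ge n_0$. All positions $i(u), j(e), k(v)$ lie below $n_0\le n$, so Definition~\ref{def:rho} makes sense for $n$-bit inputs with the extra high bits set to $0$. The $g$-sparsity (Lemma~\ref{lem:rho-g-sparse}) and the column analysis only concern the nonzero partial products, so they are unaffected. This is consistent with the standing assumption $n\ge n_0$ in Section~\ref{sec:reduction}, so Lemma~\ref{lem:reduction} applies directly.

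Then, by Lemma~\ref{lem:reduction}, any refutation of $\Assoc_n$ of size $s$ gives a refutation of $\SPM(H)$ of size $O(s)$. The constant hidden here depends only on $\Delta$ and the constant clause width of the strip-local encoding, via Lemma~\ref{lem:star-local-derivations}. Theorem~\ref{thm:star-local-hard} then gives $O(s)\ge 2^{\Omega(m)}$, hence
\[
s\ge 2^{\Omega(m)}=2^{\Omega((n/\log n)^{1/4})}.
\]

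The only step needing care is the parameter bookkeeping: we must confirm that the constants hidden in Lemma~\ref{lem:reduction} are independent of $m$ and $n$. This holds because $|\supp(C)|$ is bounded by the constant clause width, and each star has at most $\Delta$ edges. We must also check that "sufficiently large $n$" accommodates both the threshold of Theorem~\ref{thm:iss-graphs} and the inversion of $n_0=O(m^4\log m)$. I expect no genuine obstacle beyond this.
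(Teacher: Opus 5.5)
Your proposal is correct and follows the paper's own proof: choose $m=\lfloor c_0(n/\log n)^{1/4}\rfloor$ so that $n_0=O(m^4\log m)\le n$, take the expander from Theorem~\ref{thm:iss-graphs}, and combine Lemma~\ref{lem:reduction} with Theorem~\ref{thm:star-local-hard}. Your extra checks are details the paper leaves implicit: that $H$ has no isolated vertices, and that the reduction's constants are independent of $m$ and $n$. One small correction there: $|\supp(C)|$ is bounded by about $(\Delta+1)$ times the clause width, not by the width itself, because the edge variables a star-local wire depends on also lie in neighbouring stars; this is still a constant, so the argument stands.
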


\begin{proof}
Choose
\[
m=\left\lfloor c_0(n/\log n)^{1/4}\right\rfloor
\]
for a sufficiently small absolute constant $c_0$. Theorem~\ref{thm:iss-graphs} provides a bipartite expander $H=(U,V,E)$ with $|U|=m+1$ and $|V|=m$ whose maximum degree is at most a constant $\Delta$, and no isolated vertices, as Section~\ref{sec:hard-source} requires. Theorem~\ref{thm:star-local-hard} then shows that every resolution refutation of $\SPM(H)$ has size $2^{\Omega(m)}$. 
Lemma~\ref{lem:indices} gives $n_0=O(m^4\log m)$, so with a sufficiently small $c_0$ its construction fits into the $n$
input bits; unused high input bits are set to zero.

By Lemma~\ref{lem:reduction}, a refutation of $\Assoc_n$ of size $s$ gives a refutation of $\SPM(H)$ of size $O(s)$. Hence $s\ge 2^{\Omega(m)}=2^{\Omega((n/\log n)^{1/4})}$.
\end{proof}

\begin{remark}[Truncated multiplication]
\label{rem:truncated}
The above lower bound, as well as the bounded-depth Frege lower bound we will prove in Section~\ref{sec:frege}, applies unchanged to associativity of truncated multiplication, where each multiplier outputs only the $n$ low-order bits of
the product and the miter compares $n$ bits. The construction places the
input positions $i(u), j(e), k(v)$, the restricted strip columns, and the
miter variable $e_K$ all below $n_0 \le n$. So under $\rho$, every partial product
not forced to $0$ lies in a column below $n_0$.
\end{remark}

\section{Bounded-depth Frege lower bound for \texorpdfstring{$\Assoc_n$}{Assoc\_n}}
\label{sec:frege}

In this section we prove that $\Assoc_n$ is also hard for bounded-depth Frege systems. We reuse the reduction of Section~\ref{sec:reduction} with the odd grid in place of the expander $H$. The source of hardness is now H\r{a}stad's lower bound for bounded-depth Frege refutations of the corresponding perfect-matching principle~\cite{hastad-php}. Since Frege lines are formulas rather than clauses, we can directly substitute a constant-size formula for each wire, so the extension variables $Z_{w,f}$ from Section~\ref{sec:hard-source} are no longer needed.

\subsection{Bounded-depth Frege systems}
\label{sec:frege-prelims}

\paragraph{Formulas and depth.}
We follow the conventions of H\r{a}stad~\cite{hastad-php}. Formulas are built from variables using negation and disjunctions of unbounded arity. Conjunctions are expressed as $\neg(\neg F_1\vee\cdots\vee\neg F_k)$. The \emph{depth} of a formula is the maximum number of alternations between $\vee$ and $\neg$ on a path from its root to a leaf. Thus a literal has depth $0$ and a clause has depth at most $1$. The \emph{size} of a formula is its number of symbols.

\paragraph{Frege systems.}
A \emph{Frege system} is specified by a finite set of inference rules. Each rule has the form
\[
\frac{F_1(p_1,\ldots,p_k)\quad\cdots\quad F_j(p_1,\ldots,p_k)}{F(p_1,\ldots,p_k)}
\]
for fixed formulas $F_1,\ldots,F_j$ and $F$ over the variables $p_1,\ldots,p_k$, and may be applied with arbitrary formulas substituted for $p_1,\ldots,p_k$. A \emph{derivation} of a formula $G$ from a set of formulas $\Gamma$ is a sequence of formulas ending in $G$, each of which belongs to $\Gamma$ or follows from earlier formulas by an inference rule. The rules are required to be sound and \emph{implicationally complete}, meaning that $\Gamma$ implies $G$ if and only if $G$ has a derivation from $\Gamma$~\cite{cook-reckhow}. A \emph{refutation} of a set of formulas $\Gamma$ is a derivation of $\bot$ from $\Gamma$. The \emph{depth} of a derivation is the maximum depth of its formulas, and its \emph{size} is the total size of its formulas. We write $\sizeF{d}(\Phi)$ for the minimum size of a Frege refutation of a set of formulas $\Phi$ of depth at most $d$.

We fix an arbitrary Frege system throughout this section. H\r{a}stad proved his lower bound for one specific system~\cite[Section~2.2]{hastad-php}, but noted that the argument uses only that the rules are sound and of constant size, so his lower bound holds for every Frege system.

\paragraph{Formula substitutions.}
A \emph{formula substitution} $\sigma$ maps each variable to a formula. We write $F^\sigma$ or $\pi^\sigma$ for the result of applying $\sigma$ to a formula $F$, or a derivation $\pi$, respectively. The \emph{depth} and \emph{size} of a substitution $\sigma$ are the maximum depth and size of the formulas $\sigma(x)$. We write $\operatorname{depth}(\cdot)$ and $|\cdot|$ for the depth and size of a formula, a derivation, or a substitution. Formula substitutions satisfy the following analogue of the substitution lemma for resolution (Lemma~\ref{lem:substitution}).

\begin{lemma}[{Frege substitution lemma, cf.~\cite[Lemma~2.5]{cook-reckhow}}]
\label{lem:frege-substitution}
If $\pi$ is a derivation of $G$ from $\Gamma$, then $\pi^\sigma$ is a derivation of $G^\sigma$ from $\Gamma^\sigma$ of depth at most $\operatorname{depth}(\pi)+\operatorname{depth}(\sigma)+1$ and size at most $|\sigma|\cdot|\pi|$.
\end{lemma}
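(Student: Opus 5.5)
The approach is a structural induction along the derivation $\pi = (G_1,\ldots,G_\ell)$: we show that the substituted sequence $\pi^\sigma = (G_1^\sigma,\ldots,G_\ell^\sigma)$ is a valid derivation of $G^\sigma$ from $\Gamma^\sigma$, and then bound depth and size formula by formula.

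\emph{Validity.} Consider a line $G_t$ of $\pi$. If $G_t\in\Gamma$, then $G_t^\sigma\in\Gamma^\sigma$. Otherwise $G_t$ comes from earlier lines by some rule instance
\[
\frac{F_1(H_1,\ldots,H_k)\ \cdots\ F_j(H_1,\ldots,H_k)}{F(H_1,\ldots,H_k)}
\]
for certain formulas $H_1,\ldots,H_k$. Applying $\sigma$ commutes with substituting into the schematic variables $p_1,\ldots,p_k$, so
\[
F_r(H_1,\ldots,H_k)^\sigma = F_r(H_1^\sigma,\ldots,H_k^\sigma).
\]
Hence $G_t^\sigma$ follows from the substituted premises by the same rule, instantiated with $H_1^\sigma,\ldots,H_k^\sigma$. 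The last line of $\pi^\sigma$ is $G^\sigma$, so $\pi^\sigma$ is a derivation of $G^\sigma$ from $\Gamma^\sigma$.

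\emph{Size.} In $F^\sigma$, each variable occurrence of $F$ is replaced by a formula of size at most $|\sigma|$, and every other symbol is kept. Since $|\sigma|\ge1$, this gives $|F^\sigma|\le|\sigma|\cdot|F|$. Summing over the lines of $\pi$ yields $|\pi^\sigma|\le|\sigma|\cdot|\pi|$.

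\emph{Depth.} This is the only step needing care, because depth counts alternations between $\vee$ and $\neg$. At a leaf of $F$ we glue in a formula $\sigma(x)$. Along any root-to-leaf path of $F^\sigma$, the alternations are those inside $F$, those inside $\sigma(x)$, and possibly one extra alternation at the junction. The junction adds one when the connective just above the leaf in $F$ and the top connective of $\sigma(x)$ differ: a $\vee$ directly above a $\neg$-rooted $\sigma(x)$, or a $\neg$ directly above a $\vee$-rooted one. This yields
\[
\operatorname{depth}(F^\sigma)\le\operatorname{depth}(F)+\operatorname{depth}(\sigma)+1,
\]
and taking the maximum over the lines of $\pi$ gives the stated bound on $\operatorname{depth}(\pi^\sigma)$.

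Under the adopted conventions the depth count is a routine check, and the validity step is purely syntactic, so the whole argument reduces to the standard Cook--Reckhow substitution lemma.
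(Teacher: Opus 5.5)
Your proposal is correct and follows the same route as the paper: validity because Frege rules are closed under substitution, and a depth increase of at most $\operatorname{depth}(\sigma)$ plus one possible alternation where each $\sigma(x)$ is glued in. You also spell out the size bound $|F^\sigma|\le|\sigma|\cdot|F|$, using $|\sigma|\ge1$, which the paper leaves implicit.
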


\begin{proof}
Frege rules are closed under substitution, so each inference of $\pi$ is still an inference of $\pi^\sigma$. The depth grows by at most $\operatorname{depth}(\sigma)$ plus one for a possible alternation where $\sigma(x)$ is substituted.
\end{proof}

The following lemma is the Frege system analogue of Lemma~\ref{lem:star-local-derivations}, and follows immediately from implicational completeness.

\begin{lemma}[Local consequences have constant-size Frege derivations]
\label{lem:frege-local-derivations}
If a set $\Gamma$ of clauses over $k$ variables implies a formula $G$, then $G$ has a derivation from $\Gamma$ whose depth and size depend only on $k$ and $|G|$.
\end{lemma}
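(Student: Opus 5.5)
The statement is the Frege analogue of Lemma~\ref{lem:star-local-derivations}. I plan to derive it by combining implicational completeness of the fixed Frege system with a finiteness argument: only finitely many candidate pairs $(\Gamma,G)$ exist once $k$ and $|G|$ are fixed, up to renaming of variables.

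\textbf{Normalisation.} Rename the variables of $\Gamma$ and $G$ injectively into a fixed set $\{p_1,\ldots,p_{k'}\}$, where $k'\le k+|G|$. Here I allow $G$ to mention variables outside the $k$ variables of $\Gamma$; this is harmless, because $|G|$ bounds the number of variables occurring in $G$.

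\textbf{Boundedness of the clause set.} Discard tautological clauses and duplicate clauses from $\Gamma$. A clause over $k'$ variables is a set of literals, so at most $3^{k'}$ non-tautological clauses exist. Hence $\Gamma$, viewed as a set, is one of finitely many sets, and its total size is bounded by a function of $k'$. I will not use the fact that a derivation may repeat clauses; it only matters that each member of $\Gamma$ is available as a line.

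\textbf{Completeness and uniform bound.} Fix one of these finitely many pairs $(\Gamma,G)$ with $\Gamma\models G$. By implicational completeness of the Frege system (Section~\ref{sec:frege-prelims}), there is some derivation $\pi_{\Gamma,G}$ of $G$ from $\Gamma$. Define $D(k,s)$ and $S(k,s)$ to be the maximum depth and size of a minimal-size such derivation, where the maximum ranges over all pairs with $k$ variables in $\Gamma$ and $|G|\le s$. These maxima are taken over a finite family, so they are finite and depend only on $k$ and $|G|$.

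\textbf{Undoing the renaming.} Finally, reverse the renaming. A variable-to-variable renaming is a formula substitution of depth $0$ and size $1$, and Frege rules are closed under substitution (Lemma~\ref{lem:frege-substitution}), so the inverse renaming maps $\pi_{\Gamma,G}$ to a derivation of the original $G$ from the original $\Gamma$. Because the substitution is only a renaming, it introduces no new alternations, and the depth and size of the derivation are unchanged.

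The only real subtlety is the finiteness step. It requires that the number of variables in $G$ be bounded, which follows from $|G|$, and that $\Gamma$ be treated as a set of clauses rather than a multiset. Everything else is immediate from implicational completeness.
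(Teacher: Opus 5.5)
Your proposal is correct and takes essentially the same approach as the paper, which justifies the lemma in one line by implicational completeness. Your argument supplies the finiteness step the paper leaves implicit: there are finitely many clause sets over $k$ variables and finitely many formulas of size at most $|G|$ over a bounded variable set, and renaming back preserves depth and size. That step is what makes the bounds depend only on $k$ and $|G|$. One harmless detail: a chosen derivation $\pi_{\Gamma,G}$ may mention variables outside $p_1,\ldots,p_{k'}$. Since every substitution instance of a Frege derivation is again a derivation, the inverse renaming can map those variables arbitrarily.
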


\subsection{The odd grid}
\label{sec:grid}

For an odd integer $t\ge3$, the \emph{grid} $\Grid_t$ is the graph with vertex set $[1,t]^2$ in which two vertices are adjacent if they agree in one coordinate and differ by $1$ in the other. Color the vertices as a chessboard, with a vertex $(i,j)$ white if $i+j$ is even and black otherwise. Every edge joins a white vertex to a black vertex, so $\Grid_t$ is bipartite with sides
\[
U=\{\text{white vertices}\},\qquad V=\{\text{black vertices}\},\qquad |U|=\tfrac{t^2+1}{2},\qquad |V|=\tfrac{t^2-1}{2},
\]
and every vertex has degree at most $4$. Thus $\Grid_t$ satisfies the standing assumptions of Section~\ref{sec:hard-source}: it is a bipartite graph $H=(U,V,E)$ with $|U|=m+1$ and $|V|=m$, where $m=(t^2-1)/2$, and has maximum degree at most $\Delta=4$. We may therefore apply the reduction of Section~\ref{sec:reduction} with $H=\Grid_t$.

H\r{a}stad~\cite{hastad-php} studied the CNF $\PM(\Grid_t)$ under the name of the \emph{functional and onto pigeonhole principle} on the odd grid. From this perspective the white vertices are the pigeons and the black vertices are the holes.

\begin{theorem}[{H\r{a}stad~\cite[Theorem~5.1]{hastad-php}}]
\label{thm:hastad}
Let $t$ be an odd integer and let \mbox{$d\le O(\log t/\log\log t)$}. Then any depth $d$ Frege refutation of $\PM(\Grid_t)$ requires size
\[
\exp\bigl(\Omega\bigl(t^{1/(2d-1)}(\log t)^{O(1)}\bigr)\bigr).
\]
In particular, there is a constant $c>0$ such that for every odd $t$ and every $d\le c\log t/\log\log t$,
\[
\sizeF{d}(\PM(\Grid_t))\ \ge\ \exp\bigl(t^{c/d}\bigr).
\]
\end{theorem}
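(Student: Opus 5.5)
The first displayed bound is H\r{a}stad's theorem as cited, so the only thing to prove is the ``in particular'' clause. I will derive it from the first bound by absorbing the constants and the polylogarithmic factor into a slightly smaller exponent of $t$. I will not reprove anything about the grid.

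\textbf{Step 1: make the constants explicit.} Write the cited bound as
\[
\sizeF{d}(\PM(\Grid_t))\ge\exp\bigl(\beta\, t^{1/(2d-1)}(\log t)^{-a}\bigr),
\]
valid for $d\le c_1\log t/\log\log t$, with absolute constants $\beta,c_1,a>0$. I take the worst case in which the $(\log t)^{O(1)}$ factor is a negative power. Choosing the final $c\le c_1$ puts us inside H\r{a}stad's range of $d$.

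\textbf{Step 2: convert each factor into a power of $t$.}
\begin{itemize}
\item Since $d\ge1$, we have $1/(2d-1)\ge 1/(2d)$, so $t^{1/(2d-1)}\ge t^{1/(2d)}$.
\item The hypothesis $d\le c\log t/\log\log t$ is equivalent to $\log\log t/\log t\le c/d$. Hence
\[
(\log t)^{-a}=t^{-a\log\log t/\log t}\ge t^{-ac/d}.
\]
\end{itemize}
Combining these, the exponent is at least $\beta\, t^{(1/2-ac)/d}$. We want this to be at least $t^{c/d}$. Choose $c$ small enough that $1/2-ac-c\ge 1/4$. It then suffices that $\beta\, t^{1/(4d)}\ge1$.

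\textbf{Step 3: handle the constant $\beta$.} Using the hypothesis on $d$ once more,
\[
t^{1/(4d)}\ge t^{\log\log t/(4c\log t)}=(\log t)^{1/(4c)}.
\]
This exceeds $1/\beta$ once $t$ is larger than some threshold $t_0$.

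For the finitely many odd $t<t_0$, the range $1\le d\le c\log t/\log\log t$ is empty if $c$ is chosen small enough. Such $c$ exists because $\log t/\log\log t$ is bounded for $t<t_0$. So the claim is vacuous there.

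The only delicate point is the order in which the constants are chosen. Decreasing $c$ must not break Step 2, and it does not: it only strengthens both inequalities and raises $t^{1/(4d)}\ge(\log t)^{1/(4c)}$. It can, however, enlarge the threshold $t_0$ of Step 3. The threshold is needed exactly when $(\log t)^{1/(4c)}\ge1/\beta$ could fail, i.e. $\log t<\beta^{-4c}$, and $\beta^{-4c}\le\max(1,\beta^{-1})$ for all $c\le 1/4$. So for $c\le1/4$ we can take a single threshold $t_0=\exp(\max(1,\beta^{-1}))$ that does not depend on $c$. Fix this $t_0$, and then take $c$ small enough to meet all three requirements: $c\le c_1$, $1/2-ac-c\ge1/4$, and the range of $d$ is empty for every $t<t_0$.
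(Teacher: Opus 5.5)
Your proposal is correct. The paper gives no separate proof: it cites H\r{a}stad's Theorem~5.1 and treats the second bound as immediate. Your argument is the routine derivation that ``in particular'' leaves out. You use $d\le c\log t/\log\log t$ to bound the polylogarithmic factor below by $t^{-ac/d}$ and to make $t^{1/(4d)}$ dominate $1/\beta$, and you make the finitely many small-$t$ cases vacuous by shrinking $c$. That vacuity step also covers the case where H\r{a}stad's bound is stated only for sufficiently large $t$.
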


\subsection{Reduction and lower bound}
\label{sec:frege-transfer}

We now carry out the reduction of Section~\ref{sec:reduction} in a bounded-depth Frege system. As in Section~\ref{sec:reduction}, we assume that $n\ge n_0$, and $\Assoc_n$ uses a strip-local multiplier encoding $\Mul_n$.

\begin{lemma}[Reduction from $\PM(H)$]
\label{lem:frege-transfer}
There are constants $d_0$ and $s_0$ such that for every $d$, if $\pi$ is a depth $d$ Frege refutation of $\Assoc_n$, then $\PM(H)$ has a depth $d+d_0$ Frege refutation of size at most $s_0|\pi|$.
\end{lemma}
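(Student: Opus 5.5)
We mirror the proof of Lemma~\ref{lem:reduction}, replacing extension variables by explicit formulas. First, apply the restriction $\rho$ of Definition~\ref{def:rho}, built from the maps of Lemma~\ref{lem:indices} for the graph $H$. A restriction is a formula substitution of depth $0$ and size $1$, so by Lemma~\ref{lem:frege-substitution} it turns $\pi$ into a refutation $\pi\!\upharpoonright_\rho$ of $\Assoc_n\!\upharpoonright_\rho$. Its depth is at most $d+1$ and its size is at most $|\pi|$. Simplifying constants inside formulas does not increase size, and any simplification step can be absorbed into $d_0$.

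Next, we define a formula substitution $\sigma$ on the variables of $\Assoc_n\!\upharpoonright_\rho$:
\begin{itemize}
\item constant wires map to constants;
\item each unset input bit $y_{j(e)}$ maps to $X_e$;
\item each edge-local wire maps to the literal it computes;
\item each star-local wire at $w$ computing $f$ maps to a fixed DNF $\phi_{w,f}$ over the at most $\Delta$ variables $X_{E(w)}$.
\end{itemize}
Lemma~\ref{lem:locality} guarantees that every wire and miter variable falls into one of these cases. Since $\Delta$ is a constant, each $\phi_{w,f}$ has depth at most $2$ and size at most $c_\Delta=O(\Delta 2^\Delta)$. Lemma~\ref{lem:frege-substitution} then makes $(\pi\!\upharpoonright_\rho)^\sigma$ a refutation of $(\Assoc_n\!\upharpoonright_\rho)^\sigma$ of depth at most $d+O(1)$ and size at most $c_\Delta|\pi|$.

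It remains to derive each initial formula $C^\sigma$ from $\PM(H)$ by a constant-size, constant-depth derivation, and to prepend these derivations. Fix a clause $C$ of $\Assoc_n\!\upharpoonright_\rho$. It has constant width, since the long miter clause is satisfied by $e_K=1$. By Lemma~\ref{lem:locality}, $\supp(C)$ has constant size, at most the width of $C$. The clauses $\bigwedge_{w\in\supp(C)}\operatorname{ExactOne}(X_{E(w)})$ mention at most $k=\Delta\cdot|\supp(C)|=O(1)$ variables. These clauses imply $C^\sigma$: if $\alpha$ satisfies them, Lemma~\ref{lem:local-soundness} shows that $\bar\alpha$ satisfies $C$. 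Moreover, $\alpha(\sigma(q))=\bar\alpha(q)$ for every variable $q$ of $C$, because $\phi_{w,f}$ evaluates to $f$ by construction. Since $|C^\sigma|=O(1)$, Lemma~\ref{lem:frege-local-derivations} gives a derivation of $C^\sigma$ from these $\PM(H)$ clauses whose depth and size are bounded by constants depending only on $\Delta$ and the encoding's clause width.

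We now count the cost of prepending. The number of distinct initial formulas used is at most $|\pi|$, so the prepended derivations contribute size $O(|\pi|)$. The total size is therefore $s_0|\pi|$, and the total depth is $\max(d+O(1),O(1))\le d+d_0$.

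The main obstacle is bookkeeping rather than ideas. Lemma~\ref{lem:frege-local-derivations} and Lemma~\ref{lem:frege-substitution} lose additive depth and multiplicative size, and we must check that these losses are uniform constants independent of $n$, $m$ and $d$. That holds because $\Delta$ and the clause width of $\Mul_n$ are absolute constants. We must also make sure that Frege derivations may cite each initial formula only as needed, which the definition allows.
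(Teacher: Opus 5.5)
Your DNF substitution $\sigma$ and your local derivations via Lemma~\ref{lem:local-soundness} and Lemma~\ref{lem:frege-local-derivations} match the paper. The first step, however, has a genuine gap, and it is exactly the point the paper's proof handles with care.

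Lemma~\ref{lem:frege-substitution}, applied with $\rho$ as a formula substitution (constants written as $\bot$ and $\neg\bot$), yields a refutation of $\Assoc_n^\rho$, not of $\Assoc_n\!\upharpoonright_\rho$. You cannot pass from one to the other by simplifying constants line by line, because simplification destroys the syntactic shape of rule instances. For example, suppose the system has the rule with premises $A\vee p$ and $B\vee\neg p$ and conclusion $A\vee B$. Under $p\mapsto\neg\bot$ and simplification, the first premise becomes trivially true, and the step turns into ``from $B$ infer $A\vee B$'', which is not an instance of that rule. So the problem is whether the simplified steps are still valid inferences, not their depth, and ``absorbed into $d_0$'' does not address it.

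If instead you do not simplify, the initial formulas of your refutation are $C^\rho$ for \emph{every} clause $C$ of $\Assoc_n$, including those satisfied by $\rho$. In particular the long miter clause survives as $e_0\vee\cdots\vee e_{K-1}\vee\neg\bot\vee e_{K+1}\vee\cdots\vee e_{4n-1}$. Your local-derivation step relies on this clause having disappeared (``constant width, since the long miter clause is satisfied by $e_K=1$''). After $\sigma$ it has size $\Theta(n)$ and mentions unboundedly many variables, so Lemma~\ref{lem:frege-local-derivations} gives no constant bound for it.

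The paper keeps $\rho$ as a genuine formula substitution and prepends derivations of the formulas of $\Assoc_n^\rho$ from the clauses of $\Assoc_n\!\upharpoonright_\rho$:
\begin{itemize}
\item Each constant-size $C^\rho$ is derived from its counterpart $C\!\upharpoonright_\rho$ by Lemma~\ref{lem:frege-local-derivations}, or by weakening from $\neg\bot$ when $\rho$ satisfies $C$.
\item The long miter formula is obtained by substituting $p:=\neg\bot$ and $q:=e_0\vee\cdots\vee e_{K-1}\vee e_{K+1}\vee\cdots\vee e_{4n-1}$ into a fixed constant-size derivation of $p\vee q$ from $p$.
\end{itemize}
The miter derivation costs size $O(n)$. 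This is $O(|\pi|)$, because $\pi$ must use the long miter clause: without it, $\Assoc_n$ is satisfiable. The result is a genuine refutation of $\Assoc_n\!\upharpoonright_\rho$ of depth $d+O(1)$ and size $O(|\pi|)$. From there your DNF substitution and local derivations go through as you wrote them.
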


\begin{proof}
This reduction follows the same three steps as the resolution reduction (Lemma~\ref{lem:reduction}). The main difference is that here, in order to use Lemma~\ref{lem:frege-substitution}, we apply $\rho$ as a formula substitution, which does not simplify away constants as a restriction would (otherwise an inference may become invalid). In order to reuse the lemmas from Section~\ref{sec:reduction}, we will therefore need to derive the formulas of $\Assoc_n^\rho$ (where $\rho$ is applied as a formula substitution) from the clauses of $\Assoc_n\!\upharpoonright_\rho$ (where $\rho$ is applied as a restriction). The substitution and local-derivation steps are essentially unchanged, except that formulas take the place of extension variables.

\emph{Application of $\rho$.} 
Let $\pi$ be a depth $d$ Frege refutation of $\Assoc_n$ of size $s$. By Lemma~\ref{lem:frege-substitution}, applying $\rho$ as a formula substitution gives a depth $d+O(1)$, size $O(s)$ refutation $\pi^\rho$ of $\Assoc_n^\rho$. Aside from the long miter clause $e_0\vee\cdots\vee e_{4n-1}$, each clause $C^\rho$ of $\Assoc_n^\rho$ has constant size, and hence has a derivation of constant depth and size (Lemma~\ref{lem:frege-local-derivations}) from its counterpart $C\!\upharpoonright_\rho$ in $\Assoc_n\!\upharpoonright_\rho$ (if the clause $C$ is satisfied by $\rho$ then we instead derive $C^\rho$ by weakening from $\neg\bot$). The long miter clause after applying the formula substitution $\rho$ becomes $e_0\vee\cdots\vee e_{K-1} \vee \neg \bot \vee e_{K+1} \vee \cdots \vee e_{4n-1}$. We derive this in constant depth and size $O(n)$ as follows: fix a constant-size derivation of $p \vee q$ from $p$, then substitute $\neg \bot$ for $p$ and $e_0\vee\cdots\vee e_{K-1} \vee e_{K+1} \vee \cdots \vee e_{4n-1}$ for $q$.

Prepending these derivations to $\pi^\rho$ gives a refutation $\pi\!\upharpoonright_\rho$ of $\Assoc_n\!\upharpoonright_\rho$ of depth $d+O(1)$ and size $O(s)$.

\emph{DNF substitution.} By Lemma~\ref{lem:locality}, every variable of $\Assoc_n\!\upharpoonright_\rho$ is an edge variable or a wire that is constant, edge-local, or star-local. Define a formula substitution $\sigma$ accordingly: an edge variable $X_e$ maps to itself; a constant wire maps to the corresponding value $\neg\bot$ or $\bot$; an edge-local wire maps to the corresponding literal $X_e$ or $\neg X_e$; and a wire that is star-local at $w$ and computes $f(X_{E(w)})$ maps to the canonical DNF of $f$, which has depth at most $3$ and constant size. Then $\sigma$ is a substitution with depth at most $3$ and constant size. By Lemma~\ref{lem:frege-substitution}, $(\pi\!\upharpoonright_\rho)^\sigma$ is a depth $d+O(1)$, size $O(s)$ refutation of $(\Assoc_n\!\upharpoonright_\rho)^\sigma$.

\emph{Local derivations.} 
It remains to derive the formulas of $(\Assoc_n\!\upharpoonright_\rho)^\sigma$ from $\PM(H)$. Each such formula is $C^\sigma$ for a clause $C$ of $\Assoc_n\!\upharpoonright_\rho$. The clause $C$ has constant width, since $\rho$ satisfies the long miter clause, and by Lemma~\ref{lem:locality} each variable of $C$ depends on the edge variables of at most one star, so $\supp(C)$ has constant size. We claim that the clauses $\operatorname{ExactOne}(X_{E(w)})$ for $w\in\supp(C)$ imply $C^\sigma$. Indeed, let $\alpha$ be an assignment to the edge variables that satisfies these clauses. By Lemma~\ref{lem:local-soundness}, the propagated assignment $\bar\alpha$ (Definition~\ref{def:propagation}) satisfies the clause $C$. Furthermore, $\sigma$ maps each variable $q$ of clause $C$ to a formula computing its Boolean function, so the substituted values agree with the propagated values, i.e., $\alpha(\sigma(q))=\bar\alpha(q)$. Hence $\alpha$ satisfies $C^\sigma$. By Lemma~\ref{lem:frege-local-derivations}, the formula $C^\sigma$ therefore has a constant-depth, constant-size derivation from $\PM(H)$. Applying the refutation $(\pi\!\upharpoonright_\rho)^\sigma$ of $(\Assoc_n\!\upharpoonright_\rho)^\sigma$ to these derived formulas gives a refutation of $\PM(H)$ with depth $d+O(1)$ and size $O(s)$.
\end{proof}

\begin{theorem}[Bounded-depth Frege lower bound]
\label{thm:frege}
Let $\Mul_n$ be a strip-local multiplier encoding. There is a constant $\gamma>0$, depending only on the encoding, such that for all sufficiently large $n$ and all $d\le\gamma\log n/\log\log n$,
\[
\sizeF{d}(\Assoc_n)\ \ge\ \exp\bigl(n^{\gamma/d}\bigr).
\]
\end{theorem}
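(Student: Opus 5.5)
We combine the reduction of Lemma~\ref{lem:frege-transfer} with H\r{a}stad's lower bound (Theorem~\ref{thm:hastad}), instantiated with $H=\Grid_t$. Given $n$, we choose the largest odd $t$ for which the index-map construction fits into $n$ bits. Here $m=(t^2-1)/2$, and Lemma~\ref{lem:indices} needs $n_0=O(m^4\log m)=O(t^8\log t)\le n$. So we take $t=\Theta((n/\log n)^{1/8})$, rounded down to an odd integer, and set the unused high input bits to zero as in the proof of Theorem~\ref{thm:main}. With this choice $\log t=\Theta(\log n)$ and $\log\log t=\Theta(\log\log n)$. Since $\Grid_t$ is bipartite with $|U|=m+1$, $|V|=m$, maximum degree $4$ and no isolated vertices, the standing assumptions hold and the reduction applies. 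The expander properties were only ever used for hardness, never in the reduction lemmas.

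Now suppose $\pi$ is a depth-$d$ Frege refutation of $\Assoc_n$. By Lemma~\ref{lem:frege-transfer}, $\PM(\Grid_t)$ has a depth $d+d_0$ refutation of size at most $s_0|\pi|$. Theorem~\ref{thm:hastad} then gives
\[
s_0|\pi|\ \ge\ \exp\bigl(t^{c/(d+d_0)}\bigr),
\]
provided $d+d_0\le c\log t/\log\log t$.

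The remaining work is bookkeeping of the parameters, and this is where I expect the only care to be needed.
\begin{enumerate}
\item \textbf{Depth constraint.} For $d\ge1$ we have $d+d_0\le(1+d_0)d$. Hence it suffices that $d\le \gamma\log n/\log\log n$ with $\gamma$ small relative to $c/(1+d_0)$ and the constants relating $\log t$ to $\log n$.
\item \textbf{Exponent.} We have $t^{c/(d+d_0)}\ge t^{c/((1+d_0)d)}\ge n^{c'/d}$ for a constant $c'$, because $t\ge n^{1/9}$ for large $n$.
\item \textbf{Absorbing $s_0$.} Dividing by $s_0$ costs only a constant factor. Choosing $\gamma<c'$ gives $\exp(n^{c'/d})/s_0\ge\exp(n^{\gamma/d})$: here $d\le\gamma\log n/\log\log n$ ensures $n^{c'/d}-n^{\gamma/d}\ge n^{\gamma/d}\bigl(n^{(c'-\gamma)/d}-1\bigr)$ is large.
\item \textbf{Dependence of $\gamma$.} The constants $d_0,s_0$ depend on the encoding, through the constant clause width and the local derivations. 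This is why $\gamma$ depends on the encoding.
\end{enumerate}
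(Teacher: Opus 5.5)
Your proposal is correct and follows essentially the same route as the paper: choose the largest odd $t$ with $n_0(t)\le n$ so that $t\ge n^{1/9}$, apply Lemma~\ref{lem:frege-transfer} and H\r{a}stad's bound at depth $d+d_0\le(1+d_0)d$, and use the slack between the two exponents to absorb $s_0$. The paper just makes your constants explicit by taking $\gamma=c/(18(1+d_0))$, so that your $c'=c/(9(1+d_0))$ equals $2\gamma$ and $n^{2\gamma/d}\ge n^{\gamma/d}\log n$.
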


\begin{proof}
Lemma~\ref{lem:indices} gives a value $n_0=O(m^4\log m)=O(t^8 \log t)$. Let $n_0(t)$ denote this value as a function of $t$.
Set $t$ to the largest odd integer such that $n_0(t)\le n$. Then $n_0(t+2) > n$, therefore $t=\Omega((n/\log n)^{1/8})$. So when $n$ is sufficiently large, we have $t \ge n^{1/9}$. 

Let $c$ be the constant of H\r{a}stad's lower bound (Theorem~\ref{thm:hastad}), and let $d_0$ and $s_0$ be the constants of Lemma~\ref{lem:frege-transfer}. Set the constant $\gamma=c/(18(1+d_0))$.

Now let $d\le\gamma\log n/\log\log n$, and suppose that $\Assoc_n$ has a depth $d$ Frege refutation of size $s$. 
Applying the reduction of Lemma~\ref{lem:frege-transfer} yields a Frege refutation of $\PM(\Grid_t)$ of depth $d+d_0$ and size at most $s_0s$. 
Since $\gamma\leq c/18$ and $t\ge n^{1/9}$, we have $d+d_0\le c\log t/\log\log t$ for all sufficiently large $n$. 
Therefore we may apply Theorem~\ref{thm:hastad} with depth $d+d_0 \le (1+d_0)d$, which gives
\[
s_0s\ \ge\ \exp\bigl(t^{c/((1+d_0)d)}\bigr)\ \ge\ \exp\bigl(n^{c/(9(1+d_0)d)}\bigr)\ =\ \exp\bigl(n^{2\gamma/d}\bigr).
\]
Finally, since $d\le\gamma\log n/\log\log n$, the exponent $n^{2\gamma/d}$ exceeds $n^{\gamma/d}$ by at least a factor of $\log n$, which absorbs $s_0$ for all sufficiently large $n$. 
Hence $s\ge\exp(n^{\gamma/d})$.
\end{proof}

\begin{remark}
\label{rem:frege-vs-res}
A resolution refutation is a depth $1$ Frege refutation in a Frege system whose rules include resolution, so Theorem~\ref{thm:frege} with $d=1$ yields a resolution lower bound of $\exp(n^{\gamma})$. However, the proof gives $\gamma<1/18$, so this bound is weaker than the lower bound of $2^{\Omega((n/\log n)^{1/4})}$ given by Theorem~\ref{thm:main}.
\end{remark}

\begin{remark}[Number of lines]
\label{rem:frege-lines}
H\r{a}stad also proved that for $d\le O(\log t/\log\log t)$, if every line of a depth $d$ Frege refutation of $\PM(\Grid_t)$ has size at most $M$, then the refutation has at least $\exp\bigl(\Omega\bigl(t/((\log M)^{2d-1}(\log t)^{O(1)})\bigr)\bigr)$ lines~\cite[Theorem~7.1]{hastad-php}. The proof of Lemma~\ref{lem:frege-transfer} turns a depth $d$ Frege refutation of $\Assoc_n$ with $L$ lines, each of size at most $M$, into a depth $d+d_0$ Frege refutation of $\PM(\Grid_t)$ with $O(L)$ lines of size $O(M)$, so $L$ satisfies the same bound with $d+d_0$ in place of $d$, $O(M)$ in place of $M$, and $t=\Omega((n/\log n)^{1/8})$. In particular, for constant $d$ and lines of polynomial size, a depth $d$ Frege refutation of $\Assoc_n$ requires $\exp\bigl(n^{1/8-o(1)}\bigr)$ lines.
\end{remark}

\section{Conclusion and open problems}
\label{sec:conclusion}

In this paper we resolve an open problem posed by Beame and Liew in~\cite{beame-liew} by proving an exponential lower bound on the size of resolution refutations of the multiplier associativity formula $\Assoc_n$ (Theorem~\ref{thm:main}). Using the same method, we also obtain an exponential lower bound for bounded-depth Frege systems (Theorem~\ref{thm:frege}). These lower bounds apply to any strip-local multiplier encoding, which includes array and Wallace-tree multipliers. The resolution lower bound closes off a possibility raised in the introduction: for associativity, at least, there are no polynomial size resolution proofs that CDCL solvers merely fail to find.

Beyond the lower bounds themselves, our reduction uncovers a connection between two seemingly unrelated principles: the associativity of grade-school multiplication and perfect matching in bipartite graphs.

An immediate open problem is to improve the resolution lower bound quantitatively. The
exponent $(n/\log n)^{1/4}$ in Theorem~\ref{thm:main}
reflects the cost of our embedding (Lemma~\ref{lem:indices}), which requires $O(m^4 \log m)$ input
bits to embed an $m$-vertex matching instance; a denser embedding could sharpen
the exponent. Moreover, the constants in our bound do not give concrete hardness at practical bit-widths such as $n = 64$. In this respect the pigeonhole principle is similar, since the proven bounds~\cite{haken,beame-pitassi} are vacuous at the sizes where solvers actually fail. A lower bound of this kind nevertheless addresses the empirical hardness at the right level: it shows that exponential scaling is unavoidable for resolution, and hence for CDCL solvers, no matter how good their heuristics. We leave
open whether $\Assoc_n$ requires resolution refutations of size
$2^{\Omega(n)}$, and whether the hardness can be made concrete at practical bit-widths.

It also remains open to determine the proof complexity of multiplier associativity in polynomial calculus (with or without resolution) and in cutting planes. Our reduction yields no lower bound in either case, since the perfect-matching principle has small proofs in both. Nevertheless, we conjecture that (bit-level) multiplier associativity remains hard for these systems as well.

Another open line of work is to prove lower bounds for the associativity of other multiplier architectures. Our lower bounds do not apply to Booth multipliers, which are based on partial product summation but are not strip-local, nor to Karatsuba, Toom--Cook, and Sch\"onhage--Strassen multipliers, which are not based on partial product summation.

Finally, we conjecture that the $O(n^6\log n)$-size resolution proofs of commutativity and distributivity found by Beame and Liew~\cite{beame-liew} are asymptotically optimal. A matching lower bound would show that SAT solvers inherently struggle with these identities, since proofs of this size are already prohibitive at practical bit-widths.

\section*{Acknowledgements}

We thank Paul Beame for suggesting a simplified proof of Theorem~\ref{thm:star-local-hard} and for other helpful comments and pointers to the literature, and thank Andrei Constantinescu for suggestions that improved the exposition of Section~\ref{sec:hard-source}.

\appendix
\section{Wallace-tree multipliers are strip-local}
\label{app:architectures}

We first define the carry-lookahead adder.

\begin{definition}[Carry-lookahead adder]
\label{def:cla}
A \emph{carry-lookahead adder} (CLA) uses a tree structure to add two bit-vectors $\vv{x},\vv{y}$ with only logarithmic depth. The
$4$-bit CLA computes, for each pair $x_i,y_i$, the \emph{generate} and \emph{propagate} signals
\[
q_i=x_iy_i,
\qquad
p_i=x_i\oplus y_i.
\]
Then, writing $c_i$ for the carry bit in the $i$-th column, we have
\[
c_{i+1}=q_i\oplus(p_ic_i),
\]
with carry-in $c_0=0$. Unrolling this recurrence gives the following equations, which compute each carry bit in parallel
from the values $q_i,p_i$ and $c_0$:
\begin{align*}
c_1&=q_0\oplus p_0c_0\\
c_2&=q_1\oplus q_0p_1\oplus c_0p_0p_1\\
c_3&=q_2\oplus q_1p_2\oplus q_0p_1p_2\oplus c_0p_0p_1p_2\\
c_4&=q_3\oplus q_2p_3\oplus q_1p_2p_3\oplus q_0p_1p_2p_3\oplus c_0p_0p_1p_2p_3.
\end{align*}
These values are used to compute the outputs $(x+y)_i=c_i\oplus x_i\oplus y_i$. The $4$-bit CLA additionally computes the
\emph{group propagate} and \emph{group generate}
\[
p^1=p_0p_1p_2p_3,
\qquad
q^1=q_3\oplus q_2p_3\oplus q_1p_2p_3\oplus q_0p_1p_2p_3,
\]
where the superscript indicates the level.

We construct a $16$-bit CLA with two levels, whose first half is shown in Figure~\ref{fig:cla}. At the zeroth level we arrange
four $4$-bit CLAs, the $k$-th CLA taking inputs $x_i,y_i$, $i\in[4k,4k+4)$. We denote the $k$-th CLA's group propagate and
generate by $p^1_{4k},q^1_{4k}$. Then the carries $c_4,c_8,c_{12},c_{16}$ can be computed by the equations
\begin{align*}
c_4&=q^1_0\oplus p^1_0c_0\\
c_8&=q^1_4\oplus q^1_0p^1_4\oplus c_0p^1_0p^1_4\\
c_{12}&=q^1_8\oplus q^1_4p^1_8\oplus q^1_0p^1_4p^1_8\oplus c_0p^1_0p^1_4p^1_8\\
c_{16}&=q^1_{12}\oplus q^1_8p^1_{12}\oplus q^1_4p^1_8p^1_{12}\oplus q^1_0p^1_4p^1_8p^1_{12}\oplus c_0p^1_0p^1_4p^1_8p^1_{12}.
\end{align*}
Notice that these equations are isomorphic to the previous equations for computing carries within each $4$-bit CLA. We can reuse
the same circuitry from the $4$-bit CLA to compute these carries, as well as the group propagate and generate for the next level.
We can repeat this process to construct larger CLAs, with each iteration able to handle four times the bit-width.
\end{definition}

\begin{figure}[ht]
\centering
\includegraphics[width=\textwidth]{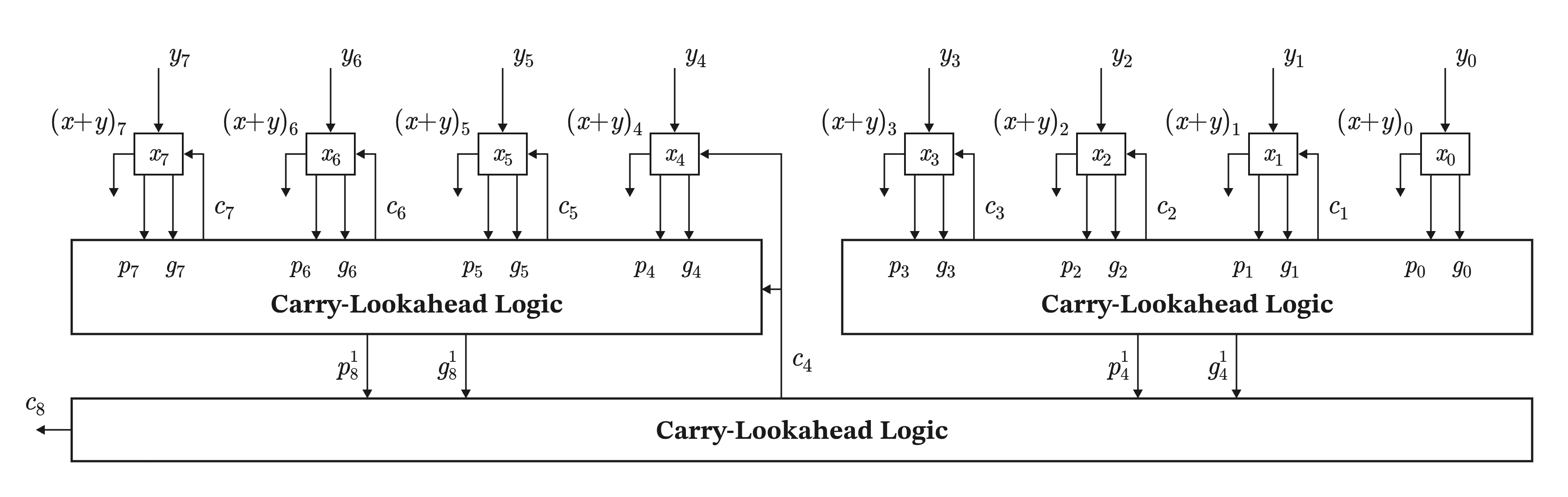}
\caption{An $8$-bit carry-lookahead adder built from two $4$-bit CLAs. Each bit cell combines $x_i$ and $y_i$ into its signal
pair $(p_i,q_i)$ and sum bit. Each CLA computes the carries into its four cells and feeds its group pair, written $(p^1,q^1)$, to the
second-level circuitry, which computes the carries $c_4$ and $c_8$.}
\label{fig:cla}
\end{figure}

We next define the Wallace-tree multiplier. Figure~\ref{fig:wallace-tree} shows a $9$-bit instance.

\begin{definition}[Wallace-tree multiplier]
\label{def:wallace}
A \emph{Wallace-tree multiplier} takes input $(\vv{x},\vv{y})$ and computes each partial product of Definition~\ref{def:partial-products}
by an AND gate. It then goes through $\ell_{\max}<\log_{3/2}n$ layers of parallel adders to reduce the $n$ starting rows
of partial products to two rows of equivalent arithmetic weight.

We define layer variables $t_{\ell,i,j}$, where $\ell$ is the layer, $i$ is the column containing the variable, and $j$ is the row
(note that this differs from the convention for a partial product $t_{i,j}$, whose subscripts are input positions and whose column is
$i+j$). A variable in column $i$ has arithmetic weight $2^i$. We denote the \emph{subcolumn} of column $i$ within layer $\ell$ by
\[
\Col(\ell,i)=\{t_{\ell,i,j}\ \text{for all}\ j\}.
\]
In the zeroth layer, the variables are the partial products, arranged so that $\Col(0,i)$ holds the partial products of column $i$:
\[
t_{0,i,j}=x_{i-j}\wedge y_j\quad\text{for }i<n,
\qquad
t_{0,i,j}=x_{n-1-j}\wedge y_{i-n+j+1}\quad\text{for }i\ge n.
\]

We now specify how to construct layer $\ell+1$ from layer $\ell$. We partition the rows of layer $\ell$ into sets of three, from top to
bottom. Adder $A_{\ell,i,j}$ takes as input the variables in the $i$-th column
of the $j$-th set of three rows, replacing nonexistent variables with the constant $0$. On inputs $a_0,a_1,a_2$ it outputs the sum bit
$s=a_0\oplus a_1\oplus a_2$ and the carry bit $c=\MAJ(a_0,a_1,a_2)$. For each set of rows $j=0,1,\dots$ and each column $i$, we append
adder $A_{\ell,i,j}$'s sum bit to subcolumn $\Col(\ell+1,i)$ and its carry bit to subcolumn $\Col(\ell+1,i+1)$. An adder with no constant
inputs is the full adder of Section~\ref{sec:prelims}; an adder with exactly one constant input is a \emph{half adder}, computing
$s=a_0\oplus a_1$ and $c=a_0\wedge a_1$; an adder with two constant inputs acts as a wire.

Each layer reduces the number of rows from $N$ to $\lceil 2N/3\rceil$, leaving the last layer $\ell_{\max}$ with only two rows. Every adder has the same total weight of inputs and outputs, so each layer has the same weighted sum
$\sum_{i,j}t_{\ell,i,j}\,2^{i}$ as the partial products. Using a
$2n$-bit carry-lookahead adder (Definition~\ref{def:cla}), we sum these two rows, outputting the product in the multiplier's output bits.
\end{definition}

\begin{observation}
\label{obs:wallace-exact}
The Wallace-tree multiplier is an exact multiplier encoding (Definition~\ref{def:exact-encoding}).
\end{observation}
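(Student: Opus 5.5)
We need to verify two things about Observation~\ref{obs:wallace-exact}: that every wire of the Wallace-tree encoding computes a Boolean function of the inputs, and that the designated $2n$ output bits compute the product $\vv{x}\vv{y}$. Since the encoding is a CNF of gate definitions, we plan to argue structurally.

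\emph{Functionality of wires.} We will order the wires topologically: the partial products $t_{0,i,j}$, then the layers $\ell=1,\dots,\ell_{\max}$, then the signals of the carry-lookahead adder (the bit-level $p_i,q_i$, the group signals $p^{r},q^{r}$, the carries, and the sums). Each wire is defined by a constant-size clause set $\operatorname{Def}$ expressing $w=\phi(a_1,\dots,a_k)$, where the $a_i$ are input bits or earlier wires and $\phi$ is AND, XOR, MAJ, or one of the CLA formulas. Induction along this order then shows two things. First, any satisfying assignment must set each wire to the value of its function of the inputs. Second, conversely, the assignment that gives every wire its value is satisfying. Together these give exactly the required characterization of the satisfying assignments.

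\emph{Correct output.} We will track the weighted sum $S_\ell=\sum_{i,j}t_{\ell,i,j}2^i$. For $\ell=0$, $S_0=\sum_{i,j}x_iy_j2^{i+j}=\vv{x}\vv{y}$ by Definition~\ref{def:partial-products}; here we must check that the indexing $t_{0,i,j}$ of Definition~\ref{def:wallace} enumerates each partial product exactly once. Each full adder, half adder, or wire preserves total weight, since $a_0+a_1+a_2=s+2c$. It follows that $S_{\ell+1}=S_\ell$, and so the final two rows $R_1,R_2$ satisfy $R_1+R_2=\vv{x}\vv{y}$. It then remains to show that the CLA computes $R_1+R_2$. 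Unrolling $c_{i+1}=q_i\oplus p_ic_i$ gives the listed formulas; in these, XOR agrees with OR because the generate and propagate terms are mutually exclusive. This recurrence is the ripple-carry recurrence, so $(x+y)_i=c_i\oplus x_i\oplus y_i$ are the true sum bits. For the hierarchical levels, we will show by induction on the level that the group pair $(p^r,q^r)$ correctly describes the block: the block propagates a carry iff $p^r=1$, and generates one iff $q^r=1$. This makes the higher-level carry equations instances of the same recurrence.

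Finally, we will note that the product is less than $2^{2n}$, so the top carry out of the $2n$-bit CLA is $0$, and the $2n$ output bits equal $\vv{x}\vv{y}$.

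The main obstacle is the hierarchical CLA correctness; the rest is bookkeeping. We will handle it by the block-composition induction just described.
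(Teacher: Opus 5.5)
Your proposal is correct and follows the same route as the paper, which gives no separate proof but relies on the remark in Definition~\ref{def:wallace}. There, every adder preserves total weight, so each layer has the weighted sum of the partial products, and the $2n$-bit carry-lookahead adder then sums the final two rows to give the product. Your topological induction for the functionality of the wires, your block-composition induction for the hierarchical carry-lookahead adder, and your use of $\vv{x}\vv{y}<2^{2n}$ to discard the top carry fill in details the paper leaves implicit.
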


\begin{figure}[p]
\centering
\includegraphics[width=0.85\textwidth]{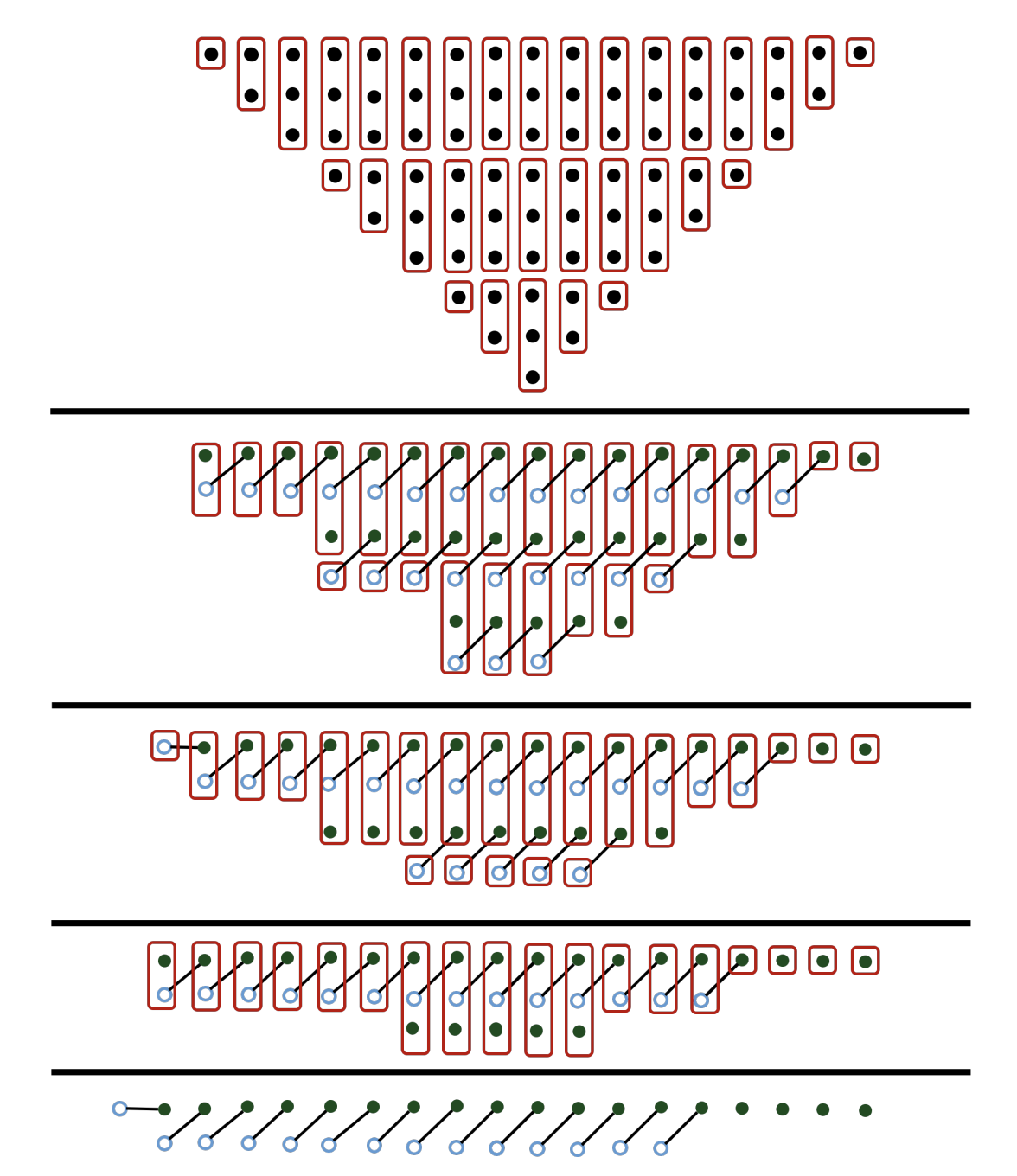}
\caption{Dot diagram for a $9\times9$ Wallace-tree multiplier. Hollow dots represent carry bits and solid dots represent sum bits. Dots
connected by an edge are output by the same adder.}
\label{fig:wallace-tree}
\end{figure}

\begin{lemma}[Wallace-tree strip-locality]
\label{lem:wallace-tree-local}
Fix $g\ge1$ and a $g$-sparse restriction of the input bit-vectors. Then every layer variable of the Wallace tree is constant or a
Boolean function of the partial products of the strip containing it, and every layer variable in the top column of a strip is the
constant $0$.
\end{lemma}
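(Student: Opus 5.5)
Proof plan: induction on the layer $\ell$, with a strengthened hypothesis. For a strip with base $bg$, let $N_b$ be the number of partial products not forced to $0$ in column $bg$; $g$-sparsity gives $N_b<2^{g-1}$. The hypothesis $P(\ell)$ is:
\begin{enumerate}[label=(\alph*)]
\item every variable of layer $\ell$ in column $i\in[bg,(b+1)g)$ is constant or a Boolean function of the partial products of column $bg$;
\item under every assignment to the unset inputs, the weighted sum $\sum_{i\in[bg,(b+1)g)}\sum_j t_{\ell,i,j}2^{i-bg}$ of the layer-$\ell$ variables lying in the strip equals the number of true partial products of column $bg$, which is at most $N_b<2^{g-1}$;
\item every layer-$\ell$ variable in the top column $(b+1)g-1$ of the strip is the constant $0$.
\end{enumerate}
The base case $\ell=0$ is immediate from $g$-sparsity. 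All partial products not forced to $0$ lie in $g$-columns, which are base columns of strips. So (a) and (b) hold, and all non-base columns hold only constant $0$ variables. For $g=1$ the top column coincides with the base column, and $2^{g-1}=1$ forces $N_b=0$, so (c) holds there too.

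For the inductive step, I take an adder $A_{\ell,i,j}$ in column $i$ of the strip. Its inputs lie in the same column, so by (a) its sum bit is a function of the strip's partial products. The carry goes to column $i+1$. If $i+1$ is still inside the strip, (a) is preserved. If $i$ is the top column $(b+1)g-1$, then by (c) all inputs are $0$, so the carry is the constant $0$ and no dependence leaks into strip $b+1$. Hence no variable of layer $\ell+1$ in strip $b+1$ receives anything from strip $b$. This gives (a).

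Each adder preserves total weight, and no weight crosses a strip boundary, since the only carries that cross are constant $0$. So the strip's weighted sum is preserved from layer $\ell$ to layer $\ell+1$, which gives (b).

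Statement (c) is where I expect the real content to lie. A variable in the top column has weight $2^{g-1}$ relative to the strip base. All layer variables are nonnegative, so if such a variable were $1$ under some assignment, the strip's weighted sum would be at least $2^{g-1}$. This contradicts (b), since the sum equals a count bounded by $N_b<2^{g-1}$. So every top-column variable is $0$ under every assignment, i.e.\ it is the constant $0$ as a function.

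This is exactly where the threshold $2^{g-1}$, rather than $2^g$, is needed. Carries are placed into column $i+1$ before they are combined, so a count below $2^g$ could still momentarily populate the top column. I need to make sure "constant" is meant semantically, as a function of the unset inputs, rather than syntactically; the lemma's phrasing allows this.

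The main obstacle I anticipate is the bookkeeping in (b): it needs the strip sum to be an exact arithmetic identity (a count), not merely a value modulo $2^g$. This in turn needs the inductive fact that strip $b$ receives nothing from strip $b-1$. I plan to prove (a)–(c) simultaneously over all strips at each layer, so that this fact is available when (b) is established.
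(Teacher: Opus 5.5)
Your proof is correct and takes essentially the paper's route: weight conservation within a strip keeps the strip's weighted sum below $2^{g-1}$ relative to its base, which forces the top column to be identically $0$ and hence every carry leaving the strip to be $0$. The paper inducts over strips from the bottom up, noting that the strip sum can only decrease across layers, whereas you induct over layers for all strips at once, an inessential reordering. One small correction to your side remark: with threshold $2^g$ no carry could leave a strip of the tree anyway, since a carry out of the top column needs two $1$s there, i.e.\ weight $2^g$. What fails is different: the top column could then hold a genuine bit of the count, so the top-column-is-$0$ conclusion, which the carry-lookahead adder lemma needs, is what actually requires $2^{g-1}$.
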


\begin{proof}
In the Wallace tree, the only wires crossing a column boundary are adder carries, so like in the proof of
Proposition~\ref{prop:array-strip-local} it suffices to show that every carry entering a strip is the constant $0$. This is
trivial for the strip at base $0$.

Assume the inductive hypothesis that every
carry entering the strip with base $bg$ is zero. Then every layer variable $t_{\ell,i,j}$ in the
strip is a function of the strip's partial products not forced to $0$. By $g$-sparsity, the strip contains fewer than $2^{g-1}$ such partial products, and they are all contained in the base
column $bg$, so their total weight is less than $2^{g-1}\cdot2^{bg}=2^{(b+1)g-1}$. Under any assignment to these partial products, the strip's weighted sum can only decrease from one layer to the next, from carries going out of the strip's top column. The strip's weighted sum therefore stays below $2^{(b+1)g-1}$ in every layer, and hence every variable in the strip's top column $(b+1)g-1$ is $0$. In particular
every adder in the top column has all inputs $0$ and so every carry entering the next strip is $0$.
\end{proof}

\begin{lemma}[Carry-lookahead adder strip-locality]
\label{lem:cla-local}
Fix $g\ge1$ and let a carry-lookahead adder (Definition~\ref{def:cla}) add two bit-vectors $\vv{x},\vv{y}$ whose bits in the top
column of every $g$-strip are the constant $0$. Then every wire of the adder is constant or a Boolean function of the summand bits
in the columns of a single strip.
\end{lemma}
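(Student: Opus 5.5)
The key point is that every carry entering a strip is the constant $0$. Once we know this, the CLA restricted to a single strip behaves like an adder working on that strip's bits alone. I would argue in two stages: first the carries $c_{bg}$ at strip boundaries, then the internal wires of the tree.

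\emph{Step 1: boundary carries vanish.} Fix a strip with base $bg$. In its top column $(b+1)g-1$ we have $x_{(b+1)g-1}=y_{(b+1)g-1}=0$. Hence the generate and propagate signals there are $q_{(b+1)g-1}=0$ and $p_{(b+1)g-1}=0$. The recurrence $c_{i+1}=q_i\oplus(p_ic_i)$ then gives $c_{(b+1)g}=0$ identically, whatever the lower bits are. So every carry into a strip boundary is the constant $0$, and every true carry $c_i$ inside a strip depends only on $q_{i'},p_{i'}$ for $i'$ in $[bg,i)$. In particular, each sum output $(x+y)_i=c_i\oplus x_i\oplus y_i$ is a function of the strip containing $i$.

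\emph{Step 2: internal wires of the tree.} The difficulty is that CLA wires are not literally the ripple carries. They are group signals $p^\ell,q^\ell$ over aligned blocks of size $4^\ell$, together with carries computed at block granularity. These blocks are powers of $4$ and need not align with $g$-strips. I would handle this with one structural claim, proved by induction on the level $\ell$.

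For a block $[a,a+4^\ell)$:
\begin{itemize}
\item $p^\ell_a$ is the AND of the $p_i$ in the block.
\item $q^\ell_a$ equals the carry out of the block assuming carry-in $0$.
\item Each carry wire computed inside the tree equals the true carry $c_i$.
\end{itemize}
This is standard CLA correctness and follows from the unrolled equations.

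Now suppose the block meets two or more strips, so it contains the top column of some strip. Then $p^\ell_a=0$ is constant, since some $p_i=0$. For the same reason, $q^\ell_a$ equals the block-local carry out of the block's last strip: the zero top-column bit kills all dependence on columns below it. So $q^\ell_a$ is a function of a single strip. If the block lies inside one strip, both signals are trivially functions of that strip.

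Each intermediate product term such as $q_{i'}p_{i'+1}\cdots p_{i-1}$, or its group-level analogue, is handled the same way. If its index range crosses a strip's top column, it contains a zero propagate factor (or a zero generate), so it is the constant $0$. Otherwise it depends only on one strip. The carries $c_i$ are covered by Step 1, and each $p_i,q_i$ is trivially local.

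\emph{Main obstacle.} The hard part is the bookkeeping in Step 2. Every gate of the CLA, including XOR and AND gates of intermediate products at all levels, must be shown to be either constant or supported in one strip. The plan is to enumerate the gate types, which are the product terms and their partial XOR sums, and apply the ``crossing a top column forces a zero factor'' argument uniformly. Partial XOR sums of such terms are handled by noting that all their nonconstant summands lie in the same strip. That strip is the one containing the top index of the term's range: any term reaching below the strip's base must span the previous strip's top column, and is therefore $0$.
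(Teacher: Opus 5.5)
Your treatment of the bitwise signals, group pairs, carries and sum bits is essentially the paper's proof. The zero top column gives $p_s=q_s=0$, so no carry crosses a strip boundary. A block containing a top column then has $p^\ell_a=0$ and a group generate that depends only on the columns above the highest top column it contains; the paper packages this as $c_{i+1}=Q_I\oplus(P_I\wedge c_a)$ over an interval $I$.

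The flaw is in your extra bookkeeping for intermediate gates. You assert that a product term whose index range crosses a top column ``contains a zero propagate factor (or a zero generate), so it is the constant $0$.'' You then apply this uniformly to group-level terms and conclude that any summand reaching below a strip's base is $0$. This holds for bit-level terms $q_t\wedge\bigwedge_{u=t+1}^{i}p_u$. It fails at the group level, because the top column can lie inside the generating block, whose group generate need not be zero. For instance, take $g=7$, so column $6$ is a top column and $[7,14)$ is a strip. Then $q^1_4=q_7\oplus q_6p_7\oplus q_5p_6p_7\oplus q_4p_5p_6p_7=q_7$. So the term $q^1_4p^1_8$ in the equation for $c_{12}$ equals $q_7p_8p_9p_{10}p_{11}$. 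Its range $[4,12)$ crosses column $6$ and reaches below the base $7$, yet it is not constant.

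The conclusion survives, but the justification must be the one you already used for $q^\ell_a$, not ``crossing gives $0$.'' Distribute the AND over the XOR. A group-level term $q^\ell_{a_1}\wedge p^\ell_{a_2}\wedge\cdots\wedge p^\ell_{a_r}$ is then the sub-XOR of $Q_I$, for $I=[a_1,a_r+4^\ell)$, over the indices $t$ in the first block. Let $s$ be the highest top column of $I$. Every bit-level summand with $t\le s$ contains the factor $q_s$ or $p_s$ and vanishes. So the term depends only on the columns of $I$ above $s$ (all of $I$ if there is no such $s$), and these lie in the strip containing the top index of $I$.

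Terms with a carry-in factor, $c_a\wedge P_I$, need one more sentence. They vanish if $I$ contains a top column. Otherwise $c_a$ is $0$ or depends only on the strip containing $I$.

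With these replacements your partial-XOR argument goes through. The paper's proof only checks the four wire types listed above, so this layer goes beyond what it argues.
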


\begin{proof}
Let $p_i,q_i$ and $c_i$ denote the bitwise signals and carries of the carry-lookahead adder (Definition~\ref{def:cla}). At the top column $s$ of every strip we have
$x_s=y_s=0$, hence $p_s=q_s=0$ and $c_{s+1}=0$. It remains to show that the wires of the higher levels, which compute the
carries through group pairs rather than through the carry recurrence, also depend only on the columns of a single strip. Unwinding
the carry recurrence over a contiguous interval $I=[a,i]$ shows that its carry-out is $c_{i+1}=Q_I\oplus(P_I\wedge c_a)$, where
\[
P_I=\bigwedge_{t=a}^i p_t,
\qquad
Q_I=\bigoplus_{t=a}^i \Bigl(q_t\wedge\bigwedge_{u=t+1}^i p_u\Bigr).
\]
Suppose $I$ contains a top column, and let $s$ be the highest top column it contains. Then $p_s=q_s=0$, so $P_I=0$, and every term
of $Q_I$ with $t\le s$ vanishes, since it contains the factor $q_s$ if $t=s$ and the factor $p_s$ if $t<s$. Consequently
$(P_I,Q_I)$ depends only on the signal pairs $(p_t,q_t)$ with $t>s$, which lie in the strip containing column $i$.

We now prove the lemma for each type of wire in the carry-lookahead adder: the bitwise signals, group pairs, carries, and sum
bits. A bitwise signal $p_i$ or $q_i$ depends only on the summand bits of column $i$. A group pair $(p^\ell_a,q^\ell_a)$ equals
$(P_I,Q_I)$ for the interval $I=[a,a+4^\ell)$ spanned by its level-$\ell$ sub-adder, and is therefore constant or depends only on
the signal pairs of a single strip. A carry wire computes some carry $c_i$ of the recurrence, and taking $I=[0,i-1]$ with $c_0=0$
gives $c_i=Q_{[0,i-1]}$. Since $c_i=0$ whenever $i-1$ is a top column, we may assume that $i-1$ and $i$ lie in the same strip. Then, from the previous paragraph, $c_i$ is constant or depends only on the signal pairs of this strip. This implies that a sum bit
$(x+y)_i=p_i\oplus c_i$ also depends only on the summand bits of the strip containing $i$.
\end{proof}

\wallacestriplocal*

\begin{proof}
Every gate of the multiplier has constant fan-in, so the clauses of the encoding have constant width. To verify locality, fix
$g\ge1$ and a $g$-sparse restriction. By Lemma~\ref{lem:wallace-tree-local}, the bits of the two final rows in every strip's top
column are the constant $0$, so Lemma~\ref{lem:cla-local} applies to the final carry-lookahead adder. Together, the two lemmas show
that every wire of the multiplier is constant or a Boolean function of the partial products belonging to a single strip.
\end{proof}

\section{Lean formalization}
\label{app:lean}

We provide a Lean~4 formalization of the main theorems, generated by Claude Fable 5.1, at \url{https://github.com/dysfunctor/associativity}. We are not aware of an established library for proof complexity or arithmetic circuits, so this formalization includes an implementation of these definitions. While we have inspected these definitions ourselves, we caution that they are a potential source of error.

\section*{AI Methodology}

The main ideas and the core proof of the associativity lower bound are due to GPT-5.6-Sol. This materially affected Sections~\ref{sec:hard-source}, \ref{sec:rho}, \ref{sec:reduction}, and~\ref{sec:frege}, and Appendix~\ref{app:architectures}. During the verification process, the author reformulated and rewrote the definitions, lemma statements, and proofs with the assistance of Claude Fable 5, and added attributions along the way. The author wrote initial drafts of the expository text, including the introduction,
proof outline, and conclusion, and used Claude Fable 5 afterwards to improve presentation.

The author produced the diagrams with assistance from Claude Fable 5. The author drew each figure by hand, sent a photo to Claude, and then refined the generated
diagram with further prompting.

The Lean formalization (Appendix~\ref{app:lean}) was generated by Claude Fable 5.1.

The author verified the correctness and originality of all content, including references, and takes full responsibility for the paper's content.

\bibliographystyle{amsalpha}
\bibliography{references}

\end{document}